\documentclass[journal,draftclsnofoot,onecolumn]{IEEEtran}

\usepackage{amsmath}
\allowdisplaybreaks[1]
\usepackage{setspace}

\usepackage{amssymb}
\usepackage{mathrsfs}
\usepackage{amsthm}
\usepackage{multirow}
\usepackage{color}
\usepackage{longtable}
\usepackage{array}
\usepackage{url}
\usepackage{comment}
\usepackage{enumerate}
\usepackage{eucal}
\usepackage{graphics}
\usepackage{verbatim}
\usepackage{subfigure}

\usepackage{algorithm}
\usepackage{algpseudocode}
\usepackage{cite}
\usepackage[table]{xcolor}

\usepackage{mathtools}
\usepackage{xparse}
\usepackage{bm}

\DeclarePairedDelimiterX{\set}[1]{\{}{\}}{\setargs{#1}}
\NewDocumentCommand{\setargs}{>{\SplitArgument{1}{;}}m}
{\setargsaux#1}
\NewDocumentCommand{\setargsaux}{mm}
{\IfNoValueTF{#2}{#1} {#1\,\delimsize|\,\mathopen{}#2}}

\DeclarePairedDelimiter\abs{\lvert}{\rvert}

\DeclarePairedDelimiter\parenv{\lparen}{\rparen}

\newtheorem{thm}{Theorem}

\newtheorem{lem}{Lemma}
\newtheorem{dfn}{Definition}
\newtheorem{prop}{Proposition}
\newtheorem{rmk}{Remark}
\newtheorem{cor}{Corollary}
\newtheorem{example}{Example}
\newenvironment{pf}{{\noindent\it Proof:}}{\hfill $\blacksquare$\par}

\newcommand{\RNum}[1]{\lowercase\expandafter{\romannumeral #1\relax}}
\newcommand{\Rnum}[1]{\uppercase\expandafter{\romannumeral #1\relax}}

\newcommand{\F}{\mathbb{F}}

\newcommand{\eqdef}{\triangleq}
\newcommand{\cA}{\mathcal{A}}
\newcommand{\cB}{\mathcal{B}}
\newcommand{\cC}{\mathcal{C}}

\newcommand{\cE}{\mathcal{E}}

\newcommand{\cG}{\mathcal{G}}
\newcommand{\cN}{\mathcal{N}}
\newcommand{\cO}{\mathcal{O}}
\newcommand{\cP}{\mathcal{P}}

\newcommand{\cV}{\mathcal{V}}
\newcommand{\cW}{\mathcal{W}}

\newcommand{\heta}{ \widehat{\eta}}
\newcommand{\hbc}{ \widehat{{\mathbf{C}}}}
\newcommand{\htheta}{\widehat{\theta}}
\newcommand{\hvarphi}{\widehat{\varphi}}

\newcommand{\ba}{\mathbf{a}}
\newcommand{\bb}{\mathbf{b}}
\newcommand{\bc}{\mathbf{c}}

\newcommand{\bff}{\mathbf{f}}
\newcommand{\bg}{\mathbf{g}}
\newcommand{\bk}{\mathbf{k}}

\newcommand{\bu}{\mathbf{u}}
\newcommand{\bw}{\mathbf{w}}
\newcommand{\bv}{\mathbf{v}}
\newcommand{\bx}{\mathbf{x}}
\newcommand{\by}{\mathbf{y}}

\newcommand{\bK}{\mathbf{K}}
\newcommand{\bU}{\mathbf{U}}

\newcommand{\bY}{\mathbf{Y}}
\newcommand{\bF}{\mathbf{F}}
\newcommand{\bG}{\mathbf{G}}
\newcommand{\bP}{\mathbf{P}}
\newcommand{\bI}{\mathbf{I}}

\newcommand{\Zero}{\mathbf{0}}

\newcommand{\In}{{\rm In}}
\newcommand{\Out}{{\rm Out}}
\newcommand{\tail}{{\rm tail}}
\newcommand{\head}{{\rm head}}

\newcommand{\Rank}{{\rm Rank}}

\newcommand{\Span}{{\rm Span}}

\begin{document}

\title{Network function computation with vector linear target function and security function}

\author{
  Min Xu, Qian Chen and  Gennian Ge%
  \thanks{This research was supported by the National Key Research and Development Program of China under Grant 2025YFC3409900, the National Natural Science Foundation of China under Grant 12231014, and Beijing Scholars Program.}
  \thanks{M. Xu (e-mail: minxu0716@qq.com) is with the Institute of Mathematics and Interdisciplinary Sciences, Xidian University, Xi'an 710126, China.}%
  \thanks{Q. Chen (e-mail: 17838590226@163.com) and G. Ge (e-mail: gnge@zju.edu.cn) are with the School of Mathematical Sciences, Capital Normal University, Beijing 100048, China.}
}

\maketitle
\begin{abstract}
   In this paper, we study the problem of securely computing a function over a network, where both the target function and the security function are vector linear. The network is modeled as a directed acyclic graph. A sink node wishes to compute a function of messages generated by multiple distributed sources, while an eavesdropper can access exactly one wiretap set from a given collection. The eavesdropper must be prevented from obtaining any information about a specified security function of the source messages. The secure computing capacity is the maximum average number of times that the target function can be securely computed with zero error at the sink node with the given collection of wiretap sets and security function for one use of the network. We establish two upper bounds on this capacity, which hold for arbitrary network topologies and for any vector linear target and security functions. These bounds generalize existing results and also lead to a new upper bound when the target function is the sum over a finite field. For the lower bound, when the target function is the sum, we extend an existing method, which transforms a non-secure network code into a secure one, to the case where the security function is vector linear. Furthermore, for a particular class of networks and a vector linear target function, we characterize the required properties of the global encoding matrix to construct a secure vector linear network code.
\end{abstract}
\begin{IEEEkeywords}
   Secure network function computation; vector linear target function; vector linear secure function; capacity
\end{IEEEkeywords}
\section{Introduction}
Network function computation is a fundamental primitive in distributed systems such as sensor networks and the Internet of Things \cite{2004koetternetwork,2012rainetwork,2013ramamoorthycommunicating,2013AFKZ,2014Appuswamy,2018HTYG,2019GYYL}.
The standard framework for network function computation is as follows. In a network $\cN$, which is modeled by a directed acyclic graph, there is a sink node and multiple source nodes. The sink aims to compute a function $f$ of the messages generated by the source nodes. We call $f$ as the \emph{target function}. Intermediate nodes may encode and forward their received messages along communication links to downstream nodes. We assume each link has unit capacity, which is a common and reasonable abstraction, especially when multiple parallel edges are allowed between nodes.
The computing capacity of a network code is the maximum average number of times that the target function can be reliably computed at the sink for one use of the network.
A central challenge is to characterize the computing capacity for a given network topology and a target function \cite{2011AFKZ,2018HTYG,2019GYYL,2014Appuswamy,2024GuangZhang}. In \cite{2011AFKZ}, Appuswamy {\it et al.} formalized this problem and provided a general upper bound on the computing capacity, which is not always valid. This bound was later refined and generalized by Guang {\it et al.} \cite{2018HTYG,2019GYYL}, and the refined upper bound is valid for  arbitrary networks and target functions. Although there is an example where this bound is not tight, it remains the best known general upper bound. 
Moreover, the upper bound is achievable for specific cases (e.g., when $f$ is the sum or identity function, or for certain network topologies \cite{2011AFKZ}).
In recent years, the case where the target function $f$ is a vector linear function has attracted increasing attention. Appuswamy et al.~\cite{2011AFKZ,2013AFKZ} studied computing vector linear target function over networks using linear codes, and established a necessary condition for the existence of rate $1$ linear codes. For general network topologies, Zhou and Fu~\cite{2024ZhouFu} derived two upper bounds on the computing capacity, which can be obtained from the upper bound in \cite{2019GYYL}. In more specialized settings, Li and Xu~\cite{2022LiXudiamond} showed that the upper bound from~\cite{2019GYYL} is tight for the diamond network, while Guang et al.~\cite{guang2025distributed} investigated distributed source coding for a three‑layer network with only three sources and three middle nodes. In summary, although two general upper bounds are available for arbitrary networks, the exact capacity has been determined only for very specific and simple topologies.

In practice, communications between nodes are susceptible to various challenges, including eavesdropping, transmission errors, and node collusion. These issues create distinct security and robustness requirements for network function computation \cite{2023WeiXuGe,2024Guangsourcesecure,2025Guangfunctionsecure,wei2024linear,xu2022network}.
In this paper, we focus on the problem of secure network function computation, which has been investigated in \cite{2024Guangsourcesecure,2025Guangfunctionsecure,xu2022network}. In the problem, there is an eavesdropper who has access to some links in the network and aims to learn a function $g$ of the source messages. We call $g$ as the \emph{security function}.  The goal of secure network function computation is to ensure that the sink node can correctly compute the target function while preventing the wiretapper from obtaining any information about the security function. In prior work \cite{2024Guangsourcesecure,2025Guangfunctionsecure}, Guang {\it et al.} investigated scenarios where $f$ is the sum function over finite field and $g$ is the identity function (referred to as \emph{source security}) or the same sum function (referred to as \emph{target-function security}). For each case, they derived an upper bound on the secure computing capacity. They also proposed a method to construct a linear secure network code by transforming a non-secure one. For certain network topologies, the constructed network code achieves the upper bound.  

When both the target function and the security function are the identity function, the secure network function computation problem reduces to the classical secure network coding problem \cite{cai2002secure,cai2010secure,el2012secure,guang2018alphabet,silva2011universal,guang2020local,bhattad2005weakly}. The fundamental model was first introduced by Cai and Yeung \cite{cai2002secure,cai2010secure}, where the goal is to prevent a wiretapper from gaining any information about the source messages. Several works have explored relaxed security constraints. For instance, Bhattad {\it et al.}  \cite{bhattad2005weakly} studied the weakly secure network coding, where the wiretapper is only prevented from obtaining complete information about the sources, a relaxation that can improve the achievable communication rate.
More recently, Bai {\it et al.} \cite{bai2023multiple} investigated the multiple linear combination security, which requires the wiretapper cannot obtain a predetermined vector linear function about the source messages. Their work analyzed the trade-off between the network's communication capacity and the rank of the coefficient matrix defining the security function. Building upon this line of research, we will study the secure computing capacity for the general scenario where both the target function and the security function are vector linear in this paper.

A closely related line of research is secure aggregation \cite{2022ZhaoSun,2024Zhaosun,2024WanYaoSunJiCaire,2024WanSunJiCaire,2025yuanSun,2024zhangwansunwangoptimal,2025LiZhangLvFan,hu2026capacity,zhang2025information,li2025capacity,so2022lightsecagg,jahani2023swiftagg+,xu2025hierarchical}. In its typical setting, a server (the sink) computes the sum of messages from multiple users (the sources), while being prevented from learning any individual message beyond the aggregated sum.
The secure aggregation model differs from the secure network function computation studied in this paper in two main aspects. 
First, secure aggregation often relies on a trusted third party to assign correlated random keys to users. In our model, each source node generates its randomness independently. Second, the adversary in secure aggregation is usually a set of colluding participating nodes, rather than an external wiretapper eavesdropping on communication links.
Most prior work on secure aggregation focuses on computing the sum function. From a network function computation viewpoint, this corresponds to the case where the target function is the algebraic sum\footnote{We use algebraic sum to represent the sum function over finite fields.} and the security function is the identity function. Recent studies have begun to generalize this setup. 
Notably, Yuan and Sun \cite{2025yuanSun} investigated the scenario where both the target and security functions are vector linear, which is closely related to ours. This direction was further extended by Hu and Ulukus in \cite{hu2026capacity}. It is important to note that their work remains within the secure aggregation framework and thus inherits the two key distinctions mentioned above. Furthermore, our work considers a more general and complex network setting, namely an arbitrary directed acyclic graph, which introduces additional challenges in code design and capacity analysis that do not exist in the typical aggregation topology.

The main contributions of this paper are summarized as follows.
\begin{enumerate}
    \item We establish two upper bounds on the secure computing capacity, which hold for arbitrary network topologies and for any vector linear target and security functions. 
    The first bound generalizes the prior results to the vector linear setting. The second bound is more novel and is derived by selecting cut sets that satisfy specific conditions. This new method yields a strictly better bound in certain cases. In particular, when the target function is the sum and the security function is the identity,  we construct an example to demonstrate that the new bound is tighter than pervious results.
    \item For general networks and sum target function, we extend the method of transforming a non‑secure network code into a secure one to the case where the security function is vector linear. Moreover, we characterize the required field size for the existence of such a secure code, which matches and generalizes existing existence results.
    \item For three‑layer multi‑edge tree networks with vector‑linear target and security functions, we propose a secure vector‑linear network code via explicit construction of the global encoding matrices. Unlike scalar‑linear codes, the messages transmitted on each link are vectors rather than single symbols.
\end{enumerate}

\emph{Organization.} The rest of this paper is organized as follows. Section~\ref{sec:system-model} introduces the system model for secure network function computation and reviews relevant results. In Section~\ref{sec:upper-bounds}, we present two upper bounds, which are derived by information-theoretic methods. Section~\ref{sec:lowerbound-sum} provides a construction of a (scalar) linear secure network code for the sum target function via a standard procedure and analyzes the required field size. In Section~\ref{sec:lowerbound-tree}, we consider three-layer multi-edge trees and propose a vector linear secure network code by characterizing the properties of the global encoding matrix. Finally, Section~\ref{sec:conclusion} summarizes the results and directions for future work.

\emph{Notation.} Throughout this paper, the following notations are used: Matrices are denoted by bold uppercase letters. Random variables and random vectors are denoted by uppercase letters. Deterministic vectors (non-random) are denoted by boldface, lowercase letters. For integers $m\leq n$, let $[m:n] \eqdef \{m, m + 1, \cdots , n\}$ if $m \leq n$ and
$[m : n] = \emptyset $, if $m > n$. $[1 : n]$ is written as $[n]$ for brevity. 

\section{System model and related results}\label{sec:system-model}
\subsection{The model of secure network function computation}
Consider a directed acyclic graph $\cG=(\mathcal{V},\mathcal{E})$ with a finite vertex set $\mathcal{V}$ and an edge set $\mathcal{E}$. The graph may contain multiple edges connecting the same pair of vertices. For an edge $e\in \mathcal{E}$, we use $\tail(e)$ and $\head(e)$ to denote the \emph{tail} node and the \emph{head} node of $e$.
For a vertex $v\in \mathcal{V}$, let $\In(v)=\set{e\in E;\head(e)=v}$ and $\Out(v)=\set{e\in E;\tail(e)=v}$, respectively. 
A sequence of edges $(e_1, e_2, \cdots, e_n)$ forms a \emph{path} from vertex $u$ to vertex $v$ if $\tail(e_1) = u$, $\head(e_n) = v$, and $\tail(e_{i+1}) = \head(e_i)$ for all $i = 1, 2, \dots, n-1$.
Given two disjoint vertex subsets $U, V \subseteq \mathcal{V}$, a \emph{cut separating $V$ from $U$} is a set of edges $C \subseteq \mathcal{E}$ such that after removing $C$, no path remains from any vertex in $U$ to any vertex in $V$. In particular, for two nodes $u,v\in\cV$, a cut separating $\{v\}$ from $\{u\}$ is called \emph{a cut separating $v$ from $u$}. A cut $C$ separating $V$ from $U$ is called a \emph{minimum cut} if no other cut $C'$ separating $V$ from $U$ satisfies $|C'| < |C|$.

In this paper, a \emph{network} $\cN$ over $\cG$ contains a set of \emph{source nodes} $S=\set{\sigma_1,\sigma_2,\ldots,\sigma_s}\subseteq \cV$, and a \emph{sink node} $\gamma \in \cV\setminus S$. We denote this network by $\cN = (\cG, S, \gamma)$. Without loss of generality, we assume that every source node has no incoming edges. Furthermore, we assume that every vertex $u \in \mathcal{V} \setminus {\gamma}$ has a directed path to $\gamma$ in $\cG$. Consequently, due to the acyclicity of $\cG$, the sink $\gamma$ has no outgoing edges. Any cut that separates $\gamma$ from a source node $\sigma_i$ is termed a \emph{cut of the network}. The collection of all such cuts for $\cN$ is denoted by $\Lambda(\mathcal{N})$.

In the network function computing problem, the sink node $\gamma$ needs to compute a \emph{target function} $f$ of the form \[f: \cA^s\longrightarrow\mathcal{O},\]
where $\cA$ and $\cO$ are finite alphabets, and the $i$-th argument of $f$ is generated at the source node $\sigma_i$.
For each $i \in [s]$, the information at source $\sigma_i$ is modeled as a random variable $M_i$, uniformly distributed over $\cA$. All the source messages $M_1,\dots,M_s$ are mutually independent.  The source node $\sigma_i$ generates $\ell$ independent identical distributed (i.i.d.) random variables $M_{i,1},M_{i,2},\cdots,M_{i,\ell}$, each distributed as $M_i$. Denote $M_i=(M_{i,1},M_{i,2},\cdots,M_{i,\ell})$ as the \emph{source message generated by $\sigma_i$}, and let $M_S=(M_1,M_2,\cdots,M_s)$ be the \emph{source message vector generated by $S$}. The sink node $\gamma$ must compute 
\[f(M_S)\eqdef(f(M_{1,j},M_{2,j},\cdots,M_{s,j}):j=1,2,\cdots,\ell)\]
with zero error.

In addition, let $\cW$ be a family of edge subsets, each $W \in \cW$ is called a wiretap set. Let $g:\cA^s\rightarrow\mathcal{Q}$ be a non-constant \emph{security function}, where $\mathcal{Q}$ is the image set of $g$. In the secure network function computation problem, the result $f(M_S)$ is required to be correctly computed at $\gamma$ through the network $\cN$, while 
\[g(M_S)\eqdef(g(M_{1,j},M_{2,j},\cdots,M_{s,j}):j=1,2,\cdots,\ell)\]
is required to not be leaked to a wiretapper who can access any one but not more than one wiretap set in $\cW$. Both $\cW$ and $g$ are known by the source nodes and the sink node, but which wiretap set in $\cW$ is eavesdropped by the wiretapper is unknown. We use $(\cN,f,g,\cW)$ to denote the model as specified above. 

To define the secure network codes for $(\cN,f,g,\cW)$, we introduce random keys generated at the source nodes, following the definition in \cite{2024Guangsourcesecure,2025Guangfunctionsecure}. For each $i\in[s]$, source node $\sigma_i$ has access to a random variable $K_i$, called a \emph{random key}, uniformly distributed over a finite set $\mathcal{K}_i$. As noted in \cite{2024Guangsourcesecure}, such randomness is necessary when the security function $g$ is the identity function. For general $g$, if randomness is not required, we simply set $\mathcal{K}_i=\emptyset$. Let $K_S=(K_1,K_2,\cdots,K_s)$. All the random keys $K_i,i\in[S]$ and the source messages $M_i,i\in[s]$ are mutually independent. 

An $(\ell,n)$ secure network code for the model $(\cN,f,g,\cW)$ is defined as follows. First, for every $i\in[s]$, let $\mathbf{m}_i\in\cA^\ell$ and $\bk_i\in\mathcal{K}_i$ be arbitrary realizations of the source message $M_i$ and the random key $K_i$, respectively. Accordingly, let $\mathbf{m}_S=(\mathbf{m}_1,\mathbf{m}_2,\cdots,\mathbf{m}_s)$ and $\bk_S=(\bk_1,\bk_2,\cdots,\bk_s)$, which can be regarded as arbitrary realizations of $M_S$ and $K_S$. We assume that the message transmitted by each edge is over the finite alphabet $\cB$.
An $(\ell,n)$ network code $\mathbf{\widehat{C}}$ consists of
\begin{itemize}
    \item a \emph{local encoding function} $\widehat{\theta}_e$ for each edge $e\in\mathcal{E}$ such that 
    \begin{equation}\label{eq:localencodingfunction}
        \widehat{\theta}_e:
\begin{cases}
 \mathcal{A}^\ell\times\mathcal{K}_i\mapsto\mathcal{B}^n, & \mbox{if $\tail(e)= \sigma_i$ for some $i$}; \\
    \prod\limits_{d\in \In(\tail(e))}\mathcal{B}^n\mapsto\mathcal{B}^n, & \mbox{otherwise},
  \end{cases}
    \end{equation}
  \item a \emph{decoding function} $\widehat{\varphi}:\prod_{\In(\gamma)}\cB^n\rightarrow\cO^\ell$ at the sink node $\gamma$, which is used to compute the target function $f$ with zero error.
\end{itemize}

Let $\by_e\in\cB^n$ denote the message transmitted on each edge $e\in\cE$ by using the code $\mathbf{\widehat{C}}$ with the source message vector $\mathbf{m}_S$ and the key vector $\bk_S$. From (\ref{eq:localencodingfunction}), it follows that $\by_e$ depends on $(\mathbf{m}_S,\bk_S)$. We write this dependence as $\by_e = \widehat{\eta}_e(\mathbf{m}_S,\bk_S)$. The function  $\widehat{\eta}_e$ can be obtained by recursively applying the local encoding functions $\widehat{\theta}_e, e \in \cE$. More precisely, for each $e\in\cE$,
\[\widehat{\eta}_e(\mathbf{m}_S,\bk_S)=
\begin{cases}
 \widehat{\theta}_e(\mathbf{m}_i,\bk_i), & \mbox{if $\tail(e)= \sigma_i$ for some $i$}; \\
    \widehat{\theta}_e(\widehat{\eta}_{\In(u)}(\mathbf{m}_S,\bk_S)), & \mbox{otherwise},
  \end{cases}\]
where $u = \tail(e)$ and $\widehat{\eta}_E(\mathbf{m}_S,\bk_S)=(\widehat{\eta}_e(\mathbf{m}_S,\bk_S):e\in E)$ for an edge subset $E\subseteq\cE$. We call $\widehat{\eta}_e$ the
\emph{global encoding function} of the edge $e$ for the code $\hbc$.

For the model $(\cN,f,g,\cW)$, an $(\ell, n)$ secure network code $\hbc = \{\htheta_e: e \in \cE\}\cup\{\hvarphi\}$ is called \emph{admissible} if it satisfies the following two conditions:
\begin{itemize}
    \item \textbf{Decodability}: The sink node $\gamma$ computes the target function $f$ with zero error, i.e.,
    \[\hvarphi(\heta_{\In(\gamma)}(\mathbf{m}_S,\bk_S))=f(\mathbf{m}_S),\ \ \forall\ \mathbf{m}_S\in\cA^{s\ell}\textup{ and } \bk_S\in\prod_{i=1}^s\mathcal{K}_i;\]
    \item \textbf{Function security}: For every wiretap set $W \in \cW$, the messages on $W$ must leak no information about $g(M_S)$, i.e.,
    \begin{equation}\label{eq:securitycondition}
        I(Y_W;g(M_S))=0,
    \end{equation}
    where $Y_W = (Y_e : e \in W)$ with $Y_e \eqdef \heta_e(M_S, K_S)$ being the random vector transmitted on the edge $e$, and $I(\cdot)$ denotes mutual information.
\end{itemize}
The \emph{secure computing rate} of an admissible $(\ell, n)$ secure network code $\hbc$ is defined as 
\[R(\hbc)=\frac{\ell}{n},\]
which measures how many function evaluations can be computed per use of the network, while protecting $g(M_S)$ from a wiretapper who may observe any single set $W\in\cW$. 
A nonnegative real number $R$ is said to be \emph{achievable} if for every $\epsilon>0$, there exists an admissible $(\ell,n)$ secure network code $\hbc$ for the model $(\cN,f,g,\cW)$ such that
\[R(\hbc)=\frac{\ell}{n}>R-\epsilon.\]
Accordingly, the secure computing rate region for the secure model $(\cN,f,g,\cW)$ is defined as
\[\mathcal{R}(\cN,f,g,\cW)=\left\{R:R \textup{ is achievable for } (\cN,f,g,\cW)\right\},\]
and the \emph{secure computing capacity} for $(\cN,f,g,\cW)$ is defined as
\[\widehat{\cC}(\cN,f,g,\cW)\eqdef \max \mathcal{R}(\cN,f,g,\cW).\]
In this paper, we focus on the case when $\cW=\{E\in\cE:|E|\leq r\}$. For simplicity, we denote the model as $(\cN,f,g,r)$.

In prior works \cite{2024Guangsourcesecure,2025Guangfunctionsecure}, Guang, Bai and Yeung studied secure network function computation for the case where the target function is the algebraic sum.  They defined the general model of secure network function and investigated the problem under two different types of security function, which is referred to as \emph{source security} and \emph{target function security}, respectively. Specifically, source security requires that the wiretapper obtains no information about the source messages $M_S$,
i.e., for any $W \in \mathcal{W}$,  
\[I\big(M_S; Y_W\big) = 0.\]
Target function security requires that the wiretapper learns nothing about $f(M_S)$, that is,
\[	I\big(f(M_S); Y_W\big) = 0.\] 
In \cite{2024Guangsourcesecure}, Guang, Bai and Yeung derived an upper bound on the secure computing capacity $\widehat{\cC}(\mathcal{N},f,g,r)$ and constructed a coding scheme that achieves this bound for certain network topologies. To state their result, we first introduce some notation. For an edge subset $C$, define three subsets of source nodes:
\begin{align*}
    D_C&\eqdef\{\sigma\in S:\text{there exists a directed path from $\sigma$ to $\gamma$ containig at least one edge in $C$}\};\\
    I_C   &\eqdef \set{\sigma\in S :  \mbox{there is no path from $\sigma$ to $\gamma$ after deleting the edges in $C$ from $\cE$}};\\
    J_C&\eqdef D_C\setminus I_C.
\end{align*}
The bounds for the secure computing capacity under source security are then given as follows.
\begin{thm}[{\cite[Theorem~1,10]{2024Guangsourcesecure}}]\label{thm:ub-source-secure}
	Consider a model of secure network function computation $(\cN,f,g,r)$, where $f$ is an algebraic sum function over a finite field and the $g$ is the identity function. Then,
	\[C_{\min}-r\leq \widehat{\cC}(\mathcal{N},f,g,r)\leq\min_{\substack{{\rm all\ pairs}\ (W,C)\in\mathcal{W}\times\Lambda(\mathcal{N}):\\W\subseteq C\ {\rm and}\ D_W\subseteq I_C}}(|C|-|W|),\]
	where $C_{\min}=\min\{|C|:C\in\Lambda(\mathcal{N})\}$.
\end{thm}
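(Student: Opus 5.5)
We prove the two inequalities separately. For the upper bound we use an information-theoretic cut argument. For the lower bound we use the standard transformation of a non-secure linear code into a secure one by key insertion.

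\emph{Upper bound.} Fix an admissible $(\ell,n)$ code $\hbc$ and a pair $(W,C)$ with $W\subseteq C$, $C\in\Lambda(\cN)$ and $D_W\subseteq I_C$. Work over $\cA=\F_q$ with entropy measured in base $q$. The plan has four steps.

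First, since $C$ separates $\gamma$ from the sources in $I_C$, the messages $Y_C$ together with the sources outside $I_C$ (and their keys) determine $Y_{\In(\gamma)}$. By decodability they therefore determine $\sum_i M_i$. Fix the sources outside $I_C$ to their realizations, which is possible by independence. Then $\sum_{i\in I_C}M_i$, and hence $M_{\sigma}$ for any fixed $\sigma\in I_C$ once the others are given, must be recoverable from $Y_C$. This gives
\[
H(Y_C\mid M_{S\setminus\{\sigma\}},K_{S\setminus\{\sigma\}})\ge H(M_\sigma)=\ell .
\]

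Second, split $Y_C=(Y_W,Y_{C\setminus W})$ and use $H(Y_{C\setminus W})\le n|C\setminus W|$.

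Third, and this is the main obstacle, show that $Y_W$ cannot carry useful information. By source security, $I(Y_W;M_S)=0$. Since $D_W\subseteq I_C$, $Y_W$ depends only on $(M_{D_W},K_{D_W})$. The difficulty is to convert $I(Y_W;M_S)=0$ into the statement that, conditioned on the sources outside $I_C$ (which are independent of $Y_W$), $Y_W$ gives nothing about $\sum_{i\in I_C}M_i$. This lets us write
\[
\ell\le H\Bigl(\textstyle\sum_{i\in I_C}M_i\mid Y_W\Bigr)\le H(Y_{C\setminus W}\mid Y_W)\le n(|C|-|W|).
\]
To get the first inequality I would use a chain of equalities: $H(\sum_{I_C}M_i\mid Y_W,M_{S\setminus I_C})=H(\sum_{I_C}M_i)=\ell$. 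This holds because $I(Y_W;M_S)=0$ and $Y_W$ is a function of $(M_{D_W},K_{D_W})$ with $D_W\subseteq I_C$. The decodability step is then carried out conditioned on $(Y_W,M_{S\setminus I_C},K_{S\setminus I_C})$, so that the sum is determined by $Y_{C\setminus W}$ under this conditioning. Dividing by $n$ and minimizing over all admissible pairs gives the bound.

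\emph{Lower bound.} Take a linear (non-secure) network code computing the algebraic sum at rate $C_{\min}$; such a code is known to exist for the sum, for instance over a large enough field via a routing or linear scheme achieving the min-cut. Then reserve $r$ of the $C_{\min}$ dimensions for keys. Each source mixes uniformly random key symbols into its transmitted vector through a precoding matrix chosen so that, for every $W$ with $|W|\le r$, the global encoding vectors of $W$ restricted to the key coordinates span the space seen on $W$. This is the classical Cai--Yeung style condition, and it ensures $Y_W$ is independent of $M_S$. The sink must still decode the sum: the keys should cancel in the aggregate, or be decodable and subtracted, leaving $C_{\min}-r$ message dimensions. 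Existence of a suitable precoding follows from a counting or Schwartz--Zippel argument over a sufficiently large field, since there are finitely many sets $W$ and each imposes a nonvanishing-polynomial condition. The delicate step is making the key cancellation at the sink compatible with the masking on every small wiretap set; I would handle it by choosing the keys so that their contribution is recoverable at $\gamma$ from the min-cut edges and then removed.
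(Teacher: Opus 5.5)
Your upper bound is correct and follows the same route as the paper: it is Lemma~\ref{lem:globalcut} combined with the trivial-partition case of Theorem~\ref{thm:upperbound1}, i.e.\ Corollary~\ref{cor:ub1}. Security together with $D_W\subseteq I_C$ gives $H(\sum_{i\in I_C}M_i\mid Y_W,M_{S\setminus I_C},K_{S\setminus I_C})=\ell$. Under the same conditioning, the cut property makes the partial sum a function of $Y_C$, and the gap between the two is at most $H(Y_{C\setminus W})\le n(|C|-|W|)$. Two small points. First, your displayed chain $\ell\le H(\sum_{I_C}M_i\mid Y_W)\le H(Y_{C\setminus W}\mid Y_W)$ does not follow as written. $Y_C$ also carries symbols from the sources in $J_C$, so it need not determine the partial sum unless you condition on $(M_{S\setminus I_C},K_{S\setminus I_C})$. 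Keep that conditioning in every term, as you indicate afterwards. Second, your first step with a single $\sigma$ is never used.

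The lower-bound sketch has a genuine gap, exactly at the step you call delicate, and both mechanisms you offer fail in this model. The keys cannot ``cancel in the aggregate'', because each $K_i$ is generated independently at $\sigma_i$; there are no correlated keys here, unlike in secure aggregation. Nor can the individual keys be ``recovered at $\gamma$ and removed''. A sum-computing code delivers to $\gamma$ only the sum of the $s$ injected $C_{\min}$-dimensional vectors, which is far fewer than the $sr$ key symbols. Worse, if each source used its own precoder $\mathbf{B}_i$, the sink would obtain only $\sum_i(\mathbf{m}_i~\bk_i)\mathbf{B}_i$, which in general does not determine $\sum_i\mathbf{m}_i$.

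The missing idea is to use one and the same invertible $\mathbf{B}\in\F_q^{C_{\min}\times C_{\min}}$ at every source (the block-diagonal $\widehat{\mathbf{B}}$ of Section~\ref{sec:lowerbound-sum}). That is, $\sigma_i$ feeds $(\mathbf{m}_i~\bk_i)\mathbf{B}$ into the non-secure $(C_{\min},1)$ sum code. By linearity the sink computes $(\sum_i\mathbf{m}_i~\sum_i\bk_i)\mathbf{B}$, inverts $\mathbf{B}$, and discards $\sum_i\bk_i$. Decodability is then automatic, and all of the design freedom goes into security.

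Write $\mathbf{B}^{-1}=[\bb_1~\cdots~\bb_{C_{\min}}]$. The condition $\langle\widehat{\mathbf{B}}\mathbf{H}_W\rangle\cap\langle\mathbf{S}\rangle=\{\Zero\}$ then reduces to $\langle\bb_1,\dots,\bb_{C_{\min}-r}\rangle\cap\langle\mathbf{H}_W^{(\sigma_i)}\rangle=\{\Zero\}$ for all $i\in[s]$ and $W\in\cW_r$, where $\dim\langle\mathbf{H}_W^{(\sigma_i)}\rangle\le r$. A greedy choice of the $\bb_j$ succeeds for $q>s|\cW_r|$. Your Schwartz--Zippel argument also works, but it must be run over the single shared $\mathbf{B}$, not over independent per-source precoders. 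A smaller $q$ can be handled by working over $\F_{q^m}\cong\F_q^m$, which preserves the sum.
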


In \cite{2025Guangfunctionsecure}, a pair $(C,W)$ is called valid if it satisfies at least one of the following two conditions:
\begin{itemize}
    \item $I_C\setminus D_W\neq\emptyset$;
    \item $I_C\setminus D_W=\emptyset$ and $I_C=S$.
\end{itemize}
With this definition, the following upper bound was established for the case of target function security.
\begin{thm}[{\cite[Theorem~1,8]{2025Guangfunctionsecure}}]
    Consider a model of secure network function computation $(\cN,f,g,r)$, where $f$ is an algebraic sum function over a finite field and the $g=f$. Then,
	\[C_{min}-r\leq \widehat{\cC}(\mathcal{N},f,g,r)\leq\min_{{\rm all\ \text{valid}\ pairs}\ (C,W)}(|C|-|W|),\]
	where $C_{min}=\min\{|C|:C\in\Lambda(\mathcal{N})\}$.
\end{thm}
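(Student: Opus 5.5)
The statement has two halves, and I would prove them separately.

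\textbf{Lower bound} $C_{\min}-r\le\widehat{\cC}$. The plan is to start from a good non-secure code and then add secrecy by key injection.

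First, invoke the known achievability for the algebraic sum over $\F_q$. There is a linear network code computing $\sum_i M_i$ at rate $C_{\min}$, asymptotically or for suitable $(\ell,n)$. It is obtained by routing/linear network coding along $C_{\min}$ edge-disjoint paths from every source to $\gamma$, with summation performed at merging nodes.

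Second, make the code secure. Each source $\sigma_i$ sends $C_{\min}$ symbols per use: $C_{\min}-r$ of them carry information and $r$ carry uniformly random keys. Specifically, the source transmits $(M_i,K_i)\bP$ for a suitably chosen invertible precoding matrix $\bP$. The keys are split so that their sum is computable at the sink, and the sink recovers $\sum_i M_i$ after subtracting $\sum_i K_i$. This is not possible directly, since the keys are independent. Instead, the intended structure is that the sink decodes the full $C_{\min}$-dimensional sum vector $\sum_i(M_i,K_i)\bP$ and inverts $\bP$, which yields both $\sum_i M_i$ and $\sum_i K_i$.

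Third, check security. For any $W$ with $|W|\le r$, the observation $Y_W$ is a linear function of $(M_S,K_S)$. Secrecy holds if the key part of the global encoding restricted to $W$ has full rank equal to the rank of the whole observation; this is the standard Cai--Yeung argument. A generic $\bP$ over a sufficiently large field satisfies this for all $W$ simultaneously, by a Schwartz--Zippel/union bound over the finitely many $W$. \textbf{The main obstacle} is this step: edges carrying partial sums mix keys of several sources, and the rank condition must be verified there.

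\textbf{Upper bound.} Fix a pair $(W,C)$ with $W\subseteq C$ and $D_W\subseteq I_C$. By the cut property, $Y_C$ determines $\sum_i M_i$ once the messages of the sources outside $I_C$ are given. I would condition on $M_{S\setminus I_C}$ and show that $H(Y_C\mid M_{S\setminus I_C})\ge \ell\log q$.

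Security gives $I(Y_W;M_S)=0$. Since $D_W\subseteq I_C$, $Y_W$ depends only on $(M_{I_C},K_{I_C})$. The chain of estimates is then:
\[
\ell\log q\le I(Y_C;M_{I_C}\mid M_{S\setminus I_C})\le H(Y_{C\setminus W})+I(Y_W;M_{I_C}\mid M_{S\setminus I_C},Y_{C\setminus W}).
\]
The last term should be replaced via an exchange argument, with $W$ treated as independent of the messages. That yields $\ell\log q\le n(|C|-|W|)\log q$. Taking the minimum over all such pairs gives the stated bound.
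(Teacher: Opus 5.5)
Your lower bound is sound and follows the paper's route (Section~\ref{sec:lowerbound-sum}). You precode each source's length-$R$ input $(\mathbf{m}_i~\bk_i)$, $R=C_{\min}$, by one invertible matrix (your $\bP$, the paper's $\mathbf{B}$, giving the block-diagonal $\widehat{\mathbf{B}}$) in front of a $(C_{\min},1)$ linear sum code, and invert at the sink.

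The obstacle you flag is removed by this block structure. The key rows of $\mathbf{B}$ have kernel $\langle\bb_1,\dots,\bb_{R-r}\rangle$, the first $R-r$ columns of $\mathbf{B}^{-1}$. Suppose $\langle\bb_1,\dots,\bb_{R-r}\rangle\cap\langle\mathbf{H}_W^{(\sigma_i)}\rangle=\{\Zero\}$ for every $i\in[s]$. Then a vanishing key part of $\widehat{\mathbf{B}}\mathbf{H}_W\bm{\beta}$ forces every block $\mathbf{H}_W^{(\sigma_i)}\bm{\beta}$ to vanish, so key mixing on partial-sum edges is harmless. Since $\dim\langle\mathbf{H}_W^{(\sigma_i)}\rangle\le r$, the counting of Lemma~\ref{lem:field-size-lemma} gives this condition for $q>s|\cW_r|$.

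That condition gives source security, which is stronger than needed. The paper only imposes $\langle\widehat{\mathbf{B}}^{-1}\mathbf{S}\rangle\cap\langle\mathbf{H}_W\rangle=\{\Zero\}$. For $g=f$ ($r_g=1$) this is the condition for a single source, so $q>|\cW_r|$ suffices. Also, ``edge-disjoint paths with summation at merging nodes'' is not a valid general construction of the non-secure code; just cite the existence of a $(C_{\min},1)$ linear sum code \cite{2024Guangsourcesecure}.

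The upper bound has a genuine gap: you are proving the source-security bound of Theorem~\ref{thm:ub-source-secure}, not the stated one. You minimize over pairs with $D_W\subseteq I_C$ and use $I(Y_W;M_S)=0$. Here $g=f$, so the only hypothesis is $I(Y_W;\sum_i M_i)=0$. If $D_W\subseteq I_C\neq S$, then $Y_W$ is a function of $(M_{I_C},K_{I_C})$. Meanwhile $\sum_{i\in S}M_i$ is independent of $(M_{I_C},K_{I_C})$, being padded by $\sum_{i\notin I_C}M_i$. So security constrains nothing, and your bound is in fact false. Take two sources joined to $\gamma$ by single edges $e_1,e_2$, with $r=1$. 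Your pair $C=W=\{e_1\}$ gives $0$, yet sending $\mathbf{m}_1,\mathbf{m}_2$ uncoded is target-function secure at rate $1$; this pair is not valid.

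That is why validity is a case split, and each case needs its own argument; neither appears in your proposal.
\begin{itemize}
\item If $I_C=S$, work with the target itself. By Lemma~\ref{lem:globalcut}, $H(\sum_iM_i\mid Y_C)=0$, and by security, $H(\sum_iM_i\mid Y_W)=H(\sum_iM_i)$. Hence $\ell\log q=I(\sum_iM_i;Y_{C\setminus W}\mid Y_W)\le n(|C|-|W|)\log q$. This is Theorem~\ref{thm:upperbound1} with the trivial partition and $t_{C,f,g}=1$.
\item If $I_C\setminus D_W\neq\emptyset$, no secrecy is used. $C\setminus W$ separates $\gamma$ from $I_C\setminus D_W$, and the non-secure cut-set bound (Theorem~\ref{thm:csbnd}) gives $|C|-|W|$. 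This is Corollary~\ref{cor:ub1}(2).
\end{itemize}

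Separately, your chain rule is expanded in the wrong order. The term $I(Y_W;M_{I_C}\mid M_{S\setminus I_C},Y_{C\setminus W})$ need not vanish even under source security. You must write $I(Y_C;X\mid Z)=I(Y_W;X\mid Z)+I(Y_{C\setminus W};X\mid Z,Y_W)$ and kill the first term by security and independence.
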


\subsection{Network function computation for linear functions}

In this subsection, we review prior results for network function computation without security constraints. When security requirements are removed, the use of random keys at the source nodes becomes unnecessary. The definitions of network code, admissible code, computing rate, and computing capacity can be adapted straightforwardly from the secure setting.
In \cite{2018HTYG,2019GYYL}, Guang {\it et al.} derived two general upper bounds on the computing capacity that hold for any network and any target function. Here we focus on the case where the target function is linear. A target function $f$ is called \emph{linear} if it has the form $f(\bx)=\bx \cdot\mathbf{F}$, where $\bx\in\F_q^{1\times s}$ and $\bF\in\F_q^{s\times r_f}$. 

A network code is called \emph{linear} if the message transmitted by each edge $e$ is a linear combination of the messages received by $\tail(e)$. Specifically, in an $(\ell,n)$ linear network code over $\F_q$,  
the message $\bu_e \in \F_q^{n\times 1}$ transmitted via edge $e$ has the form
\begin{equation}\label{eq:linlocalenc}
  \bu_e=
  \begin{cases}
    \sum\limits_{j=1}^{\ell}  m_{ij}\ba_{(i,j),e}, & \mbox{if $\tail(e)= \sigma_i$ for some $i$}; \\
    \sum\limits_{d\in \In(\tail(e))} \bu_d\mathbf{A}_{d,e}, & \mbox{otherwise},
  \end{cases}
\end{equation}
where  $\ba_{(i,j),e} \in \F_q^{1\times n},\mathbf{A}_{d,e} \in \F_{q}^{n\times n}$, and $\ba_{(i,j),e}$ is an all-zero vector  if $e$ is not an outgoing edge of some source node $\sigma_i\in S$ and   $\mathbf{A}_{d,e} $ is an all-zero matrix if $e$ is not an outgoing edge of $\head(d)$. Consequently, each $\bu_e$ can be expressed as a linear combination of the source messages:
\begin{equation*}
  \bu_e =\mathbf{m}_S\cdot \mathbf{H}_e,
\end{equation*}
where $ \mathbf{H}_e \in \F_q^{s\ell \times n}$.
If there exists a decoding function $\varphi :\prod_{\In(\gamma)} \F_q^n \to \F_q^{\ell r_f}$ such that for every $\mathbf{m}_S \in \F_q^{s\ell}$,
\begin{equation*}
    \varphi\parenv*{(\bu_e : e\in \In(\gamma))} =\mathbf{m}_S\cdot(\bF\otimes \bI_\ell),
\end{equation*}
where $\bI_\ell$ is the $\ell\times \ell$ identity matrix and $\otimes$ denotes the Kronecker product, then the linear network code enables the sink node to compute the target function exactly $\ell$ times. 

For a subset $ A \subseteq S$, let $\mathbf{F}_A$ denote an $s\times r_f$ matrix obtained by replacing all the rows corresponding to the source nodes in $S\setminus A$ by zeros in $\bF$.
For instance, let $S=\{\sigma_1,\sigma_2,\sigma_3\}$ and $A=\{\sigma_1,\sigma_2\}$. If the sink node wants to compute $m_1+m_2$ and $m_2+m_3$, then the
coefficient matrix $\bF$ and corresponding $\bF_A$ are
\[\bF^T=\begin{bmatrix}
    1&1&0\\0&1&1
\end{bmatrix},~~~\bF_A^T=\begin{bmatrix}
    1&1&0\\0&1&0
\end{bmatrix}.\]
As demonstrated in \cite{2023WeiXuGe}, the following upper bound can be obtained from the general bound in \cite{2018HTYG}.

\begin{thm}\label{thm:csbnd}[{\cite[Corollary~II.1]{2023WeiXuGe}}]
Given a network $\mathcal{N}$ and a linear target function $f(\bx)=\bx\cdot\mathbf{F}$.  If there exists a linear network code $\cC$ computing $f$ with rate $w$, then necessarily
\[ w  \leq  \min_{C\in \Lambda(\cN)} \frac{\abs{C}}{  {\rm{Rank}}(\mathbf{F}_{I_C})}.\]
\end{thm}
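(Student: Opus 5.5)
Fix an arbitrary cut $C\in\Lambda(\cN)$ and an admissible $(\ell,n)$ linear code of rate $w=\ell/n$. The plan is to show that $n\,|C|\ge \ell\cdot\Rank(\bF_{I_C})$; minimizing over $C$ then gives the claimed bound. The idea is that the symbols on $C$ must carry all the information about the part of $f(\mathbf{m}_S)$ contributed by the sources in $I_C$, because every path from such a source to $\gamma$ passes through $C$.

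\textbf{Step 1: a functional dependence at the sink.} Fix the messages of the sources in $S\setminus I_C$ to zero. Every edge that is not downstream of $C$ carries a message depending only on sources outside $I_C$, since a source in $I_C$ reaches $\gamma$ only through $C$. Hence the symbols on $\In(\gamma)$ are determined by $\bu_C$ together with messages that are now zero. More carefully, I will order edges topologically and check that each edge in $\In(\gamma)$ is either in $C$ or computed from edges that are downstream of $C$ or carry messages independent of $\mathbf{m}_{I_C}$. With the other sources set to zero, this shows $\bu_{\In(\gamma)}$ is a function of $\bu_C$.

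\textbf{Step 2: counting.} With $\mathbf{m}_{S\setminus I_C}=\Zero$, the decoder output equals $\mathbf{m}_S(\bF_{I_C}\otimes\bI_\ell)$. As $\mathbf{m}_{I_C}$ ranges over all values, this output takes $q^{\ell\,\Rank(\bF_{I_C})}$ distinct values. Since it is a function of $\bu_C\in\F_q^{n|C|}$, we get $q^{n|C|}\ge q^{\ell\Rank(\bF_{I_C})}$, that is, $\ell/n\le |C|/\Rank(\bF_{I_C})$. If $\Rank(\bF_{I_C})=0$ the bound is vacuous.

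\textbf{Main obstacle.} The delicate point is Step 1. Edges not downstream of $C$ could still carry information from $I_C$ that re-enters the flow toward $\gamma$ through $C$, so "not downstream" is not quite the right notion. The cleaner way is to let $U$ be the set of nodes from which $\gamma$ is reachable in $\cG\setminus C$, and to argue that the edges entering $U$ from outside $U$ lie in $C$. A node in $U$ is never a source in $I_C$. Then every message inside $U$ is a function of $\bu_C$ and of the sources in $S\setminus I_C$, which are zeroed out. Setting up this induction over $U$ is the step that needs the most care.

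Alternatively, I could simply invoke the general bound of \cite{2018HTYG} specialized to linear $f$, as \cite{2023WeiXuGe} does. Rather than relying on that citation, I will give the direct argument above.
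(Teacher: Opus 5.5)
Your argument is correct, provided Step 1 is carried out with the set $U$ from your last paragraph rather than the first ``not downstream of $C$'' phrasing. That phrasing is false as stated: an edge out of a source in $I_C$ that is not in $C$ is not downstream of $C$, yet it carries $I_C$-information.

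With $U$ the set of nodes that still reach $\gamma$ after deleting $C$, two facts hold. Every $d\in\In(v)$ with $v\in U$ either lies in $C$ or has $\tail(d)\in U$, and $U\cap I_C=\emptyset$. So a topological induction over $U$ makes $\bu_{\In(\gamma)}$ a function of $\bu_C$ and $\mathbf{m}_{S\setminus I_C}$, and your counting then gives $n|C|\ge \ell\,\Rank(\bF_{I_C})$.

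The paper gives no proof of its own. It imports the statement from \cite{2023WeiXuGe}, which specializes the general cut-set bound of \cite{2018HTYG}. That bound holds for arbitrary target functions and arbitrary codes, and its denominator is the logarithm of the number of classes of an $f$-induced equivalence relation on the inputs of $I_C$. For $f(\bx)=\bx\bF$, two inputs differing only on $I_C$ give the same value iff their difference is annihilated by $\bF_{I_C}$, so the count is $q^{\Rank(\bF_{I_C})}$.

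Your proof is the self-contained special case of that argument. Linearity of $f$ makes the $I_C$-contribution separable, so it suffices to fix $\mathbf{m}_{S\setminus I_C}=\Zero$ instead of handling all values of the other sources. The count then reduces to $\Rank(\bF_{I_C}\otimes\bI_\ell)=\ell\,\Rank(\bF_{I_C})$.

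It is also the counting counterpart of the paper's own Lemma~\ref{lem:globalcut}, specialized to no keys. Your Step 1 is exactly the justification of the claim there that $Y_{\In(\gamma)}$ is a function of $Y_C$ and $Y_{C'}$. Your Step 2 mirrors $\ell\,\Rank(\bF_{I_C})\log q\le H(Y_C\mid M_{S\setminus I_C})\le n|C|\log q$.

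The citation route buys generality: it covers any target function and sits in the framework that hosts the partition-based refinements of \cite{2019GYYL}. Your route buys an elementary, self-contained proof. It also makes explicit that linearity of the code is never used, so the bound holds for every $(\ell,n)$ code computing $f$.
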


For the special case $n=\ell=1$, Appuswamy and Franceschetti  \cite{2014Appuswamy} proved that the condition which appears in Theorem~\ref{thm:csbnd}, is also sufficient when  $r_f\in\{1,s-1,s\}$.\footnote{For $r_f=s-1$, the sufficiency additionally requires $\mathbf{F}^T\sim\begin{pmatrix}
    \mathbf{I}_{s-1}&\mathbf{b}
\end{pmatrix}$, where $\mathbf{b}$ is a vector with no zero components.}

\subsection{Secure network coding}

In this subsection, we present the secure network coding model $(\cN, g, r)$ and the results in the secure network coding problem. In this setting, the network $\cN = (\cG, s, T)$ has a single source node $s$ and a set of sink nodes $T \subseteq \mathcal{V} \setminus {s}$.
We retain the notation $\tail(e)$, $\head(e)$, $\In(v)$, and $\Out(v)$ as defined previously.
The source $s$ generates a random source message $M=(M_1, M_2, \cdots, M_\ell)$ consisting of $\ell$ i.i.d. symbols uniformly drawn from a finite alphabet $\cA$. The parameter $\ell$ is called the \emph{information rate}. The goal is to multicast $M$ to all sinks $t \in T$ by using the network $\mathcal{N}$ multiple times.
As same as secure network function computation, there is a wiretapper who can eavesdrop any one edge set $W$ of size at most $r$, where $r$ is the \emph{security level}. The wiretapper is not allowed to obtain any information about $g(M)$. The overall model is denoted by $(\cN, g, r)$.
To achieve security, the source employs a random key $K$, uniformly distributed over a finite set $\mathcal{K}$, which is independent of $M$.

An $(\ell, n)$ secure network code for $(\cN, g, r)$ is defined as follows. Let $\mathbf{m} \in \cA^\ell$ and $k \in \mathcal{K}$ be realizations of $M$ and $K$, respectively. The code $\widehat{\mathbf{C}}$ consists of:
\begin{itemize}
    \item a \emph{local encoding function} $\widehat{\theta}_e$ for each edge $e\in\mathcal{E}$ such that 
    \begin{equation}
        \widehat{\theta}_e:
\begin{cases}
 \mathcal{A}^\ell\times\mathcal{K}\mapsto\mathcal{B}^n, & \mbox{if $\tail(e)= s$}; \\
    \prod\limits_{d\in \In(\tail(e))}\mathcal{B}^n\mapsto\mathcal{B}^n, & \mbox{otherwise},
  \end{cases}
    \end{equation}
  \item a \emph{decoding function} $\widehat{\varphi}_t:\prod_{\In(\gamma)}\cB^n\rightarrow\cA^\ell$ at each sink $t \in T$, which recovers $M$ with zero error.
\end{itemize}

Let $Y_e = \widehat{\eta}_e(M, K)$ be the random vector transmitted on edge $e$, where $\widehat{\eta}_e$ is the corresponding global encoding function obtained recursively from ${\widehat{\theta}_e}$. For an edge set $W$, denote $Y_W = (Y_e : e \in W)$.
The code $\widehat{\mathbf{C}}$ is admissible if it satisfies:
\begin{itemize}
    \item \textbf{Decodability}: For every sink $t \in T$, $\widehat{\varphi}_t\bigl(\widehat{\eta}_{\In(t)}(M, K)\bigr) = M$;
    \item \textbf{Security}: For every wiretap set $W$ with $|W| \leq r$, $I(Y_W;g(M))=0$.
\end{itemize}

When the security function is linear, i.e., $g(\bx) = \bx\cdot \mathbf{G}$, where $\bG\in\mathbb{F}_q^{\ell\times r_g}$ and $\Rank(\mathbf{G}) = r_g$, Bai, Guang and Yeung derived the following capacity result \cite{bai2023multiple}.
\begin{thm}[{\cite[Theorem~1]{bai2023multiple}}]
Consider the secure network coding model $(\cN, g, r)$ over $\F_q$, where $g(\bx)=\bx\cdot\mathbf{G}$. Let $\tau = r_g / \ell$, $\tau_0 = (C_{\min}-r)/C_{\min}$ with $0 < r < C_{\min}$, and assume $q> \max\{|T|, \binom{|\cE|}{r}\}$.
\begin{itemize}
\item If $0 \leq \tau \leq \tau_0$, then \[C(\cN,r,g) = \dfrac{l}{\big\lceil\frac{l}{C_{\min}}\big\rceil};\]
\item If $\tau_0 \leq \tau \leq 1$, then \[C(\cN,r,g) = \dfrac{l}{\big\lceil\frac{\tau l}{C_{\min}-r}\big\rceil}.\]
\end{itemize}
\end{thm}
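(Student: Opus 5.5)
The statement has two parts: a converse, which says the number of network uses $n$ needed to send $\ell$ symbols securely is at least $\max\{\lceil \ell/C_{\min}\rceil,\lceil r_g/(C_{\min}-r)\rceil\}$, and an achievability part, which says a code exists with exactly this $n$. The two expressions in the statement are this maximum evaluated in the two regimes. Indeed, $\tau\le\tau_0$ is equivalent to $r_g/(C_{\min}-r)\le \ell/C_{\min}$, so in that regime the first term dominates, and otherwise the second term is at least as large. Throughout I measure entropy in units of $\log q$, and I use $\Rank(\mathbf{G})=r_g$, so that $g(M)=M\mathbf{G}$ is uniform on $\F_q^{r_g}$ and $H(g(M))=r_g$.

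For the converse, fix a sink $t$ and a minimum cut $C$ separating $t$ from $s$. Since $M$ is a function of $Y_{\In(t)}$, it is also a function of $Y_C$. This gives $\ell=H(M)\le H(Y_C)\le n|C|$, hence $n\ge \lceil \ell/C_{\min}\rceil$. For the security term, choose any $W\subseteq C$ with $|W|=r$, which is possible because $r<C_{\min}$. Since $g(M)$ is a function of $Y_C$ and $I(Y_W;g(M))=0$, we get
\[ r_g=H(g(M))=H(g(M)\mid Y_W)\le H(Y_C\mid Y_W)\le H(Y_{C\setminus W})\le n(C_{\min}-r). \]
Hence $n\ge\lceil r_g/(C_{\min}-r)\rceil$. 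Since the rate is $\ell/n$, combining the two lower bounds on $n$ gives the stated upper bounds on capacity in both regimes.

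For achievability, I take $n$ equal to this maximum and build an $(\ell,n)$ vector-linear code. First pick an invertible $\mathbf{P}\in\F_q^{\ell\times\ell}$ whose first $r_g$ columns are $\mathbf{G}$. Setting $M'=M\mathbf{P}$, the secret $g(M)$ becomes the first $r_g$ coordinates of $M'$. The remaining $\ell-r_g$ coordinates of $M'$ are then usable as masking randomness. I add a uniform key $K$ of length $\kappa=\max\{0,\,r_g+nr-\ell\}$ and form $X=(M',K)$, which has length $\max\{\ell,r_g+nr\}$. Both entries of this maximum are at most $nC_{\min}$: the first by the choice of $n$, the second because $r_g\le n(C_{\min}-r)$.

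Next I view the $n$ uses as one multicast over $\F_q$ of dimension $nC_{\min}$, with $X$ padded by further key symbols if needed. A linear multicast code delivering $X$ to all sinks exists once $q>|T|$. Decoding $X$ then yields $M=M'\mathbf{P}^{-1}$. For security, I precode by $X\mapsto X\mathbf{Q}$ for a suitable invertible $\mathbf{Q}$. For each wiretap set $W$, let $\mathbf{B}_W$ be its global encoding matrix, of dimension $nC_{\min}\times n|W|$. The condition $I(Y_W;g(M))=0$ is the linear-algebra condition that the column space of $\mathbf{Q}\mathbf{B}_W$ meets the span of the secret coordinate directions only trivially. More precisely, the projection of $\mathbf{Q}\mathbf{B}_W$ onto the non-secret rows, which are message-masking and key rows, must have full column rank $n|W|\le nr$. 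This is feasible precisely because $\ell-r_g+\kappa\ge nr$.

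The step I expect to be the main obstacle is choosing $\mathbf{Q}$ so that this holds simultaneously for all $\binom{|\cE|}{r}$ wiretap sets under the stated field size. My plan is the standard polynomial argument of Cai--Yeung and of Bai--Guang--Yeung. For each $W$, the full-rank condition is the nonvanishing of some $nr\times nr$ minor, which is a nonzero polynomial in the entries of $\mathbf{Q}$; nonzero because a suitable $\mathbf{Q}$ exists for each $W$ separately. I would then bound the degrees so that the product of these polynomials has a nonroot when $q>\binom{|\cE|}{r}$, working in each variable separately, as in the Cai--Yeung construction.
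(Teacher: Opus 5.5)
The paper does not prove this theorem; it only cites it from Bai--Guang--Yeung, so there is no in-paper argument to match. Your reduction of both regimes to $n_{\min}=\max\{\lceil \ell/C_{\min}\rceil,\lceil r_g/(C_{\min}-r)\rceil\}$ is right, and the cut-set converse is complete. The achievability outline is also the standard one, and it is the same precoding idea the paper uses in Section~\ref{sec:lowerbound-sum}: rotate $g(M)$ into $r_g$ coordinates, fill up to $nC_{\min}$ symbols with masking and key symbols, multicast over the $n$-fold network, and precode by an invertible $\mathbf{Q}$. Two steps, however, do not go through as written.

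\textbf{The security criterion is wrong.} Write $\mathbf{A}=\mathbf{Q}\mathbf{B}_W=[\mathbf{A}_1;\mathbf{A}_2]$, with $\mathbf{A}_1$ the secret rows and $\mathbf{A}_2$ the non-secret rows. Then $\langle\mathbf{A}\rangle$ meets the secret directions trivially iff $\ker\mathbf{A}_2\subseteq\ker\mathbf{A}_1$. Equivalently, $\Rank(\mathbf{A}_2)=\Rank(\mathbf{B}_W)$; this is not the same as $\Rank(\mathbf{A}_2)=n|W|$.

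Whenever $\mathbf{B}_W$ is rank-deficient, no $\mathbf{Q}$ achieves your condition. This happens, for example, when one edge of $W$ merely forwards another edge of $W$. In that case every $nr\times nr$ minor you use is the zero polynomial, and the product argument collapses. The fix is to replace $\mathbf{B}_W$ by a column basis $\mathbf{B}'_W$ with $d_W\le nr$ columns.

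\textbf{The field-size step needs care.} If you multiply in $\det\mathbf{Q}$ to force invertibility, the per-variable degree becomes $\binom{|\cE|}{r}+1$. That only gives $q>\binom{|\cE|}{r}+1$, not the stated bound.

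One repair keeps your polynomial route. Take as variables only the non-secret rows $\mathbf{Q}_2$. Append fixed columns $\mathbf{D}_W$ so that $[\mathbf{B}'_W\ \mathbf{D}_W]$ is an $nC_{\min}\times(nC_{\min}-r_g)$ matrix of full column rank; this is possible since $nr\le nC_{\min}-r_g$. Set $p_W=\det(\mathbf{Q}_2[\mathbf{B}'_W\ \mathbf{D}_W])$. Each $p_W$ is nonzero and has degree at most $1$ in every entry. A nonroot of $\prod_{W}p_W$ then exists for $q>\binom{|\cE|}{r}$. At such a nonroot, $\mathbf{Q}_2$ has full row rank and every $W$ is secure, and you then complete $\mathbf{Q}$ by choosing the secret rows arbitrarily.

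The cleaner route is the greedy count of Lemma~\ref{lem:field-size-lemma}, applied to the secret columns $\mathbf{s}_1,\dots,\mathbf{s}_{r_g}$ of $\mathbf{Q}^{-1}$; security is equivalent to $\langle\mathbf{B}_W\rangle\cap\langle\mathbf{s}_1,\dots,\mathbf{s}_{r_g}\rangle=\{\Zero\}$. Choose each $\mathbf{s}_j$ outside $\bigcup_{|W|=r}\bigl(\langle\mathbf{B}_W\rangle+\langle\mathbf{s}_1,\dots,\mathbf{s}_{j-1}\rangle\bigr)$. Each of these subspaces has dimension at most $nr+r_g-1\le nC_{\min}-1$, precisely because $r_g\le n(C_{\min}-r)$, which is your converse inequality. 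So a valid choice exists whenever $q^{nC_{\min}}>\binom{|\cE|}{r}q^{nC_{\min}-1}$, i.e. $q>\binom{|\cE|}{r}$. This automatically handles rank-deficient $\mathbf{B}_W$ and linear independence of the $\mathbf{s}_j$.
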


In the following, we clarify the relationship between secure network coding and secure network function computation. Consider a secure network function computation model $(\cN,f,g,r)$ where $f(\bx)=\bx\cdot \bI_s$ and $g(\bx)=\bx\cdot\bG$, with $\bx\in\mathbb{F}_q^{s}$ and $\bG\in\mathbb{F}_q^{s\times r_g}$. The network $\cN$ has $s$ source nodes. Since secure network coding is commonly formulated for a single source, we transform $\cN$ into $\cN'$ by introducing a virtual source node $\sigma_0$ and connecting it to each original source $\sigma_i$, $i \in [s]$ via a perfectly secure edge of unlimited capacity. In the original model, each $\sigma_i$ generates $\ell$ independent symbols $M_i$ along with random keys. In $\cN'$, the message generation is centralized, that is, $\sigma_0$ produces a message $M_0 = (M_1, \dots, M_s)$ of length $s\ell$, as well as the required randomness. Accordingly, the target function becomes $f'=\bx'\cdot \bI_{s\ell}$ and the secure function becomes $g'=\bx'\cdot\bG'$, where $\bx'\in\mathbb{F}_q^{s\ell}$ and $\bG'=\bG\otimes\bI_\ell\in\mathbb{F}_q^{s\ell\times r_g\ell}$.
Based on the above network transformation, an $(\ell, n)$ secure network code for $(\cN, f, g, \mathcal{W})$ can be directly extended to an $(s\ell, n)$ secure network code for $(\cN', g', \mathcal{W})$. Note that the wiretap set $\mathcal{W}$ remains unchanged since the newly added edges are perfectly secure. Consequently, the following inequality must be satisfied:
\begin{equation}\label{eq:SNC}
    \frac{s\ell}{n} \leq \widehat{\mathcal{C}}(\cN', g', \mathcal{W}) \leq \min\left\{ |C|, \frac{s}{r_g}(|C| - r) \right\}.
\end{equation}

\section{Upper bounds for secure network computing capacity}\label{sec:upper-bounds}
In this section, we establish two upper bounds on the secure network computing capacity $\widehat{\cC}(\cN,f,g,r)$. The key idea is to identify an edge set $C$ (or $C \cup B$) with the following property: given certain source messages and random keys, the messages transmitted over $C$ contain the information the wiretapper wants to learn. At the same time, when provided with the same source messages and keys, any wiretap set $W \subseteq C$ (or $W \subseteq C \cup B$) must not leak any information about that same information to the wiretapper.

We first present a useful lemma.
\begin{lem}\label{lem:globalcut}
    For the secure network function computation problem $(\cN,f,g,r)$, if there exists a secure $(\ell,n)$ network code, then for any cut set $C\in\Lambda(\cN)$, we have
    \[H(M_S(\mathbf{F}_{I_C}\otimes \bI_\ell)|Y_C,M_{S\setminus I_C},K_{S\setminus I_C})=0.\]
\end{lem}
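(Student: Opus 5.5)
Since $C$ is a cut, the sink's input is determined by $Y_C$ together with the messages and keys of the sources in $S\setminus I_C$. Decodability then lets the sink recover $f(M_S)$. Subtracting off the contribution of the known sources $S\setminus I_C$ leaves exactly $M_S(\bF_{I_C}\otimes\bI_\ell)$. Showing this quantity is a deterministic function of $(Y_C,M_{S\setminus I_C},K_{S\setminus I_C})$ gives the zero conditional entropy.

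\textbf{Step 1: the cut determines $Y_{\In(\gamma)}$.} I will show that $Y_{\In(\gamma)}$ is a function of $(Y_C, M_{S\setminus I_C}, K_{S\setminus I_C})$. Let $\cV'$ be the set of nodes reachable from some source in $I_C$ after deleting $C$. By the definition of $I_C$, no source node in $I_C$ has a path to $\gamma$ once $C$ is removed, so $\gamma\notin\cV'$. Call an edge $e\notin C$ \emph{tainted} if $\tail(e)\in\cV'$.

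Process the edges in a topological order of the acyclic graph $\cG$. For an edge $e$ that is not tainted and not in $C$, its tail is either a source in $S\setminus I_C$ or a non-source node. In the first case $Y_e=\htheta_e(M_i,K_i)$ with $\sigma_i\in S\setminus I_C$, which is known. In the second case, each incoming edge $d$ of $\tail(e)$ is either in $C$, and so known, or not in $C$. An incoming edge $d\notin C$ cannot be tainted, since otherwise $\tail(e)$ would be reachable and lie in $\cV'$. Induction then shows $Y_d$ is known.

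Every edge in $\In(\gamma)\setminus C$ is untainted because $\gamma\notin\cV'$. Hence $Y_{\In(\gamma)}$ is a function of $(Y_C,M_{S\setminus I_C},K_{S\setminus I_C})$. This reachability argument is the only delicate step. The thing to get right is the choice of $\cV'$: it must be defined by reachability from the sources in $I_C$ rather than reachability to $\gamma$, so that sources outside $I_C$ are handled correctly even when their paths cross $C$.

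\textbf{Step 2: decode and cancel.} By decodability, $\hvarphi(Y_{\In(\gamma)})=f(M_S)=M_S(\bF\otimes\bI_\ell)$, regardless of the keys. Split $\bF=\bF_{I_C}+\bF_{S\setminus I_C}$ by zeroing rows. Then
\[M_S(\bF_{I_C}\otimes\bI_\ell)=f(M_S)-M_S(\bF_{S\setminus I_C}\otimes\bI_\ell).\]
The subtracted term involves only $M_{S\setminus I_C}$, so it is known.

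\textbf{Step 3: conclude.} Combining the two steps, $M_S(\bF_{I_C}\otimes\bI_\ell)$ is a deterministic function of $(Y_C,M_{S\setminus I_C},K_{S\setminus I_C})$. Its conditional entropy given these variables is therefore $0$.
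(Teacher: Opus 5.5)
Your proof is correct and follows the same route as the paper. The paper adjoins $C'=\bigcup_{\sigma_i\in S\setminus I_C}\Out(\sigma_i)$ to $C$ and observes that $C\cup C'$ separates $\gamma$ from every source, so $Y_{\In(\gamma)}$ is a function of $(Y_C,Y_{C'})$, where $Y_{C'}$ is determined by $(M_{S\setminus I_C},K_{S\setminus I_C})$; it then decodes $f(M_S)$ and cancels $M_S(\bF_{S\setminus I_C}\otimes\bI_\ell)$ exactly as in your Steps 2--3. Your reachability and topological-order induction in Step 1 is a careful proof of the global-cut fact the paper asserts without detail, and it uses the convention that each source in $I_C$ counts as reachable from itself.
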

\begin{pf}
    Let $C'=\bigcup_{i\in S\setminus I_C} \Out(\sigma_i)$. Since $D_{C'}=S\setminus I_C$, the message $Y_{C'}$ is determined by $M_{S\setminus I_C}$ and $K_{S\setminus I_C}$. Hence, we have
    \begin{equation}\label{seq:lemma1} 
        H(M_S(\mathbf{F}_{I_C}\otimes \bI_\ell)|Y_C,M_{S\setminus I_C},K_{S\setminus I_C})=H(M_S(\mathbf{F}_{I_C}\otimes \bI_\ell)|Y_C,Y_{C'},M_{S\setminus I_C},K_{S\setminus I_C}).
    \end{equation}
    Note that $C\cup C'$ forms a global cut of the network, that is, $C\cup C'$ separates $\gamma$ from all source nodes. Consequently, the message received by the sink node $Y_{\In(\gamma)}$ is a function of $Y_C$ and $Y_{C'}$. By the decodability of the network code, the target function $M_S(\mathbf{F}\otimes \bI_\ell)$ can be recovered from $Y_C$ and $Y_{C'}$ with zero error, so
    \[H(M_S(\mathbf{F}\otimes \bI_\ell)|Y_C,Y_{C'})=0.\]
    Returning to \eqref{seq:lemma1}, we have
    \begin{subequations}
        \begin{align*}
            &H(M_S(\mathbf{F}_{I_C}\otimes \bI_\ell)|Y_C,M_{S\setminus I_C},K_{S\setminus I_C})\\
            =&H(M_S(\mathbf{F}_{I_C}\otimes \bI_\ell)|Y_C,Y_{C'},M_{S\setminus I_C},K_{S\setminus I_C})\\
            =&H(M_S(\mathbf{F}_{I_C}\otimes \bI_\ell)|Y_C,Y_{C'},M_S(\mathbf{F}\otimes \bI_\ell),M_{S\setminus I_C},K_{S\setminus I_C})=0,
        \end{align*}
    \end{subequations}
    where the last equality is because $M_S(\mathbf{F}\otimes\bI_\ell)$ equals to the sum of $M_S(\mathbf{F}_{I_C}\otimes \bI_\ell)$ and $M_S(\mathbf{F}_{S\setminus I_C}\otimes \bI_\ell)$, and $M_S(\mathbf{F}_{S\setminus I_C}\otimes \bI_\ell)$ is determined by $M_{S\setminus I_C}$.
\end{pf}

Next, we introduce the notion of a weak partition for a cut.
\begin{dfn}(Weak partition)
    A partition $\mathcal{P}_C = \{C_1, C_2, \dots, C_p\}$ of a cut $C$ is called a weak partition if
\begin{itemize}
    \item $I_{C_i}\neq \varnothing$, $\forall$ $1\leq i\leq p$;
    \item $D_{C_i}\subseteq I_{C_i}\cup J_C\cup L$, where $L\eqdef I_C\setminus(\cup_{i\in[p]}I_{C_i})$.
\end{itemize}
\end{dfn}
From the definition, the messages transmitted on each part $C_i$ depend only on the source messages and keys from nodes in $I_{C_i} \cup J_C \cup L$. A trivial weak partition is the whole cut itself, i.e., $\mathcal{P}_C = \{C\}$.
It is worth to note that the definition of weak partition differs from the \emph{strong partition} defined in \cite{2019GYYL}. A partition $\mathcal{P}_C=\{C_1,C_2,\cdots,C_p\}$ of a cut $C$ is called a strong partition, if
\begin{itemize}
    \item $I_{C_i}\neq \varnothing$, $\forall$ $1\leq i\leq p$;
    \item $D_{C_i}\cap I_{C_j}=\emptyset$, for any $i\neq j\in[s]$.
\end{itemize}
It is clear that every strong partition is also a weak partition, but the converse is not always true.
For instance, in Fig.~\ref{fig:weak_partition}, the cut $C=\{e_{3,1},e_{3,2},e_{1},e_{2}\}$ has a weak partition $\mathcal{P}_C=\{C_1,C_2\}$, where $C_1=\{e_{3,1},e_{3,2}\}$ and $C_2=\{e_{1},e_{2}\}$. Here $D_{C_1} = I_{C_1} = \{\sigma_3\}$ and $D_{C_2} = I_{C_2} = S$, satisfying the weak partition conditions. However, $\mathcal{P}_C$ is not a strong partition since $D_{C_2}\cap I_{C_1}\neq \emptyset$. 

\begin{figure}
    \centering
    \includegraphics[width=0.5\linewidth]{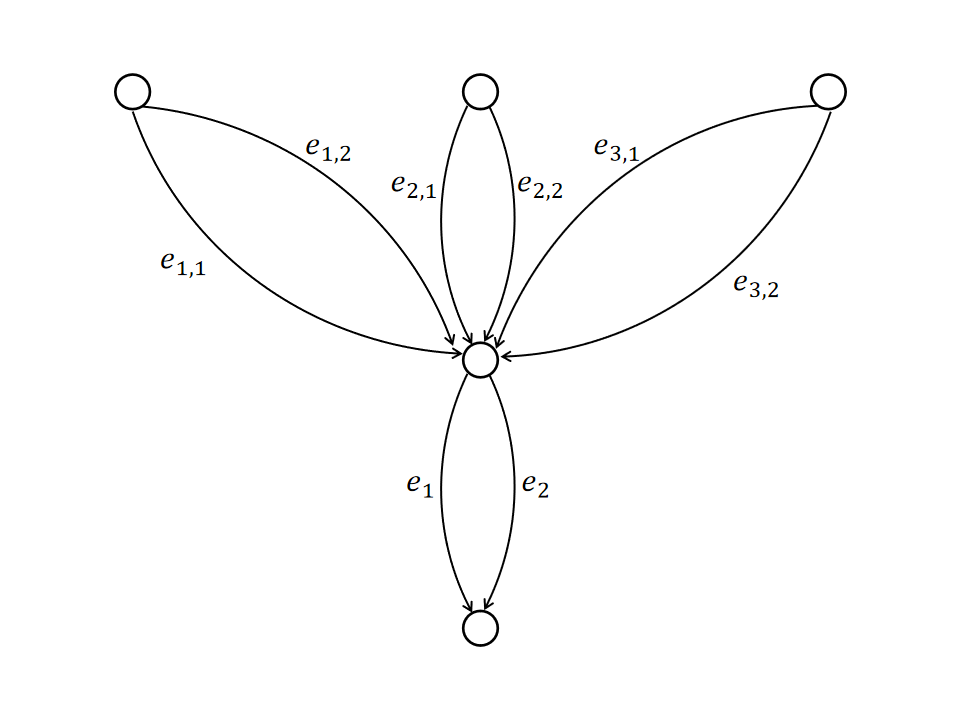}
    \caption{A three-layer network with $3$ source nodes.}
    \label{fig:weak_partition}
\end{figure}

\begin{dfn}
    For a cut set $C\in\Lambda(\cN)$ and a wiretapped set $W\in\cW$, the pair $(C,W)$ is called valid if the following conditions are satisfied:
\begin{enumerate}
    \item There exists a weak partition $\cP_C$ of $C$, $\cP_C=\{C_1,\cdots,C_p\}$ such that $\langle\mathbf{F}_{I_{C_1}}\rangle\bigoplus\cdots\bigoplus\langle\mathbf{F}_{I_{C_p}}\rangle$ $\bigcap\langle\mathbf{G}\rangle\neq\{\Zero\}$.\footnote{Here, we use $\langle\bG\rangle$ to denote the subspace spanned by the columns of a matrix $\bG$.}
    \item $W\subseteq C$, and $D_W\subseteq I_{C_1}\cup I_{C_2}\cup\cdots\cup I_{C_p}.$
\end{enumerate}
\end{dfn}

With these definitions, we can state the first upper bound on the secure computing capacity.
\begin{thm}\label{thm:upperbound1}
    For the secure network function computation problem $(\cN,f,g,r)$, the secure network computing capacity satisfies \[\widehat{\cC}(\cN,f,g,r)\leq\min_{(C,W)\ \text{valid}}\frac{|C|-|W|}{t_{C,f,g}},\]
    where $C$ has a weak partition $C_1,\cdots,C_p$, and $t_{C,f,g}=\dim\left(\langle\mathbf{F}_{I_{C_1}}\rangle\bigoplus\cdots\langle\mathbf{F}_{I_{C_p}}\rangle\bigcap\langle\mathbf{G}\rangle\right).$
\end{thm}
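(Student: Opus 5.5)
Fix an admissible $(\ell,n)$ secure code and a valid pair $(C,W)$ with weak partition $\{C_1,\dots,C_p\}$. We aim to show $\ell\, t_{C,f,g}\le n(|C|-|W|)$ when entropies are measured in units of $\log|\cB|$ (or $\log q$). The standard chain is
\begin{align*}
n(|C|-|W|) &\ge H(Y_C\mid Y_W) \ge H(Y_C\mid Y_W, Z) \\
&\ge I(Y_C; \Theta \mid Y_W, Z) = H(\Theta\mid Y_W,Z) - H(\Theta\mid Y_C, Z),
\end{align*}
where $Z \eqdef (M_{J_C\cup L},K_{J_C\cup L})$ and $\Theta$ is a suitable linear function of $M_{I_{C_1}\cup\cdots\cup I_{C_p}}$ of dimension $\ell t$ lying in $\langle\mathbf{G}\rangle\otimes$-span. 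Concretely, choose a basis of $\bigl(\langle \mathbf{F}_{I_{C_1}}\rangle\oplus\cdots\oplus\langle\mathbf{F}_{I_{C_p}}\rangle\bigr)\cap\langle\mathbf{G}\rangle$, i.e.\ an $s\times t$ matrix $\mathbf{T}=\sum_i \mathbf{F}_{I_{C_i}}\mathbf{X}_i=\mathbf{G}\mathbf{Y}$ of rank $t$. Then set $\Theta = M_S(\mathbf{T}\otimes\bI_\ell)$. The plan is to show two things: the second term vanishes, and the first term equals $\ell t$.

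For the vanishing term, I would apply Lemma~\ref{lem:globalcut} part by part. By the weak partition property, $Y_{C_i}$ is a function of the messages and keys in $I_{C_i}\cup J_C\cup L$. Each $C_i$ is itself a cut separating $\gamma$ from $I_{C_i}$, so the argument of Lemma~\ref{lem:globalcut} should be adapted to $C_i$. Together with the other sources' messages, which are outside $I_{C_i}$, it gives $M_S(\mathbf{F}_{I_{C_i}}\otimes\bI_\ell)$ from $Y_{C_i}$ and $(M,K)_{S\setminus I_{C_i}}$.

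This is the main obstacle. Conditioning on all of $S\setminus I_{C_i}$ includes the other $I_{C_j}$, which we are not allowed to condition on. The weak partition condition $D_{C_i}\subseteq I_{C_i}\cup J_C\cup L$ is what should rescue this. Since $Y_{C_i}$ does not depend on sources in $I_{C_j}\setminus(I_{C_i}\cup J_C\cup L)$, I would argue by fixing those sources to arbitrary values. Zero-error decodability then holds for every realization, so $M_S(\mathbf{F}_{I_{C_i}}\otimes\bI_\ell)$ must be a function of $(Y_{C_i},Z)$ alone. The argument is that varying the irrelevant sources changes neither $Y_{C_i}$ nor the $I_{C_i}$-part of $f$.

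Care is needed with the sources in $I_{C_j}\cap J_C$ and with the $\mathbf{F}_{I_{C_i}}$ rows lying in $J_C\cup L$ (those are in $Z$). I expect the resulting statement to be $H(M_{I_{C_i}}\text{-part of }\mathbf{F}_{I_{C_i}}\mid Y_{C_i},Z)=0$, restricted to rows in $I_{C_i}\setminus(J_C\cup L)$, with the remaining rows absorbed by $Z$. Summing over $i$ via $\mathbf{X}_i$ then gives $H(\Theta\mid Y_C,Z)=0$.

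For the first term, note that $\Theta$ is a function of $g(M_S)=M_S(\mathbf{G}\otimes\bI_\ell)$ through $\mathbf{Y}$. Since $D_W\subseteq \bigcup_i I_{C_i}$, which is disjoint from $J_C\cup L$ (the $I_{C_i}$ lie in $I_C$ and are disjoint from $L$ by definition), $Y_W$ is a function of $(M,K)$ on $\bigcup_i I_{C_i}$, and is independent of $Z$. Hence
\[
I(\Theta;Y_W,Z)=I(\Theta;Y_W)+I(\Theta;Z\mid Y_W)\le I(g(M_S);Y_W)+I(\Theta;Z\mid Y_W)=I(\Theta;Z\mid Y_W).
\]
The last term needs $\Theta$ to depend only on $M_{\bigcup I_{C_i}}$. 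This requires the rows of $\mathbf{T}$ indexed by $J_C\cup L$ to be zero, which holds since each $\mathbf{F}_{I_{C_i}}$ is supported on $I_{C_i}$. Then $(\Theta,Y_W)$ is independent of $Z$, so the term is $0$. Since $\mathbf{T}$ has rank $t$ and the $M_S$ are uniform, $H(\Theta)=\ell t$. Combining, $n(|C|-|W|)\ge \ell t$, and letting rates approach capacity gives the bound after minimizing over valid pairs.
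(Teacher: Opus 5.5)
Your proposal is correct and is essentially the paper's proof. It uses the same $\Theta=M_S(\mathbf{P}\otimes\bI_\ell)$ and the same identity $H(\Theta\mid Y_W,Z)=H(\Theta)$, obtained from security together with $D_W\subseteq\bigcup_j I_{C_j}$. It also has the same vanishing of $H(\Theta\mid Y_C,Z)$ via Lemma~\ref{lem:globalcut} applied to each part $C_j$, which already lies in $\Lambda(\cN)$ because $I_{C_j}\neq\emptyset$, so no adaptation is needed. The final step $\ell t=I(\Theta;Y_{C\setminus W}\mid Y_W,Z)\le n(|C|-|W|)$ is the same as well.

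The differences are cosmetic. You remove the conditioning on the other parts' sources by fixing the sources outside $I_{C_j}\cup J_C\cup L$, rather than through the paper's mutual-information chain (\ref{seq:ubpf_3}). Your route is slightly cleaner, since it does not need the $I_{C_j}$ to be pairwise disjoint, which step (\ref{seq:ubpf_3}f) tacitly assumes. Dropping $S\setminus D_C$ from the conditioning is harmless, and the ``care'' cases you flag are vacuous because $\bigcup_j I_{C_j}$ is disjoint from $J_C\cup L$.
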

\begin{pf}
    Suppose that there exists a secure $(\ell,n)$ network code. For every valid pair $(C,W)$, we assume that $C=C_1\cup\cdots\cup C_p$ is a weak partition, $I_C=I_{C_1}\cup I_{C_2}\cup\cdots\cup I_{C_p}\cup L$. Let $\bP$ be an $s\times t$ matrix such that $\langle \bP\rangle\eqdef\langle \bF_{I_{C_1}}\rangle\bigoplus\cdots\bigoplus\langle \bF_{I_{C_p}}\rangle$ $\bigcap\langle \bG\rangle$, and $t=\dim(\langle \bP\rangle)$. From the definition of $\bP$, $M_S(\bP\otimes \bI_{\ell})$ is a function of $M_S(\bF_{I_{C_1}}\otimes\bI_\ell),\cdots,M_S(\bF_{I_{C_p}}\otimes\bI_\ell)$. Hence,
    \begin{equation*}
         I(M_S(\bP\otimes \bI_{\ell});M_{S\setminus \cup_{j\in[p]}I_{C_j}},K_{S\setminus \cup_{j\in[p]}I_{C_j}})=0.
    \end{equation*}
    From the security condition,
    \[I(M_S(\bP\otimes \bI_{\ell});Y_W)\leq I(M_S(\bG\otimes \bI_{\ell});Y_W)=0.\]
    Consequently, we have
    \begin{equation}\label{eq:ubpf_1}
         H(M_S(\bP\otimes \bI_{\ell}))+H(Y_W)=H(M_S(\bP\otimes \bI_{\ell}),Y_W).
    \end{equation}
    Since $D_W\subseteq\cup_{j\in[p]}I_{C_j}$, the joint entropy $H(M_S(\bP\otimes \bI_{\ell}),Y_W)$ can be decomposed as
    \begin{subequations}\label{seq:ubpf_2}
        \begin{align}
           & H(M_S(\bP\otimes \bI_{\ell}),Y_W)=H(M_S(\bP\otimes \bI_{\ell}),Y_W|M_{S\setminus \cup_{j\in[p]}I_{C_j}},K_{S\setminus \cup_{j\in[p]}I_{C_j}})\\
            =&H(Y_W|M_{S\setminus \cup_{j\in[p]}I_{C_j}},K_{S\setminus \cup_{j\in[p]}I_{C_j}})+H(M_S(\bP\otimes \bI_{\ell})|Y_W,M_{S\setminus \cup_{j\in[p]}I_{C_j}},K_{S\setminus \cup_{j\in[p]}I_{C_j}})\\
            =&H(Y_W)+H(M_S(\bP\otimes \bI_{\ell})|Y_W,M_{S\setminus \cup_{j\in[p]}I_{C_j}},K_{S\setminus \cup_{j\in[p]}I_{C_j}}),
        \end{align}
    \end{subequations}
    where the first equality is because that $M_S(\bP\otimes \bI_{\ell})$ and $Y_W$ are independent of $M_{S\setminus \cup_{j\in[p]}I_{C_j}},K_{S\setminus \cup_{j\in[p]}I_{C_j}}$.
    From (\ref{eq:ubpf_1}) and (\ref{seq:ubpf_2}), we obtain
    \begin{equation}\label{eq:ubpf_key_1}
        H(M_S(\bP\otimes \bI_{\ell}))=H(M_S(\bP\otimes \bI_{\ell})|Y_W,M_{S\setminus \cup_{j\in[p]}I_{C_j}},K_{S\setminus \cup_{j\in[p]}I_{C_j}}).
    \end{equation}
    
    Next, we show that, given $M_{S\setminus \cup_{j\in[p]}I_{C_j}}$ and $K_{S\setminus \cup_{j\in[p]}I_{C_j}}$, the message $Y_C$ completely determines $M_S(\bP\otimes \bI_{\ell})$.
    Since $\{C_1,\cdots,C_p\}$ is a weak partition, for each $j\in[p]$, the messages on $C_j$ depend only on messages and keys generated by the source nodes in $I_{C_j},J_C$ and $L$. Consequently,
\begin{subequations}\label{seq:ubpf_3}
    \begin{align}
        &I\!\left(M_S(\mathbf{F}_{I_{C_j}}\!\otimes \bI_\ell);~ M_{\bigcup_{j'\neq j}I_{C_{j'}}},K_{\bigcup_{j'\neq j}I_{C_{j'}}}\Big|\; M_{S\setminus\bigcup_{j'}I_{C_{j'}}},K_{S\setminus\bigcup_{j'}I_{C_{j'}}},Y_{C_j}\right)\\
        \leq &I\!\left(M_S(\bF_{I_{C_j}}\otimes\bI_\ell),M_{I_{C_j}},K_{I_{C_j}};M_{\bigcup_{j'\neq j}I_{C_{j'}}},K_{\bigcup_{j'\neq j}I_{C_{j'}}}\Big|M_{S\setminus\bigcup_{j'}I_{C_{j'}}},K_{S\setminus\bigcup_{j'}I_{C_{j'}}},Y_{C_j}\right)\\
        =&I\!\left(M_{I_{C_j}},K_{I_{C_j}};M_{\bigcup_{j'\neq j}I_{C_{j'}}},K_{\bigcup_{j'\neq j}I_{C_{j'}}}\Big|M_{S\setminus\bigcup_{j'}I_{C_{j'}}},K_{S\setminus\bigcup_{j'}I_{C_{j'}}},Y_{C_j}\right)\\
        =&H\!\left(M_{\bigcup_{j'\neq j}I_{C_{j'}}},K_{\bigcup_{j'\neq j}I_{C_{j'}}}\Big|M_{S\setminus\bigcup_{j'}I_{C_{j'}}},K_{S\setminus\bigcup_{j'}I_{C_{j'}}},Y_{C_j}\right)\nonumber\\&-H\!\left(M_{\bigcup_{j'\neq j}I_{C_{j'}}},K_{\bigcup_{j'\neq j}I_{C_{j'}}}\Big|M_{S\setminus\bigcup_{j'}I_{C_{j'}}},K_{S\setminus\bigcup_{j'}I_{C_{j'}}},Y_{C_j},M_{I_{C_j}},K_{I_{C_j}}\right)\\
        =&H\!\left(M_{\bigcup_{j'\neq j}I_{C_{j'}}},K_{\bigcup_{j'\neq j}I_{C_{j'}}}\Big|M_{S\setminus\bigcup_{j'}I_{C_{j'}}},K_{S\setminus\bigcup_{j'}I_{C_{j'}}},Y_{C_j}\right)\nonumber\\&-H\!\left(M_{\bigcup_{j'\neq j}I_{C_{j'}}},K_{\bigcup_{j'\neq j}I_{C_{j'}}}\Big|M_{S\setminus\bigcup_{j'}I_{C_{j'}}},K_{S\setminus\bigcup_{j'}I_{C_{j'}}},M_{I_{C_j}},K_{I_{C_j}}\right)\\
        =&H\!\left(M_{\bigcup_{j'\neq j}I_{C_{j'}}},K_{\bigcup_{j'\neq j}I_{C_{j'}}}\Big|M_{S\setminus\bigcup_{j'}I_{C_{j'}}},K_{S\setminus\bigcup_{j'}I_{C_{j'}}},Y_{C_j}\right)-H\!\left(M_{\bigcup_{j'\neq j}I_{C_{j'}}},K_{\bigcup_{j'\neq j}I_{C_{j'}}}\right)\\
        \leq&H\!\left(M_{\bigcup_{j'\neq j}I_{C_{j'}}},K_{\bigcup_{j'\neq j}I_{C_{j'}}}\right)-H\!\left(M_{\bigcup_{j'\neq j}I_{C_{j'}}},K_{\bigcup_{j'\neq j}I_{C_{j'}}}\right)=0,
    \end{align}
\end{subequations}
where (\ref{seq:ubpf_3}b) follows from a basic property of conditional mutual information, (\ref{seq:ubpf_3}c) holds because $M_S(\mathbf{F}_{I_{C_j}}\otimes \bI_\ell)$ is a function of $M_{I_{C_j}}$, (\ref{seq:ubpf_3}e) is due to the fact that messages on the cut set $C_j$ depend only on sources in $I_{C_j}\cup J_C\cup L$, which is a subset of $I_{C_j}\cup (S\setminus\bigcup_{j'}I_{C_{j'}})$,  (\ref{seq:ubpf_3}f) follows from the  independence of all source messages ${M_i}$ and keys ${K_i}$, and (\ref{seq:ubpf_3}g) results from the monotonicity of conditional entropy.
Moreover,  
\begin{subequations}
    \begin{align}
        &H(M_S(\bF_{I_{C_j}}\otimes\bI_\ell)|M_{S\setminus\bigcup_{j'}I_{C_{j'}}},K_{S\setminus\bigcup_{j'}I_{C_{j'}}},Y_{C_j})\\
        =&H(M_S(\bF_{I_{C_j}}\otimes\bI_\ell)|M_{S\setminus I_{C_j}},K_{S\setminus I_{C_j}},Y_{C_j})\nonumber\\&+I(M_S(\bF_{I_{C_j}}\otimes\bI_\ell);M_{\bigcup_{j'\neq j}I_{C_{j'}}},K_{\bigcup_{j'\neq j}I_{C_{j'}}}|M_{S\setminus\bigcup_{j'}I_{C_{j'}}},K_{S\setminus\bigcup_{j'}I_{C_{j'}}},Y_{C_j})\\
        \overset{(\ref{seq:ubpf_3})}{=}&H(M_S(\bF_{I_{C_j}}\otimes\bI_\ell)|M_{S\setminus I_{C_j}},K_{S\setminus I_{C_j}},Y_{C_j})=0,
    \end{align}
\end{subequations}
where the last equality follows from Lemma~\ref{lem:globalcut}.
Summing over $j\in[p]$, we have
\begin{subequations}\label{seq:ubpf_4}
    \begin{align}
        &H(\{M_S(\bF_{I_{C_j}}\otimes\bI_\ell)\}_{j\in[p]}|Y_C,M_{S\setminus\bigcup_{j'}I_{C_{j'}}},K_{S\setminus\bigcup_{j'}I_{C_{j'}}})\\
        \leq &\sum_{j\in[p]}H(M_S(\bF_{I_{C_j}}\otimes\bI_\ell)|Y_{C},M_{S\setminus\bigcup_{j'}I_{C_{j'}}},K_{S\setminus\bigcup_{j'}I_{C_{j'}}})\\
        \leq &\sum_{j\in[p]}H(M_S(\bF_{I_{C_j}}\otimes\bI_\ell)|Y_{C_j},M_{S\setminus\bigcup_{j'}I_{C_{j'}}},K_{S\setminus\bigcup_{j'}I_{C_{j'}}})=0,
    \end{align}
\end{subequations}
where the two inequalities are due to the sub-additivity and monotonicity of conditional entropy, respectively.
Note that $M_S(\bP\otimes \bI_{\ell})$ is determined by $M_S(\bF_{I_{C_1}}\otimes\bI_\ell),\cdots,M_S(\bF_{I_{C_p}}\otimes\bI_\ell)$, we have 
\begin{subequations}\label{seq:ubpf_key_2}
    \begin{align}
        &H(M_S(\bP\otimes \bI_{\ell})|Y_C,M_{S\setminus\bigcup_{j'}I_{C_{j'}}},K_{S\setminus\bigcup_{j'}I_{C_{j'}}})\\
        \leq&H(M_S(\bP\otimes \bI_{\ell}),\{M_S(\bF_{I_{C_j}}\otimes\bI_\ell)\}_{j\in[p]}|Y_C,M_{S\setminus\bigcup_{j'}I_{C_{j'}}},K_{S\setminus\bigcup_{j'}I_{C_{j'}}})\\
        =&H(\{M_S(\bF_{I_{C_j}}\otimes\bI_\ell)\}_{j\in[p]}|Y_C,M_{S\setminus\bigcup_{j'}I_{C_{j'}}},K_{S\setminus\bigcup_{j'}I_{C_{j'}}})\overset{(\ref{seq:ubpf_4})}{=}0,
    \end{align}
\end{subequations}
From (\ref{eq:ubpf_key_1}) and (\ref{seq:ubpf_key_2}), we have
\begin{subequations}\label{seq:ub1-last}
    \begin{align}
        \ell t\log q=&H(M_S(\bP\otimes \bI_{\ell}))\\
        \overset{\eqref{eq:ubpf_key_1}}{=}&H(M_S(\bP\otimes \bI_{\ell})|Y_W,M_{S\setminus \cup_{j\in[p]}I_{C_j}},K_{S\setminus \cup_{j\in[p]}I_{C_j}})\\
        \overset{\eqref{seq:ubpf_key_2}}{=}&H(M_S(\bP\otimes \bI_{\ell})|Y_W,M_{S\setminus \cup_{j\in[p]}I_{C_j}},K_{S\setminus \cup_{j\in[p]}I_{C_j}})\\&-H(M_S(\bP\otimes \bI_{\ell})|Y_C,M_{S\setminus\bigcup_{j'}I_{C_{j'}}},K_{S\setminus\bigcup_{j'}I_{C_{j'}}})\\
        =&I(M_S(\bP\otimes \bI_{\ell});Y_{C\setminus W}|M_{S\setminus\bigcup_{j'}I_{C_{j'}}},K_{S\setminus\bigcup_{j'}I_{C_{j'}}},Y_W)\\
        \leq &H(Y_{C\setminus W}|M_{S\setminus\bigcup_{j'}I_{C_{j'}}},K_{S\setminus\bigcup_{j'}I_{C_{j'}}},Y_W)\\
        \leq&H(Y_{C\setminus W})=n(|C|-|W|)\log q,
    \end{align}
\end{subequations}
where (\ref{seq:ub1-last}a) follows from that $M_S(\bP\otimes \bI_{\ell})$ is uniformly distributed over $\mathbb{F}_q^{t\ell}$,
which implies that 
\[\frac{\ell}{n}\leq\frac{|C|-|W|}{t}.\]
\end{pf}
\begin{figure}
    \centering
    \includegraphics[width=0.5\linewidth]{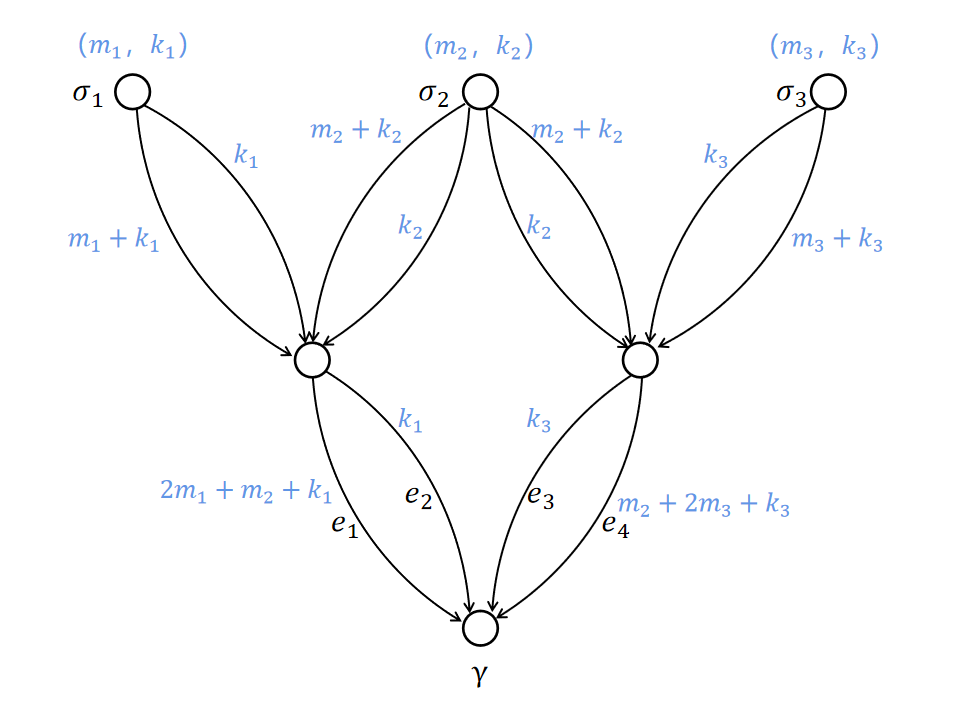}
    \caption{A secure network code for $(\cN,f,g,r)$ with $f=m_1+m_2+m_3,g=m_1+m_3$ (linear functions over $\mathbb{F}_3$) and $r=2$.}
    \label{fig:ML_necessary}
\end{figure}

\begin{figure}
    \centering
    \includegraphics[width=0.5\linewidth]{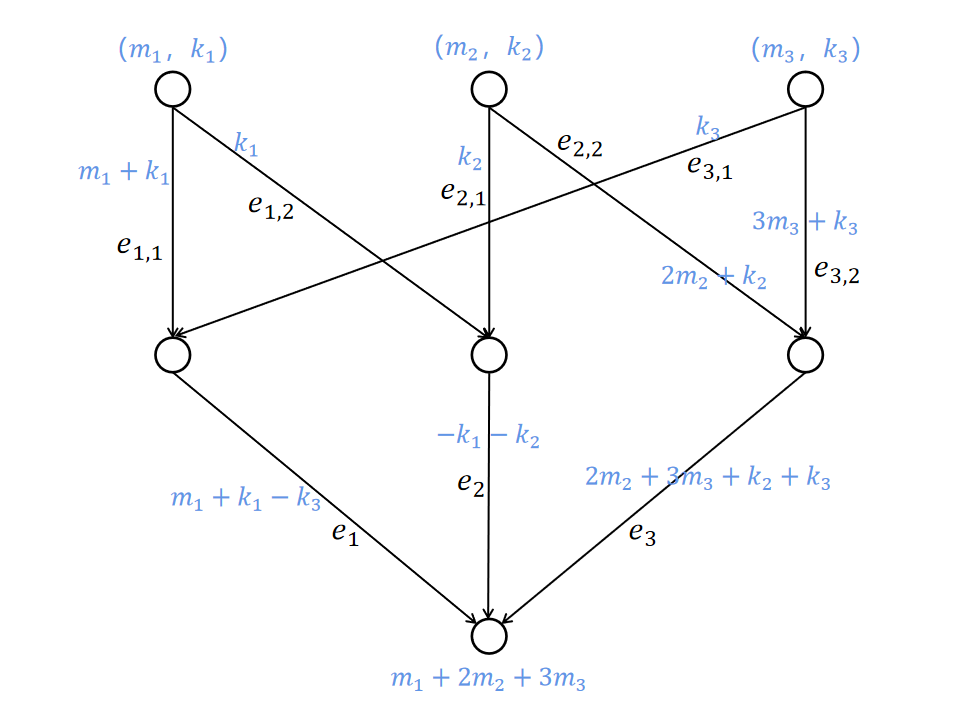}
    \caption{A secure network code for $(\cN,f,g,r)$ with $f=m_1+2m_2+3m_3,g=m_1+m_2+m_3$ (linear functions over $\mathbb{F}_5$) and $r=5$.}
    \label{fig:conjecture}
\end{figure}

\begin{rmk}
    The proof above relies on two key equations (\ref{eq:ubpf_key_1}) and (\ref{seq:ubpf_key_2}). For (\ref{seq:ubpf_key_2}), we aim to show that, after conditioning on certain source messages and keys, the messages $Y_C$ on the cut determine $M_S(\mathbf{P}\otimes \bI_\ell)$. Lemma~\ref{lem:globalcut} states that given $M_{S\setminus I_C}$ and $K_{S\setminus I_C}$, the value $M_S(\mathbf{F}_{I_C}\otimes \bI_\ell)$ can be recovered from $Y_C$. However, this may still be insufficient for recovering $M_S(\mathbf{P}\otimes I_\ell)$. For example, consider the network in Fig.~\ref{fig:ML_necessary} with the cut set $C=\{e_1,e_2,e_3,e_4\}$, $f=m_1+m_2+m_3$, and $g=m_1+m_3$. 
    Here $\mathbf{P}^T=\mathbf{G}^T=\begin{bmatrix}
        1&0&1
    \end{bmatrix}$, and $I_C=S$.
    From Lemma~\ref{lem:globalcut}, $Y_C$ completely determines $M_S(\mathbf{F}_{I_C}\otimes \bI_\ell)$, but it provides no information about $m_1+m_3$, i.e., $H(M_S(\mathbf{P}\otimes \bI_\ell)\mid Y_C)\neq 0$.
    Therefore, we need to add $M_L,K_L$ to the condition. 
    Furthermore,  to ensure that the same conditioning does NOT allow $Y_W$ to reveal $M_S(\mathbf{P}\otimes I_\ell)$, we require $D_W\subseteq\bigcup_{i\in[p]}I_{C_i}$. 
    
    Another essential observation is that the derivation of (\ref{seq:ubpf_key_2}) relies on bounding the entropy by terms of the form   $H(M_S(\bF_{I_{C_j}}\otimes\bI_\ell)|Y_{C_j},M_{S\setminus\bigcup_{j'}I_{C_{j'}}},K_{S\setminus\bigcup_{j'}I_{C_{j'}}})$. This step motivates the requirement that $C$ admits a weak partition. However, this condition is only sufficient, not necessary.  
    As an illustration, consider the secure network code shown in Fig.~\ref{fig:conjecture}. The cut set $C=\{e_1,e_2,e_3,e_{1,2},e_{2,2},e_{3,1}\}$ has a partition $\mathcal{P}_C=\{C_1,C_2,C_3\}$, where $C_1=\{e_1,e_{1,2}\},C_2=\{e_2,e_{2,2}\}$ and $C_{3}=\{e_3,e_{3,1}\}$. This is not a weak partition, but we can still recover $m_1,m_2,m_3$ ($\{M_S(\bF_{I_{C_j}}\otimes\bI_\ell)\}_{j\in[p]}$) from $Y_C$, that is, $H(\{M_S(\bF_{I_{C_j}}\otimes\bI_\ell)\}_{j\in[p]}|Y_C,M_{S\setminus\bigcup_{j'}I_{C_{j'}}},K_{S\setminus\bigcup_{j'}I_{C_{j'}}})=0$ also holds. Whether (\ref{seq:ubpf_key_2}) holds for arbitrary partitions (each component is a cut) in general networks remains unknown.
\end{rmk}

Note that when the target function $f$ is the algebraic sum, and the security function $g$ is the identity function or algebraic sum, then the upper bound in Theorem~\ref{thm:upperbound1} reduces to the bounds in \cite{2024Guangsourcesecure} and \cite{2025Guangfunctionsecure}. The result is summarized as follows.
\begin{cor}\label{cor:ub1}
    \begin{enumerate}
        \item If the target function $f$ is algebraic sum and the security function $g$ is the identity function, then 
        \[\widehat{\cC}(\cN,f,g,r)\leq\min_{(C,W):D_W\subseteq I_C}(|C|-|W|).\]
        \item If both the target function $f$ and the security function $g$ are algebraic sum, then 
        \[\widehat{\cC}(\cN,f,g,r)\leq\min_{\substack{(C,W):I_C\setminus D_W\neq \emptyset\\\textup{or }I_C=D_W=S}}(|C|-|W|).\]
    \end{enumerate}
\end{cor}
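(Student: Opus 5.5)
Corollary~\ref{cor:ub1} will be obtained by specializing Theorem~\ref{thm:upperbound1}, always using the trivial weak partition $\mathcal{P}_C=\{C\}$ (so $p=1$, $C_1=C$, $L=\emptyset$). The plan is to show that, for each of the two index sets in the corollary, every pair $(C,W)$ in it is valid and has $t_{C,f,g}=1$. The theorem then gives the bound $|C|-|W|$ for each such pair. Pairs with $W\not\subseteq C$ can be dropped without loss, since the quantity being minimized only ranges over pairs with $W\subseteq C$, as in Theorem~\ref{thm:ub-source-secure}. Here $f$ is the algebraic sum, so $\bF=\mathbf{1}\in\F_q^{s\times 1}$, and $\bF_{I_C}$ is the indicator column of $I_C$.

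\emph{Case 1: $g$ is the identity.} Then $\bG=\bI_s$ and $\langle\bG\rangle=\F_q^s$. Hence $\langle\bF_{I_C}\rangle\cap\langle\bG\rangle=\langle\bF_{I_C}\rangle$. This space is one-dimensional provided $I_C\neq\emptyset$, and $I_C\neq\emptyset$ holds because $C\in\Lambda(\cN)$ separates $\gamma$ from some source. So condition 1 of validity holds with $t_{C,f,g}=1$. Condition 2 reads $W\subseteq C$ and $D_W\subseteq I_C$, which is exactly the index set of the corollary. Theorem~\ref{thm:upperbound1} then yields the first bound.

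\emph{Case 2: $g=f$ is the algebraic sum.} Here $\bG=\mathbf{1}$, and $\langle\bF_{I_C}\rangle\cap\langle\mathbf{1}\rangle\neq\{\Zero\}$ holds iff the indicator of $I_C$ is a nonzero multiple of $\mathbf{1}$, that is, iff $I_C=S$. In that case $t=1$. So with the trivial partition the valid pairs are those with $I_C=S$ and $D_W\subseteq S$, the latter being automatic.

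The main obstacle is the other branch of the index set, namely pairs with $I_C\setminus D_W\neq\emptyset$ but $I_C\neq S$. For these the trivial partition fails, and I would not try to force a weak partition. Instead I would reduce to the previous subcase by modifying the cut. Pick $\sigma\in I_C\setminus D_W$, and set $C'=C\cup\bigcup_{\sigma_i\in S\setminus I_C}\Out(\sigma_i)$. Then $I_{C'}=S$ and $W\subseteq C'$, so the bound from the $I_C=S$ subcase gives $|C'|-|W|$, which is too weak. To do better I would instead mimic the proof of Theorem~\ref{thm:upperbound1} directly, conditioning on $(M_{S\setminus I_C},K_{S\setminus I_C})$. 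By Lemma~\ref{lem:globalcut}, $Y_C$ then determines $\sum_{i\in I_C}M_i$, and hence determines $\sum_i M_i$ given the conditioning. Since $\sigma\notin D_W$, the message $M_\sigma$ is a uniform one-time pad independent of $(Y_W,M_{S\setminus I_C},K_{S\setminus I_C})$. Therefore $H\bigl(\sum_i M_i\mid Y_W,M_{S\setminus I_C},K_{S\setminus I_C}\bigr)=\ell\log q$. The chain (\ref{seq:ub1-last}) then goes through verbatim with $\bP=\mathbf{1}$ and gives $\ell\le n(|C|-|W|)$. This argument does not use the function security condition at all in that branch, only that $\sigma\notin D_W$. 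The remaining check is that the subcase $I_C=D_W=S$ is covered by the valid-pair argument above, which it is.
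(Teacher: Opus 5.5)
Your proposal is correct. Part 1 and the pairs with $I_C=S$ in part 2 are handled exactly as in the paper: trivial weak partition, $t_{C,f,g}=1$, then Theorem~\ref{thm:upperbound1}. You depart from the paper only in the branch $I_C\setminus D_W\neq\emptyset$, $I_C\neq S$.

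The paper treats that branch combinatorially. No source in $I_C\setminus D_W$ reaches $W$, yet every path from it to $\gamma$ meets $C$, so $C\setminus W$ is itself a cut in $\Lambda(\cN)$. The non-secure cut-set bound of Theorem~\ref{thm:csbnd}, with $\Rank(\bF_{I_{C\setminus W}})=1$ for the sum, then gives $\widehat{\cC}\le|C\setminus W|$.

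You instead rerun the chain (\ref{seq:ub1-last}) on $\sum_i M_i$. The security-derived equality (\ref{eq:ubpf_key_1}) is replaced by a one-time-pad argument on $M_\sigma$, and Lemma~\ref{lem:globalcut} handles the decoding side. This works, with one precision: the pad step needs $M_\sigma$ to be independent of $Y_W$, $M_{S\setminus I_C}$, $K_{S\setminus I_C}$ and $\sum_{i\neq\sigma}M_i$ jointly, not just of the conditioning. That holds because all of these are functions of the other sources' messages and keys, using $\sigma\in I_C$ and $\sigma\notin D_W$.

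In effect you reprove the cut-set bound for the cut $C\setminus W$ inside the paper's entropy framework. The paper's route is shorter and makes the combinatorial reason transparent.

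Yours is self-contained and applies as written to arbitrary, possibly nonlinear, codes. Theorem~\ref{thm:csbnd} is stated in the paper only for linear codes, while $\widehat{\cC}$ ranges over all codes, so the paper tacitly relies on the general cut-set bound from the cited works. Your route also treats both branches of part 2 with one chain: the hiding of the sum comes from the security constraint when $I_C=S$, and from the pad $M_\sigma$ otherwise.

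Both arguments need $W\subseteq C$ to obtain $|C|-|W|$ (via $|C\setminus W|=|C|-|W|$ in the paper's route, and via $Y_C=(Y_W,Y_{C\setminus W})$ in your chain). You state this restriction explicitly; the paper leaves it implicit.
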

\begin{pf}
    First, suppose $f$ is the algebraic sum and  $g$ is the identity function. For any cut set $C\in\Lambda(\cN)$, consider the trivial weak partition $C=C$. Then $M_S(\bP\otimes \bI_{\ell})=M_S(\bF_{I_C}\otimes\bI_\ell)=\sum_{i\in I_C} M_i$, and the pair $(C,W)$ is valid precisely when $D_W\subseteq I_C$. Applying Theorem~\ref{thm:upperbound1}, we obtain
    \[\widehat{\cC}(\cN,f,g,r)\leq\min_{(C,W):D_W\subseteq I_C}(|C|-|W|).\]

     Next, assume both $f$ and $g$ are algebraic sum. If $(C,W)$ is valid, then we must have $I_C=S$. Consequently, from Theorem~\ref{thm:upperbound1}, we directly get
     \[\widehat{\cC}(\cN,f,g,r)\leq\min_{(C,W):I_C=S}(|C|-|W|).\]
     When $I_C\setminus D_W\neq \emptyset$ and $I_C\neq S$, the edge subset $C\setminus W$ is a cut set that separates $\gamma$ from the source nodes $I_{C}\setminus D_W$. This is because the source nodes in $I_C\setminus D_W$ have no path to the edges in $W$, but all the paths from them to $\gamma$ pass through $C$. Hence, all such paths contain an edge of $C\setminus W$. Using the upper bound for non‑secure network computing capacity in Theorem~\ref{thm:csbnd}, we have
     \[\widehat{\cC}(\cN,f,g,r)\leq|C\setminus W|.\]
     Combining the two cases, we can obtain 
     \[\widehat{\cC}(\cN,f,g,r)\leq\min_{\substack{(C,W):I_C\setminus D_W\neq \emptyset\\\textup{or }I_C=D_W=S}}(|C|-|W|).\]
\end{pf}

When the target function $f$ is the identity function and the security function $g$ is vector linear, we can combine the upper bounds given in Theorem~\ref{thm:upperbound1} and Theorem~\ref{thm:csbnd} to obtain a matching upper bound from the perspective of secure network coding.
\begin{cor}
    If the target function $f$ is the identity function and the security function $g$ is vector linear, then 
        \[\widehat{\cC}(\cN,f,g,r)\leq\min_{C:I_C=S}\left\{\frac{|C|}{s},\frac{|C|-r}{r_g}\right\}.\]
\end{cor}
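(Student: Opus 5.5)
The bound is the minimum of two separate bounds: the first term comes from Theorem~\ref{thm:csbnd} (a non-secure bound), and the second from Theorem~\ref{thm:upperbound1} with the trivial weak partition. Throughout, $f(\bx)=\bx\cdot\bI_s$, so $\bF=\bI_s$ and $\bF_A$ is the diagonal $0/1$ matrix supported on $A$. In particular $\Rank(\bF_{I_C})=|I_C|$ and $\langle\bF_{I_C}\rangle=\Span\{\mathbf{e}_i:\sigma_i\in I_C\}$.

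\emph{First term.} Any admissible secure code is in particular an admissible code computing $f$; we simply ignore the security condition and fix the keys. Hence the secure rate is bounded by the non-secure computing rate. For a cut $C$ with $I_C=S$ we have $\Rank(\bF_{I_C})=s$, so the cut-set argument behind Theorem~\ref{thm:csbnd} gives $\ell/n\le |C|/s$. This argument is information-theoretic (it is the bound of \cite{2018HTYG}), so it is not restricted to linear codes. If one insists on citing only the linear statement, the same inequality follows directly from Lemma~\ref{lem:globalcut}: with $I_C=S$, the conditioning on $M_{S\setminus I_C},K_{S\setminus I_C}$ is empty, so $s\ell\log q=H(M_S)\le H(Y_C)\le n|C|\log q$.

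\emph{Second term.} Take a cut $C$ with $I_C=S$, the trivial weak partition $\cP_C=\{C\}$, and any $W\subseteq C$ with $|W|=r$ (assuming $|C|\ge r$; otherwise the second term is vacuous or negative, and that case needs a remark). The partition $\{C\}$ is weak because $I_C=S\neq\varnothing$ and $D_C\subseteq S=I_C$. Then $\langle\bF_{I_C}\rangle=\F_q^{s}$, so $\langle\bF_{I_C}\rangle\cap\langle\bG\rangle=\langle\bG\rangle$, which is nonzero since $g$ is non-constant. Its dimension is $t_{C,f,g}=\Rank(\bG)=r_g$, where we assume $\bG$ has full column rank as in the setup of \cite{bai2023multiple}. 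Condition 2 of validity holds because $W\subseteq C$ and $D_W\subseteq S=I_C$. Theorem~\ref{thm:upperbound1} therefore gives $\widehat{\cC}\le(|C|-r)/r_g$.

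Taking the minimum of the two bounds over all cuts $C$ with $I_C=S$ yields the statement. The only delicate points are, first, that Theorem~\ref{thm:csbnd} is phrased for linear codes, which the direct entropy argument from Lemma~\ref{lem:globalcut} handles, and second, the bookkeeping that requires a wiretap set of size exactly $r$ to fit inside $C$.
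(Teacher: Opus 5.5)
Your proposal is correct and follows the paper's proof. The paper likewise takes any cut $C$ with $I_C=S$, obtains $|C|/s$ from Theorem~\ref{thm:csbnd} and $(|C|-r)/r_g$ from Theorem~\ref{thm:upperbound1} with the trivial weak partition, and minimizes; your two side remarks address points the paper passes over silently, namely deriving the first term from Lemma~\ref{lem:globalcut} because Theorem~\ref{thm:csbnd} is stated only for linear codes, and noting that fitting a wiretap set of size $r$ inside $C$ requires $|C|\ge r$.
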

\begin{pf}
     Consider any cut $C$ with $I_C = S$. From Theorem~\ref{thm:csbnd} we obtain
    \[\widehat{\cC}(\cN,f,g,r)\leq\frac{|C|}{\Rank(F_{I_C})}=\frac{|C|}{s}.\]
    From Theorem~\ref{thm:upperbound1}, we have
    \[\widehat{\cC}(\cN,f,g,r)\leq\frac{|C|-r}{\dim(\langle\bG\rangle)}=\frac{|C|-r}{r_g}.\]
    Combining these two bounds, we obtain \[\widehat{\cC}(\cN,f,g,r)\leq\min_{C:I_C=S}\left\{\frac{|C|}{s},\frac{|C|-r}{r_g}\right\}.\]
    Since this holds for every cut $C$ with $I_C=S$, the inequality in the corollary follows. This bound coincides with the secure network coding bound in \eqref{eq:SNC}.
\end{pf}

We now present the second upper bound on $\widehat{\cC}(\cN,f,g,r)$. The proof adjusts the conditioning used in the entropy term of (\ref{seq:ubpf_key_2}a). 
Before presenting the formal result, we need to introduce some additional notation.
For a cut set $C\in\Lambda(\cN)$, define $\tail(C)=\{\tail(e):e\in C\}$. An edge subset $B$ is called \emph{a cut between $J_C$ and $C$} if it separates $\tail(C)$ from $J_C$, that is, after removing the edges in $B$, no path exists from any node $u\in J_C$ to any node $v\in\tail(C)$. Denote by $\cB(C)$ the collection of all such cuts.

\begin{dfn}
    For a cut set $C\in\Lambda(\cN)$, an edge subset $B\in\cB(C)$ and a wiretapped set $W\in\cW$, we say the triple $(C,B,W)$ is valid if it satisfies the following three conditions:
\begin{enumerate}
    \item There exists a weak partition $\cP_C$ of $C$, $\cP_C=\{C_1,\cdots,C_p\}$ such that $\langle \bF_{I_{C_1}}\rangle\bigoplus\cdots\bigoplus\langle \bF_{I_{C_p}}\rangle$ $\bigcap\langle \bG\rangle\neq\{\Zero\}.$
    \item $D_B=J_C$.
    \item $W\subseteq C\cup B$, and $D_W\subseteq I_{C_1}\cup I_{C_2}\cup\cdots\cup I_{C_p}\cup J_C.$
\end{enumerate}
\end{dfn}

\begin{thm}\label{thm:upperbound2}
    For the secure network function computation problem $(\cN,f,g,r)$, the secure network computing capacity\[\widehat{\cC}(\cN,f,g,r)\leq\min_{(C,B,W)\ \text{valid}}\frac{|C\cup B|-|W|}{t_{C,f,g}},\]
    where $t_{C,f,g}=\dim\left(\langle \bF_{I_{C_1}}\rangle\bigoplus\cdots\langle \bF_{I_{C_p}}\rangle\bigcap\langle \bG\rangle\right).$
\end{thm}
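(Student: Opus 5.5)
The plan is to mirror the proof of Theorem~\ref{thm:upperbound1}, but to shrink the conditioning set. In that proof we conditioned on $M_{S\setminus\cup_j I_{C_j}}$ and $K_{S\setminus\cup_j I_{C_j}}$, which includes the sources in $J_C$. Here we condition only on $M_{S\setminus(\cup_j I_{C_j}\cup J_C)}$ and $K_{S\setminus(\cup_j I_{C_j}\cup J_C)}$. We then compensate for the missing $J_C$ information by adjoining the messages $Y_B$. Write $T\eqdef S\setminus(\cup_j I_{C_j}\cup J_C)$, so that $T=(S\setminus D_C)\cup L$. Let $\bP$ be as before, with $t=t_{C,f,g}$.

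\textbf{Step 1 (security side).} Since $D_W\subseteq \cup_j I_{C_j}\cup J_C$, the message $Y_W$ is a function of sources and keys outside $T$. Hence $Y_W$ is independent of $(M_T,K_T)$. The vector $M_S(\bP\otimes\bI_\ell)$ is a function of $M_{\cup_j I_{C_j}}$, so it too is independent of $(M_T,K_T)$. Repeating the chain \eqref{eq:ubpf_1}--\eqref{eq:ubpf_key_1} verbatim with $T$ in place of $S\setminus\cup_jI_{C_j}$ gives
\[H(M_S(\bP\otimes\bI_\ell))=H(M_S(\bP\otimes\bI_\ell)\mid Y_W,M_T,K_T).\]

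\textbf{Step 2 (decodability side).} We want to show $H(M_S(\bP\otimes\bI_\ell)\mid Y_{C\cup B},M_T,K_T)=0$. The key claim is that $(Y_B,M_T,K_T)$ determines $Y_{C_j}$ as a function of $(M_{I_{C_j}},K_{I_{C_j}})$, playing the role that $(M_{J_C},K_{J_C})$ played before.

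We argue this as follows. Since $B$ separates $\tail(C)$ from $J_C$, every path from a node of $J_C$ into $C$ passes through $B$. Since $D_B=J_C$, the messages $Y_B$ depend only on $(M_{J_C},K_{J_C})$. So, by the recursive definition of global encoding functions, $Y_C$ is a function of $Y_B$ together with the messages and keys of sources outside $J_C$. Combining this with the weak partition property, $Y_{C_j}$ is a function of $(Y_B,M_{I_{C_j}},K_{I_{C_j}},M_T,K_T)$.

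Next, Lemma~\ref{lem:globalcut} gives that $M_S(\bF_{I_C}\otimes\bI_\ell)$ is determined by $Y_C$ and $(M_{S\setminus I_C},K_{S\setminus I_C})$. We redo it for each $C_j$: $M_S(\bF_{I_{C_j}}\otimes\bI_\ell)$ is determined by $Y_{C_j}$, $M_{S\setminus I_{C_j}}$ and $K_{S\setminus I_{C_j}}$. We then need to remove the conditioning on $(M_{J_C},K_{J_C})$ and on $\cup_{j'\ne j}I_{C_{j'}}$ in favor of $Y_B$. For this we replay the mutual-information argument \eqref{seq:ubpf_3}, now with $Y_B$ added to the conditioning, and sum over $j$ as in \eqref{seq:ubpf_4}.

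The main obstacle is the $J_C$ part. Conditioning on $Y_B$ instead of on $(M_{J_C},K_{J_C})$ must not destroy the recoverability from Lemma~\ref{lem:globalcut}. The route I would try first runs through a global cut: $C\cup B\cup C'$, with $C'=\cup_{i\in T}\Out(\sigma_i)$, again separates $\gamma$ from all sources. Indeed, paths from $J_C$ either avoid $C$ (impossible for reaching $\gamma$ only through $C$-free routes? no — such routes exist, which is exactly why $J_C\not\subseteq I_C$). So the honest approach is the following. Given $(Y_B,M_T,K_T)$, consider two realizations of $(M_{\cup I_{C_j}},K_{\cup I_{C_j}})$ that agree on $Y_C$ but differ in $M_S(\bP\otimes\bI_\ell)$. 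We pair them with a common realization of $(M_{J_C},K_{J_C})$ consistent with $Y_B$. Then Lemma~\ref{lem:globalcut}, applied with the full conditioning, yields a contradiction. This works because $Y_B$ is a deterministic function of $J_C$'s variables only, and these variables are independent of the rest.

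\textbf{Step 3.} Subtracting the two identities as in \eqref{seq:ub1-last} gives
\[\ell t\log q\le H(Y_{(C\cup B)\setminus W})\le n(|C\cup B|-|W|)\log q,\]
which yields the stated bound after minimizing over valid triples.
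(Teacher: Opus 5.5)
Your proof is correct, and its skeleton matches the paper's: the same conditioning set $T=(S\setminus D_C)\cup L$, the same security step giving $H(M_S(\bP\otimes\bI_\ell))=H(M_S(\bP\otimes\bI_\ell)\mid Y_W,M_T,K_T)$, and the same closing estimate using $W\subseteq C\cup B$.

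The genuine difference is how you trade the conditioning on $(M_{J_C},K_{J_C})$ for $Y_B$. The paper stays inside the entropy calculus. It writes $Y_{C_j}$ as a function of $(M_{I_{C_j}},K_{I_{C_j}},M_L,K_L,Y_B)$ and introduces an auxiliary variable $\heta'_{C_j}(M_{I_{C_j}},K_{I_{C_j}})$. Via \eqref{eq:ubpf_7}--\eqref{eq:ubpf_9} it argues that this variable is recoverable from $(Y_{C_j},Y_B,M_L,K_L)$ and makes $J_C$ irrelevant. It then chains conditional entropies back to Lemma~\ref{lem:globalcut} for each $C_j$.

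You instead use only three functional facts:
\begin{itemize}
\item $Y_B$ depends only on $(M_{J_C},K_{J_C})$, since $D_B=J_C$;
\item $Y_C$ is a function of $(Y_B,M_{S\setminus J_C},K_{S\setminus J_C})$, since $B$ separates $\tail(C)$ from $J_C$;
\item every realization of $(M_S,K_S)$ has positive probability, so zero conditional entropy means functional dependence on every realization.
\end{itemize}
Hence overwriting the $J_C$-coordinates of one realization by those of the other preserves $Y_B$, $Y_C$ and $M_S(\bP\otimes\bI_\ell)$. This reduces you to the stronger conditioning already handled in the proof of Theorem~\ref{thm:upperbound1}.

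Your route is shorter. It also needs no splitting of $Y_{C_j}$ into a $(Y_B,M_L,K_L)$-part and an $I_{C_j}$-part recoverable from $(Y_{C_j},Y_B,M_L,K_L)$. Identities \eqref{eq:ubpf_7} and \eqref{eq:ubpf_6} presuppose such a splitting, which is transparent for linear codes but not for arbitrary zero-error codes. The paper's route, in exchange, keeps the same entropy-identity format as the proof of Theorem~\ref{thm:upperbound1}.

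Three things to tighten when you write it out.

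\textbf{What the swap contradicts.} It must contradict \eqref{seq:ubpf_key_2}, i.e.\ $H(M_S(\bP\otimes\bI_\ell)\mid Y_C,M_{S\setminus\bigcup_jI_{C_j}},K_{S\setminus\bigcup_jI_{C_j}})=0$. That identity comes from Lemma~\ref{lem:globalcut} applied to each part $C_j$, together with \eqref{seq:ubpf_3}--\eqref{seq:ubpf_4}. Lemma~\ref{lem:globalcut} applied to $C$ itself only recovers $M_S(\bF_{I_C}\otimes\bI_\ell)$. In general that does not determine $M_S(\bP\otimes\bI_\ell)$, which is exactly the point of the remark after Theorem~\ref{thm:upperbound1}.

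\textbf{A more direct version.} You can run the swap separately for each $C_j$, overwriting every coordinate outside $I_{C_j}$. Since $Y_{C_j}$ is a function of $(Y_B,M_{I_{C_j}},K_{I_{C_j}},M_L,K_L)$, this preserves $Y_{C_j}$. It then contradicts Lemma~\ref{lem:globalcut} for $C_j$ directly, with no need for \eqref{seq:ubpf_3}.

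\textbf{The global-cut sentence.} Delete it: $C\cup B\cup C'$ is in general not a global cut, since paths from $J_C$ to $\gamma$ may avoid both $C$ and $B$.
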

\begin{pf}
    Suppose that there exists a secure $(\ell,n)$ network code. Consider a valid triple $(C,B,W)$ and let $C = C_1 \cup \cdots \cup C_p$ be a weak partition, with $I_C = I_{C_1} \cup \cdots \cup I_{C_p} \cup L$. Define $\langle \bP\rangle\eqdef\langle \bF_{I_{C_1}}\rangle\bigoplus\cdots\bigoplus\langle \bF_{I_{C_p}}\rangle$ $\bigcap\langle \bG\rangle$, and let $t=\dim(\langle P))$. Then $M_S(\bP\otimes \bI_{\ell})$ is a function of $M_S(\bF_{I_{C_1}}\otimes\bI_\ell),\cdots,M_S(\bF_{I_{C_p}}\otimes\bI_\ell)$, which implies that $M_S(\bP\otimes \bI_{\ell})$ is determined by $M_{\cup_{j\in[p]} I_{C_j}}$. Therefore,
    \begin{subequations}
        \begin{align}
            &I(M_S(\bP\otimes \bI_{\ell});M_{S\setminus D_C},M_L,K_{S\setminus D_C},K_L)\\
            \leq&I(M_S(\bP\otimes \bI_{\ell}),M_{\cup_{j\in[p]} I_{C_j}};M_{S\setminus D_C},M_L,K_{S\setminus D_C},K_L)\\
            =& I(M_{\cup_{j\in[p]} I_{C_j}};M_{S\setminus D_C},M_L,K_{S\setminus D_C},K_L)=0.
        \end{align}
    \end{subequations}
    Consider the entropy $H(M_S(\bP\otimes \bI_{\ell}),Y_W)$,
\begin{subequations}\label{seq:ubpf_5}
    \begin{align}
        &H(M_S(\bP\otimes \bI_{\ell}))+H(Y_W)\\=&H(M_S(\bP\otimes \bI_{\ell}),Y_W)\\
        =&H(M_S(\bP\otimes \bI_{\ell}),Y_W|M_{S\setminus D_C},M_L,K_{S\setminus D_C},K_L)\\
        =&H(M_S(\bP\otimes \bI_{\ell})|M_{S\setminus D_C},M_L,K_{S\setminus D_C},K_L,Y_W)+H(Y_W|M_{S\setminus D_C},M_L,K_{S\setminus D_C},K_L)\\
        =&H(M_S(\bP\otimes \bI_{\ell})|M_{S\setminus D_C},M_L,K_{S\setminus D_C},K_L,Y_W)+H(Y_W),
    \end{align}
\end{subequations}
where (\ref{seq:ubpf_5}b) is due to the security constraint, (\ref{seq:ubpf_5}c) is because $D_W\subseteq I_{C_1}\cup I_{C_2}\cdots\cup I_{C_p}\cup J_C=D_C\setminus L$ and both $M_S(\bP\otimes \bI_{\ell}),Y_W$ are independent of $M_{S\setminus D_C},M_L,K_{S\setminus D_C},K_L$.
From (\ref{seq:ubpf_5}), we have
\begin{equation}\label{eq:ubpf_key_3}
    H(M_S(\bP\otimes \bI_{\ell}))=H(M_S(\bP\otimes \bI_{\ell})|M_{S\setminus D_C},M_L,K_{S\setminus D_C},K_L,Y_W).
\end{equation}
For the cut $C_j,j\in[p]$, the messages on $C_j$ depend on the messages and keys generated by $I_{C_j},L$ and $J_C$. Since $B$ is a cut of $J_C$ and $C$, we obtain
\begin{equation}
    Y_{C_j}=\widehat{\eta}_{C_j}(M_S,K_S)=\widehat{\eta}_{C_j}(M_{I_{C_j}},K_{I_{C_j}},M_L,K_L,Y_B),
\end{equation}
where $\widehat{\eta}_{C_j}$ is the global encoding function of the cut set $C_j$.
Let $\widehat{\eta}'_{C_j}(M_{I_{C_j}},K_{I_{C_j}})$ be the symbols in $M_{I_{C_j}}$ and $K_{I_{C_j}}$, which appear in the encoding of $Y_{C_j}$. Consequently, we have 
\begin{equation}\label{eq:ubpf_7}
    H(\widehat{\eta}'_{C_j}(M_{I_{C_j}},K_{I_{C_j}}))=H(Y_{C_j}|M_L,K_L,Y_B),
\end{equation}
\begin{equation}\label{eq:ubpf_6}
    H(Y_{C_j}|M_L,K_L,Y_B,\heta'_{C_j}(M_{I_{C_j}},K_{I_{C_j}}))=0.
\end{equation}
Since $D_B=J_C$, $Y_B$ is a function of $M_{J_C},K_{J_C}$. Hence, we have
\begin{subequations}\label{seq:ubpf_10}
    \begin{align}
        &H(Y_{C_j}|\heta'_{C_j}(M_{I_{C_j}},K_{I_{C_j}}),M_L,K_L,M_{J_C},K_{J_C})\\
        = &H(Y_{C_j}|\heta'_{C_j}(M_{I_{C_j}},K_{I_{C_j}}),M_L,K_L,M_{J_C},K_{J_C},Y_B)\\
        \leq&H(Y_{C_j}|\heta'_{C_j}(M_{I_{C_j}},K_{I_{C_j}}),M_L,K_L,Y_B)\overset{(\ref{eq:ubpf_6})}{=}0.
    \end{align}
\end{subequations}
Moreover, we show that $\heta'_{C_j}(M_{I_{C_j}},K_{I_{C_j}})$ can be determined by $M_L,K_L,Y_B$ and $Y_{C_j}$. Specifically, 
\begin{subequations}\label{seq:up2pf-r1}
    \begin{align}
        H(\heta'_{C_j}(M_{I_{C_j}},K_{I_{C_j}}))\geq &H(\heta'_{C_j}(M_{I_{C_j}},K_{I_{C_j}})|M_L,K_L,Y_B)\\
        =&H(\heta'_{C_j}(M_{I_{C_j}},K_{I_{C_j}}),M_L,K_L,Y_B)-H(M_L,K_L,Y_B)\\
        \overset{(\ref{eq:ubpf_6})}{=}&H(\heta'_{C_j}(M_{I_{C_j}},K_{I_{C_j}}),M_L,K_L,Y_B,Y_{C_j})-H(M_L,K_L,Y_B)\\
        =&H(\heta'_{C_j}(M_{I_{C_j}},K_{I_{C_j}}),M_L,K_L,Y_B,Y_{C_j})-H(M_L,K_L,Y_B,Y_{C_j})\nonumber\\&+H(M_L,K_L,Y_B,Y_{C_j})-H(M_L,K_L,Y_B)\\
        =&H(\heta'_{C_j}(M_{I_{C_j}},K_{I_{C_j}})|M_L,K_L,Y_B,Y_{C_j})-H(Y_{C_j}|M_L,K_L,Y_B)\\
        \overset{(\ref{eq:ubpf_7})}{=}&H(\heta'_{C_j}(M_{I_{C_j}},K_{I_{C_j}})|M_L,K_L,Y_B,Y_{C_j})+H(\heta'_{C_j}(M_{I_{C_j}},K_{I_{C_j}})),
    \end{align}
\end{subequations}
where (\ref{seq:up2pf-r1}b) and (\ref{seq:up2pf-r1}e) follow from the definition of conditional entropy.
Thus, we have
\begin{equation}\label{eq:ubpf_gc'}
    H(\heta'_{C_j}(M_{I_{C_j}},K_{I_{C_j}})|M_L,K_L,Y_B,Y_{C_j})=0.
\end{equation}
From the independence of source messages and random keys, we have
\begin{subequations}\label{seq:ubpf_8}
    \begin{align}
        H(M_{J_c},K_{J_c})=&H(M_{J_c},K_{J_c}|\heta'_{C_j}(M_{I_{C_j}},K_{I_{C_j}}),M_{S\setminus D_C},K_{S\setminus D_C},M_L,K_L)\\
        =&H(M_{J_c},K_{J_c}|M_S(\bF_{I_{C_j}}\otimes\bI_\ell),\heta'_{C_j}(M_{I_{C_j}},K_{I_{C_j}}),M_{S\setminus D_C},K_{S\setminus D_C},M_L,K_L),
    \end{align}
\end{subequations}
where the above equalities hold since $\heta'_{C_j}(M_{I_{C_j}},K_{I_{C_j}})$ and $M_S(\bF_{I_{C_j}}\otimes\bI_\ell)$ are functions of $M_{I_{C_j}}$ and $K_{I_{C_j}}$.
Note that the conditional entropy $H(M_{J_c},K_{J_c},M_S(\bF_{I_{C_j}}\otimes\bI_\ell)|\heta'_{C_j}(M_{I_{C_j}},K_{I_{C_j}}),M_{S\setminus D_C},K_{S\setminus D_C},M_L,K_L)$ can be decomposed in two different ways as follows,
\begin{subequations}
    \begin{align}
        &H(M_{J_c},K_{J_c}|\heta'_{C_j}(M_{I_{C_j}},K_{I_{C_j}}),M_{S\setminus D_C},K_{S\setminus D_C},M_L,K_L)\nonumber\\&+H(M_S(\bF_{I_{C_j}}\otimes\bI_\ell)|\heta'_{C_j}(M_{I_{C_j}},K_{I_{C_j}}),M_{S\setminus D_C},K_{S\setminus D_C},M_L,K_L,M_{J_c},K_{J_c})\\
        =&H(M_{J_c},K_{J_c},M_S(\bF_{I_{C_j}}\otimes\bI_\ell)|\heta'_{C_j}(M_{I_{C_j}},K_{I_{C_j}}),M_{S\setminus D_C},K_{S\setminus D_C},M_L,K_L)\\
        =&H(M_S(\bF_{I_{C_j}}\otimes\bI_\ell)|\heta'_{C_j}(M_{I_{C_j}},K_{I_{C_j}}),M_{S\setminus D_C},K_{S\setminus D_C},M_L,K_L)\nonumber\\&+H(M_{J_c},K_{J_c}|M_S(\bF_{I_{C_j}}\otimes\bI_\ell),\heta'_{C_j}(M_{I_{C_j}},K_{I_{C_j}}),M_{S\setminus D_C},K_{S\setminus D_C},M_L,K_L).
    \end{align}
\end{subequations}
Combining with (\ref{seq:ubpf_8}), we have
\begin{align}\label{eq:ubpf_9}
    &H(M_S(\bF_{I_{C_j}}\otimes\bI_\ell)|\heta'_{C_j}(M_{I_{C_j}},K_{I_{C_j}}),M_{S\setminus D_C},K_{S\setminus D_C},M_L,K_L,M_{J_c},K_{J_c})\nonumber\\=&H(M_S(\bF_{I_{C_j}}\otimes\bI_\ell)|\heta'_{C_j}(M_{I_{C_j}},K_{I_{C_j}}),M_{S\setminus D_C},K_{S\setminus D_C},M_L,K_L).
\end{align}
Furthermore, for each $j\in[p]$, 
\begin{subequations}\label{seq:ubpf_12}
    \begin{align}
        &H(M_S(\bF_{I_{C_j}}\otimes\bI_\ell)|Y_{C_j},Y_B,M_{S\setminus D_C},M_L,K_{S\setminus D_C},K_L)\\
        \overset{(\ref{eq:ubpf_gc'})}{=}&H(M_S(\bF_{I_{C_j}}\otimes\bI_\ell)|Y_{C_j},Y_B,M_{S\setminus D_C},M_L,K_{S\setminus D_C},K_L,\heta'_{C_j}(M_{I_{C_j}},K_{I_{C_j}}))\\
        \leq&H(M_S(\bF_{I_{C_j}}\otimes\bI_\ell)|M_{S\setminus D_C},M_L,K_{S\setminus D_C},K_L,\heta'_{C_j}(M_{I_{C_j}},K_{I_{C_j}}))\\
        \overset{(\ref{eq:ubpf_9})}{=}&H(M_S(\bF_{I_{C_j}}\otimes\bI_\ell)|M_{S\setminus D_C},M_L,K_{S\setminus D_C},K_L,\heta'_{C_j}(M_{I_{C_j}},K_{I_{C_j}}),M_{J_c},K_{J_c})\\
        \overset{(\ref{seq:ubpf_10})}{=}&H(M_S(\bF_{I_{C_j}}\otimes\bI_\ell)|M_{S\setminus D_C},M_L,K_{S\setminus D_C},K_L,\heta'_{C_j}(M_{I_{C_j}},K_{I_{C_j}}),M_{J_c},K_{J_c},Y_{C_j})\\
        \leq&H(M_S(\bF_{I_{C_j}}\otimes\bI_\ell)|M_{S\setminus D_C},M_L,K_{S\setminus D_C},K_L,M_{J_c},K_{J_c},Y_{C_j})\\
        =&H(M_S(\bF_{I_{C_j}}\otimes\bI_\ell)|M_{S\setminus \bigcup_{j'}I_{C_{j'}}},K_{S\setminus \bigcup_{j'}I_{C_{j'}}},Y_{C_j})\\
        =&H(M_S(\bF_{I_{C_j}}\otimes\bI_\ell)|M_{S\setminus I_{C_j}},K_{S\setminus I_{C_j}},Y_{C_j})\nonumber\\&+I(M_S(\bF_{I_{C_j}}\otimes\bI_\ell);M_{\bigcup_{j'\neq j}I_{C_{j'}}},K_{\bigcup_{j'\neq j}I_{C_{j'}}}|M_{S\setminus\bigcup_{j'}I_{C_{j'}}},K_{S\setminus\bigcup_{j'}I_{C_{j'}}},Y_{C_j})\\
        \overset{(\ref{seq:ubpf_3})}{=}&H(M_S(\bF_{I_{C_j}}\otimes\bI_\ell)|M_{S\setminus I_{C_j}},K_{S\setminus I_{C_j}},Y_{C_j})=0,
    \end{align}
\end{subequations}
where (\ref{seq:ubpf_12}g) is due to $(S\setminus D_C)\cup L\cup J_C=S\setminus \bigcup_{j'}I_{C_{j'}}$.
Consequently, for each $j\in[p]$ we have
\begin{align}\label{eq:ubpf_11}
    &H(M_S(\bF_{I_{C_j}}\otimes\bI_\ell)|Y_{C},Y_B,M_{S\setminus D_C},M_L,K_{S\setminus D_C},K_L)\\\leq& H(M_S(\bF_{I_{C_j}}\otimes\bI_\ell)|Y_{C_j},Y_B,M_{S\setminus D_C},M_L,K_{S\setminus D_C},K_L)=0.
\end{align}
Therefore, we have
\begin{subequations}\label{seq:ubpf_key_4}
    \begin{align}
        &H(M_S(\bP\otimes \bI_{\ell})|Y_{C},Y_B,M_{S\setminus D_C},M_L,K_{S\setminus D_C},K_L)\\
        \leq&H(M_S(\bP\otimes \bI_{\ell}),\{M_S(\bF_{I_{C_j}}\otimes\bI_\ell)\}_{j\in[p]}|Y_{C},Y_B,M_{S\setminus D_C},M_L,K_{S\setminus D_C},K_L)\\
        = &H(\{M_S(\bF_{I_{C_j}}\otimes\bI_\ell)\}_{j\in[p]}|Y_{C},Y_B,M_{S\setminus D_C},M_L,K_{S\setminus D_C},K_L)\\
        \leq&\sum_{j\in[p]}H(M_S(\bF_{I_{C_j}}\otimes\bI_\ell)|Y_{C},Y_B,M_{S\setminus D_C},M_L,K_{S\setminus D_C},K_L)\overset{(\ref{eq:ubpf_11})}{=}0,
    \end{align}
\end{subequations}
where (\ref{seq:ubpf_key_4}c) is because $M_S(\bP\otimes \bI_{\ell})$ is determined by $F_{I_{C_1}}M_S,\cdots,F_{I_{C_p}}M_S$.
Combining (\ref{eq:ubpf_key_3}) and (\ref{seq:ubpf_key_4}), we have
\begin{subequations}
    \begin{align}
        \ell t\log q=&H(M_S(\bP\otimes \bI_{\ell}))\overset{(\ref{eq:ubpf_key_3})}{=}H(M_S(\bP\otimes \bI_{\ell})|M_{S\setminus D_C},M_L,K_{S\setminus D_C},K_L,Y_W)\\
        \overset{(\ref{seq:ubpf_key_4})}{=}&H(M_S(\bP\otimes \bI_{\ell})|M_{S\setminus D_C},M_L,K_{S\setminus D_C},K_L,Y_W)\\&-H(M_S(\bP\otimes \bI_{\ell})|Y_{C},Y_B,M_{S\setminus D_C},M_L,K_{S\setminus D_C},K_L)\\
        =&I(M_S(\bP\otimes \bI_{\ell});Y_{C\cup B\setminus W}|M_{S\setminus D_C},M_L,K_{S\setminus D_C},K_L,Y_W)\\
        \leq &H(Y_{C\cup B\setminus W}|M_{S\setminus D_C},M_L,K_{S\setminus D_C},K_L,Y_W)\\
        \leq&H(Y_{C\cup B\setminus W})=n(|C\cup B|-|W|)\log q,
    \end{align}
\end{subequations}
which implies that 
\[\frac{\ell}{n}\leq\frac{|C\cup B|-|W|}{t}.\]
\end{pf}

Note that the main difference between Theorem~\ref{thm:upperbound1} and Theorem~\ref{thm:upperbound2} lies in the conditioning set of the entropy term (\ref{seq:ubpf_key_2}a), in detail, $S\setminus I_C$ is replaced by $S\setminus D_C$. To ensure the inequality still holds, we add an edge subset $B$, which contains all the symbols in $Y_C$ from $J_C$. Interestingly, this adjustment can provide a tighter bound in certain cases, as shown in the following example.
\begin{figure}
  \centering
  \includegraphics[width=0.4\textwidth]{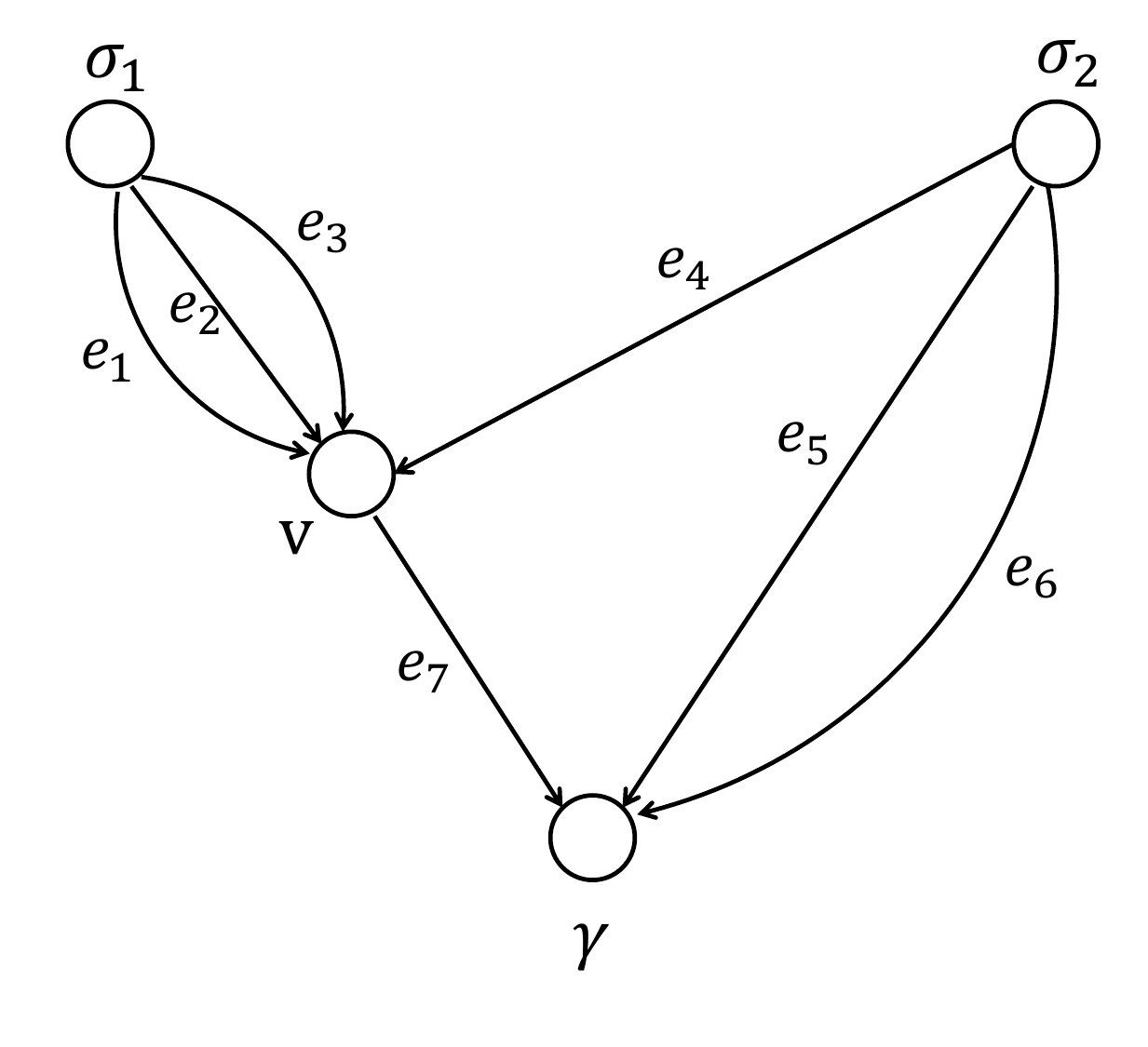}
  \caption{The network $\mathcal{N}$ has $2$ source nodes with binary source messages and the sink node desires the algebraic sum of the two source nodes. }\label{Fig:motivated_example}
\end{figure}
\begin{example}\label{example:motivated_example}
    Consider the network in Fig.~\ref{Fig:motivated_example} with two source nodes $\sigma_1$, $\sigma_2$ and a sink node $\gamma$. The target function $f$ is the algebraic sum, and the security function $g$ is the identity function. When the security level is $r=2$, Theorem~\ref{thm:ub-source-secure} gives the upper bound
    \[\widehat{\cC}(\cN,f,g,r)\leq 1.\]    
    However, there does not exist a secure network code with security level $2$ for this network. To see why, we can assume a wiretapper correctly obtains the messages $y_{e_4},y_{e_7}$ transmitted by edge $e_4$ and $e_7$. Suppose a network code exists that allows $\gamma$ to decode the sum without error. Then the message transmitted by each edge $e_i$ can be represented in a global form $\heta_i(m_1,m_2,k_1,k_2)$. Because $\sigma_2$ has no access to $m_1$ or $k_1$, we have $y_{e_4} = \widehat{\eta}_4(m_2,k_2)$. The message $y_{e_7}$ can be represented as a function of $m_1,k_1$ and $y_{e_4}$, i.e., $y_{e_7}=\heta_7(m_1,m_2,k_1,k_2)=\heta_7'(m_1,k_1,y_{e_4})$. Since $\gamma$ can decode the sum correctly and $y_{e_5},y_{e_6}$ carry no information about $k_1$, the same must hold for $y_{e_7}$.  Therefore, $y_{e_7}=\heta_7'(m_1,y_{e_4})$. Consequently, after observing $y_{e_4}$ and $y_{e_7}$, the wiretapper can obtain information about $m_1$, which violates the security requirement.

    Applying Theorem~\ref{thm:upperbound2} yields a tighter bound. Take the cut $C = \{e_7\}$ with the trivial weak partition $\mathcal{P}_C = \{C\}$. Here $I_C = \{\sigma_1\}$ and $J_C = \{\sigma_2\}$.   Choosing $B = \{e_4\}$ and $W = \{e_4, e_7\}$ gives a valid triple $(C,B,W)$.  Consequently, we have
    $\widehat{\cC}(\cN,f,g,r)\leq 0$,
    which matches the fact that no positive secure rate is achievable when $r=2$.
\end{example}

Similar to Corollary~\ref{cor:ub1}, when the target function $f$ is the algebraic sum, and the security function $g$ is the identity function or algebraic sum, the upper bound in Theorem~\ref{thm:upperbound2} can be simplified as follows.
\begin{cor}\label{cor:ub2}
    \begin{enumerate}
        \item If the target function $f$ is algebraic sum and the security function $g$ is the identity function, then 
        \[\widehat{\cC}(\cN,f,g,r)\leq\min_{\substack{(C,B,W):D_B= J_C\\W\subseteq C\cup B}}(|C\cup B|-|W|).\]
        \item If both the target function $f$ and the security function $g$ are algebraic sum, then 
        \[\widehat{\cC}(\cN,f,g,r)\leq\min_{\substack{(C,W):I_C\setminus D_W\neq \emptyset\\\textup{or }I_C=D_W=S}}(|C|-|W|).\]
    \end{enumerate}
\end{cor}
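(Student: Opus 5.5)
The plan is to specialize Theorem~\ref{thm:upperbound2} in the same way that Corollary~\ref{cor:ub1} specializes Theorem~\ref{thm:upperbound1}. For part 1, let $f$ be the algebraic sum, so $\bF=\mathbf{1}\in\F_q^{s\times 1}$, and let $g$ be the identity, so $\bG=\bI_s$ and $\langle\bG\rangle=\F_q^s$.

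Take any cut $C\in\Lambda(\cN)$ with the trivial weak partition $\cP_C=\{C\}$. Then $L=\emptyset$, and $I_C\neq\emptyset$ because $C$ separates $\gamma$ from some source. The intersection $\langle\bF_{I_C}\rangle\cap\langle\bG\rangle=\langle\bF_{I_C}\rangle$ is spanned by the indicator vector of $I_C$. It is therefore nonzero and has $t_{C,f,g}=1$, so condition 1 of validity holds.

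Condition 3 asks for $D_W\subseteq I_C\cup J_C=D_C$. I will show this holds automatically whenever $W\subseteq C\cup B$ with $B\in\cB(C)$ and $D_B=J_C$.
\begin{itemize}
\item For $e\in C$, any source with a path to $\gamma$ through $e$ lies in $D_C$ by definition.
\item For $e\in B$, we have $D_{\{e\}}\subseteq D_B=J_C\subseteq D_C$, since $D$ is monotone in the edge set.
\end{itemize}
Hence every triple $(C,B,W)$ with $D_B=J_C$ and $W\subseteq C\cup B$ is valid. Theorem~\ref{thm:upperbound2} then gives $\widehat{\cC}\le |C\cup B|-|W|$, and minimizing over all such triples yields part 1.

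For part 2, both $\bF$ and $\bG$ equal $\mathbf{1}$. I will mirror the argument of Corollary~\ref{cor:ub1}(2). The bound on the right-hand side is exactly the one from Corollary~\ref{cor:ub1}(2), so it suffices to show it still holds. It is already established by Theorem~\ref{thm:upperbound1} together with Theorem~\ref{thm:csbnd}, through the same two-case argument.
\begin{itemize}
\item If $I_C=D_W=S$, the trivial partition gives a valid pair, and Theorem~\ref{thm:upperbound1} applies.
\item If $I_C\setminus D_W\neq\emptyset$, the edge set $C\setminus W$ is a cut separating $\gamma$ from $I_C\setminus D_W$, and the non-secure bound of Theorem~\ref{thm:csbnd} gives $\widehat{\cC}\le |C|-|W|$.
\end{itemize}

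The only point needing care is whether Theorem~\ref{thm:upperbound2} gives anything stronger in this case. Validity requires $\langle\mathbf{1}_{I_{C_1}}\rangle\oplus\cdots\oplus\langle\mathbf{1}_{I_{C_p}}\rangle$ to contain $\mathbf{1}$. Since the $I_{C_j}$ are disjoint, this forces $\bigcup_j I_{C_j}=S$, so $J_C=\emptyset$. With $J_C=\emptyset$ we may take $B=\emptyset$, and the bound of Theorem~\ref{thm:upperbound2} collapses to that of Theorem~\ref{thm:upperbound1}. So part 2 coincides with Corollary~\ref{cor:ub1}(2).

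I expect the main obstacle to be justifying that the $I_{C_j}$ are disjoint. The plan is to argue that if a source $\sigma$ lay in $I_{C_i}\cap I_{C_j}$, then every path from $\sigma$ to $\gamma$ would meet both $C_i$ and $C_j$; I would then check whether this contradicts the weak-partition conditions. If it does not, I will instead state part 2 simply as a consequence of Corollary~\ref{cor:ub1}(2), which requires no disjointness.
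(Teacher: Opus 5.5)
Your proof is correct and follows the paper. Part 1 applies Theorem~\ref{thm:upperbound2} with the trivial weak partition, and your check that $D_W\subseteq D_C$ holds automatically is a detail the paper leaves implicit; part 2 falls back on the argument of Corollary~\ref{cor:ub1}(2), exactly as the paper does. The disjointness you worry about is unnecessary, and it is in fact false for weak partitions: on the path $\sigma\to u\to\gamma$, splitting $C=\{(\sigma,u),(u,\gamma)\}$ into its two edges gives a weak partition with $I_{C_1}=I_{C_2}=\{\sigma\}$. What you need instead is that every vector in the span of the indicators $\mathbf{1}_{I_{C_j}}$ is supported on $\bigcup_j I_{C_j}$, so validity already forces $\bigcup_j I_{C_j}=S$, hence $I_C=S$ and $J_C=\emptyset$, which is precisely the paper's observation.
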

\begin{pf}
    First, suppose $f$ is the algebraic sum and $g$ is the identity function. For any cut set $C\in\Lambda(\cN)$, consider the trivial weak partition $C=C$. Then $M_S(\bP\otimes \bI_{\ell})=M_S(\bF_{I_C}\otimes\bI_\ell)=\sum M_i$, and the triple $(C,B,W)$ is valid exactly when $D_B= J_C$. By Theorem~\ref{thm:upperbound2}, we obtain
   \[\widehat{\cC}(\cN,f,g,r)\leq\min_{\substack{(C,B,W):D_B= J_C\\W\subseteq C\cup B}}(|C\cup B|-|W|).\]
   
   Next, assume both $f$ and $g$ are the algebraic sum. If $(C,B,W)$ is valid, then necessarily $I_C=S$ and $J_C=\emptyset$. Theorem~\ref{thm:upperbound2} then yields
   \[\widehat{\cC}(\cN,f,g,r)\leq\min_{I_C=S}(|C|-|W|).\]
   Moreover, with the same discussion in the proof of Corollary~\ref{cor:ub1}, we can obtain the result.
\end{pf}

\section{Linear Secure Network Codes: Sum Target Function and Vector Linear Security}\label{sec:lowerbound-sum}
In this section, we consider the scenario where the target function $f$ is the algebraic sum, the security function $g$ is an arbitrary linear function, and the network topology is arbitrary.
\subsection{Linear secure network codes}
We begin by defining linear secure network codes for the model $(\cN, f, g, \cW)$, following the framework in \cite{2024Guangsourcesecure,2025Guangfunctionsecure}.

Let $f,g$ be linear functions over a finite field $\mathbb{F}_q$, that is, $f(\bx)=\bx\cdot\mathbf{F}$ and $g(\bx)=\bx\cdot\mathbf{G}$, where $\mathbf{F}\in\mathbb{F}_q^{s\times r_f},\mathbf{G}\in\mathbb{F}_q^{s\times r_g}$ and $\bx\in\mathbb{F}_q^{1\times s}$. A secure network code is called \emph{linear} if the message transmitted by each edge $e$ is a linear combination of the messages received by $\tail(e)$. Specifically, in an $(\ell,n)$ secure linear network code over $\F_q$, the message $\by_e \in \F_q^{1\times n}$ transmitted via edge $e$ has the form
\begin{equation}
  \by_e=\widehat{\eta}_e(\mathbf{m}_S,\bk_S)=
  \begin{cases}
    \sum\limits_{j=1}^{\ell}m_{ij}  \widehat{\ba}_{(i,j),e}+\sum\limits_{j=1}^{z_i}k_{ij}  \widehat{\bb}_{(i,j),e}, & \mbox{if $\tail(e)= \sigma_i$ for some $i$}; \\
    \sum\limits_{d\in \In(\tail(e))} \by_d\widehat{\mathbf{A}}_{d,e}, & \mbox{otherwise},
  \end{cases}
\end{equation}
where  $\widehat{\ba}_{(i,j),e},\widehat{\bb}_{(i,j),e}\in\mathbb{F}_q^{1\times n}, \widehat{\mathbf{A}}_{d,e} \in \F_{q}^{n\times n}$, $\widehat{\ba}_{(i,j),e},\widehat{\bb}_{(i,j),e}$ are all-zero vectors if $e$ is not an outgoing edge of some source node $\sigma_i\in S$,  $\widehat{\mathbf{A}}_{d,e}$ is an all-zero matrix if $e$ is not an outgoing edge of $\head(d)$, and $z_i$ is the length of random key generated by $\sigma_i$. So, each $\by_e$ can be written as a linear combination of the source messages:
\begin{equation*}
  \by_e = ((\mathbf{m_1}~\bk_1)~(\mathbf{m_2}~\bk_2)~\cdots~(\mathbf{m_s}~\bk_s))\cdot\widehat{\mathbf{H}}_e,
\end{equation*}
where $\widehat{\mathbf{H}}_e \in \F_q^{\left(\sum_{i\in[s]}(\ell+z_i)\right) \times n}$ and for notational simplicity, we follow \cite{2024Guangsourcesecure} to use $(\mathbf{m}_S,\bk_S)$ to represent \[((\mathbf{m_1}~\bk_1)~(\mathbf{m_2}~\bk_2)~\cdots~(\mathbf{m_s}~\bk_s)).\]
In spite of an abuse of notation, we call $\widehat{\mathbf{H}}_e$ the \emph{global encoding matrix} of the edge $e$. Furthermore, we write 
\begin{equation}\label{eq:global-encoding-matrix-form}
    \widehat{\mathbf{H}}_e=\begin{bmatrix}
        \widehat{\mathbf{H}}_e^{(\sigma_1)}\\\widehat{\mathbf{H}}_e^{(\sigma_2)}\\\vdots\\\widehat{\mathbf{H}}_e^{(\sigma_s)}
    \end{bmatrix},
\end{equation}
where $\widehat{\mathbf{H}}_e^{(\sigma_i)}\in\mathbb{F}_q^{\ell+z_i}$ for each $i\in[s]$.
Also, for an edge subset $E\subseteq \cE$, we let $\widehat{\mathbf{H}}_E\eqdef\begin{bmatrix}
    \widehat{\mathbf{H}}_e:e\in E
\end{bmatrix}\in\mathbb{F}_q^{\left(\sum_{i\in[s]}(\ell+z_i)\right)\times n|E|}$ and write
\begin{equation}\label{eq:Eglobal-matrix-form}
    \widehat{\mathbf{H}}_E=\begin{bmatrix}
        \widehat{\mathbf{H}}_E^{(\sigma_1)}\\\widehat{\mathbf{H}}_E^{(\sigma_2)}\\\vdots\\\widehat{\mathbf{H}}_E^{(\sigma_s)}
    \end{bmatrix},
\end{equation}
where $\widehat{\mathbf{H}}_E^{(\sigma_i)}\eqdef\begin{bmatrix}
    \widehat{\mathbf{H}}_e^{(\sigma_i)}:e\in E
\end{bmatrix}\in\mathbb{F}_q^{\left(\ell+z_i\right)\times n|E|}$ for each $i\in[s]$. 

Next, for such an $(\ell,n)$ linear secure network code $\hbc$ for the model $(\cN,f,g,r)$, we define a matrix $\mathbf{S}(\hbc)$ as follows:

\begin{equation}\label{eq:S-def}
    \mathbf{S}(\hbc)\eqdef\begin{bmatrix}
        \mathbf{S}^{(\sigma_1)}\\\mathbf{S}^{(\sigma_2)}\\\vdots\\\mathbf{S}^{(\sigma_s)}
    \end{bmatrix}\text{ with } \mathbf{S}^{(\sigma_i)}=\begin{bmatrix}
        \bg_i\otimes\mathbf{I}_\ell\\\Zero_{z_i\times \ell r_g}
    \end{bmatrix}, \textbf{ } \forall i\in[s],
\end{equation} 
where $\mathbf{I}_\ell$ is the $\ell\times \ell$ identity matrix, $\bg_i$ is the $i$-th row of $\mathbf{G}$, and $\otimes$ is the Kronecker product. Evidently, we have  $\mathbf{S}(\hbc)\in\mathbb{F}_q^{\left(\sum_{i\in[s]}(\ell+z_i)\right) \times \ell r_g}$. In the rest of this paper, we write $\mathbf{S}(\hbc)$ as $\mathbf{S}$ for notational simplicity when there is no ambiguity on the code $\hbc$. With this, we can see that
\begin{equation}\label{eq:secure-function-matrixform}
    (\mathbf{m}_S~\bk_S)\cdot \mathbf{S}=\mathbf{m}_S\cdot(\mathbf{G}\otimes \mathbf{I}_\ell)=g(\mathbf{m}_S).
\end{equation}

We now extend Theorem~7 in \cite{2025Guangfunctionsecure} to the case where $g$ is a vector linear function rather than the algebraic sum. We prove the result for the sake of completeness and adapt the notation to our setting. The proof follows exactly the same line as \cite{2025Guangfunctionsecure}.

\begin{thm}\label{thm:secure-condition-space-form}
  Consider the secure model $(\cN, f, g,r )$ with $f$ being the algebraic sum over $\mathbb{F}_q$. Let $\hbc$ be an $(\ell, n)$ linear secure network code, of which the global encoding matrices are $\widehat{\mathbf{H}}_e,~e\in\cE$. Then, the function security condition \eqref{eq:securitycondition} is satisfied for the code $\hbc$ if and only if
  \begin{align}\label{eq:def_sec_condition_space}
  \big\langle \widehat{\mathbf{H}}_W \big\rangle \cap \big\langle \mathbf{S} \big\rangle = \big\{\Zero\big\},~~\forall~W \in \cW_r. 
  \end{align}
\end{thm}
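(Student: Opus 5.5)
The plan is to reduce the security condition to a statement about linear maps of the uniformly distributed vector $X\eqdef(\mathbf{m}_S,\bk_S)$, which is uniform over $\F_q^{N}$ with $N=\sum_{i\in[s]}(\ell+z_i)$. By \eqref{eq:secure-function-matrixform}, $g(M_S)=X\mathbf{S}$, and $Y_W=X\widehat{\mathbf{H}}_W$. The statement $I(Y_W;g(M_S))=0$ for every $W\in\cW_r$ thus becomes the independence of $X\widehat{\mathbf{H}}_W$ and $X\mathbf{S}$. The decodability of the sum plays no role here. The whole proof rests on the standard fact that for a uniform $X$ and matrices $\mathbf{A},\mathbf{B}$ with $N$ rows, $I(X\mathbf{A};X\mathbf{B})=\bigl(\Rank(\mathbf{A})+\Rank(\mathbf{B})-\Rank[\mathbf{A}~\mathbf{B}]\bigr)\log q$.

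The first step is to justify that rank formula. For uniform $X$, the vector $X\mathbf{A}$ is uniform over the row space it generates, namely the image of $\bx\mapsto\bx\mathbf{A}$. Each fiber of this map is a coset of the left kernel, so all fibers have equal size, and hence $H(X\mathbf{A})=\Rank(\mathbf{A})\log q$. Applying the same fact to $H(X\mathbf{B})$ and to $H(X[\mathbf{A}~\mathbf{B}])$ and combining via $I=H(X\mathbf{A})+H(X\mathbf{B})-H(X\mathbf{A},X\mathbf{B})$ gives the formula.

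The second step is to convert the rank expression into the column-space language of \eqref{eq:def_sec_condition_space}. Since $\dim(\langle\mathbf{A}\rangle+\langle\mathbf{B}\rangle)=\Rank[\mathbf{A}~\mathbf{B}]$, the dimension formula for subspaces gives
\[\Rank(\mathbf{A})+\Rank(\mathbf{B})-\Rank[\mathbf{A}~\mathbf{B}]=\dim\bigl(\langle\mathbf{A}\rangle\cap\langle\mathbf{B}\rangle\bigr).\]
Taking $\mathbf{A}=\widehat{\mathbf{H}}_W$ and $\mathbf{B}=\mathbf{S}$ yields
\[I(Y_W;g(M_S))=\dim\bigl(\langle\widehat{\mathbf{H}}_W\rangle\cap\langle\mathbf{S}\rangle\bigr)\log q.\]
This vanishes exactly when the intersection is $\{\Zero\}$, which proves both directions at once for each $W$. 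Quantifying over all $W\in\cW_r$ then completes the proof.

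The only delicate point is the entropy-equals-rank step. I would phrase it carefully so that it does not assume $\mathbf{S}$ has full column rank. This matters because $\mathbf{S}$ has rank $\ell\Rank(\mathbf{G})$, which may be smaller than $\ell r_g$. The rank formula covers this case directly, so no extra assumption on $\mathbf{G}$ is needed.
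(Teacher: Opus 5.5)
Your proof is correct, but it follows a different route from the paper's.

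\textbf{The paper's route.} The paper proves the two directions separately. For ``only if'' it argues by contradiction. It takes nonzero $\bm{\alpha},\bm{\beta}$ with $\widehat{\mathbf{H}}_W\bm{\alpha}=\mathbf{S}\bm{\beta}$, a nonzero vector in the intersection. A data-processing step then gives $I(g(M_S);Y_W)\geq I\bigl(g(M_S);(M_S~K_S)\mathbf{S}\bm{\beta}\bigr)=H\bigl((M_S~K_S)\mathbf{S}\bm{\beta}\bigr)>0$. For ``if'' it counts solutions of the systems $(\mathbf{m}_S~\bk_S)[\mathbf{S}~\widehat{\mathbf{H}}_W]=(\bu~\by)$ and $(\mathbf{m}_S~\bk_S)\widehat{\mathbf{H}}_W=\by$. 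It uses $\Rank[\mathbf{S}~\widehat{\mathbf{H}}_W]=\Rank(\mathbf{S})+\Rank(\widehat{\mathbf{H}}_W)$ and $\Rank(\mathbf{S})=\ell r_g$ to show $\Pr(g(M_S)=\bu\mid Y_W=\by)=q^{-\ell r_g}=\Pr(g(M_S)=\bu)$.

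\textbf{Your route.} You prove the single identity $I(Y_W;g(M_S))=\dim\bigl(\langle\widehat{\mathbf{H}}_W\rangle\cap\langle\mathbf{S}\rangle\bigr)\log q$, and both directions follow at once. This gives three advantages:
\begin{enumerate}
\item It yields a quantitative leakage formula rather than only the zero/nonzero dichotomy.
\item It does not need $\bG$ to have full column rank, which the paper uses to get $\Pr(g(M_S)=\bu)=q^{-\ell r_g}$. Under the paper's standing convention $\Rank(\bG)=r_g$ this case does not arise, but your argument covers it anyway.
\item It avoids a point the paper leaves implicit. The system $(\mathbf{m}_S~\bk_S)[\mathbf{S}~\widehat{\mathbf{H}}_W]=(\bu~\by)$ must be consistent for every $\bu$ and every realizable $\by$ for the solution count $q^{s\ell+\sum_i z_i-\Rank[\mathbf{S}~\widehat{\mathbf{H}}_W]}$ to hold. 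This is true, by rank additivity together with full column rank of $\mathbf{S}$, but it is not stated.
\end{enumerate}

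\textbf{What the paper's route gives.} Its argument is more hands-on. It writes down the conditional distribution explicitly. In the ``only if'' direction it also names a concrete leaked linear combination, $(\mathbf{m}_S~\bk_S)\mathbf{S}\bm{\beta}$, whose existence your dimension count shows only implicitly.
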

\begin{pf}
    For the ``only if '' part, we prove it by contradiction. Suppose the contrary holds that there exists a wiretap set $W\in\cW_r$ that does not satisfy the condition in \eqref{eq:def_sec_condition_space}, i.e.,
    \begin{equation}\label{eq:def_sec_condition_space_contrary}
        \big\langle \widehat{\mathbf{H}}_W \big\rangle \cap \big\langle \mathbf{S} \big\rangle \neq \big\{\Zero\big\}. 
    \end{equation}
    Then, it suffices to show that $I(g(M_S);Y_W)>0$. From \eqref{eq:def_sec_condition_space_contrary}, there exist two non-zero vectors $\bm{\alpha}\in\mathbb{F}_q^{n|W|},\bm{\beta}\in\mathbb{F}_q^{\ell r_g}$ such that 
    \begin{equation}\label{eq:Hw-a-S-b}
        \widehat{\mathbf{H}}_W\cdot\bm{\alpha}=\mathbf{S}\cdot\bm{\beta}.
    \end{equation}
    Consequently, we have
    \begin{subequations}\label{eq:onlyif}
        \begin{align}
            &I(g(M_S);Y_W)\\
            =&I(g(M_S);(M_S~K_S)\cdot\widehat{\mathbf{H}}_W)\\
            =&H(g(M_S))-H(g(M_S)|(M_S~K_S)\cdot\widehat{\mathbf{H}}_W)\\
            =&H(g(M_S))-H(g(M_S)|(M_S~K_S)\cdot\widehat{\mathbf{H}}_W,(M_S~K_S)\cdot\widehat{\mathbf{H}}_W\cdot\bm{\alpha})\\
            \geq&H(g(M_S))-H(g(M_S)|(M_S~K_S)\cdot\widehat{\mathbf{H}}_W\cdot\bm{\alpha})\\
            \overset{\eqref{eq:Hw-a-S-b}}{=}&H(g(M_S))-H(g(M_S)|(M_S~K_S)\cdot\mathbf{S}\cdot\bm{\beta})\\
            =&I(g(M_S);(M_S~K_S)\cdot\mathbf{S}\cdot\bm{\beta})\\
            =&H((M_S~K_S)\cdot\mathbf{S}\cdot\bm{\beta})-H((M_S~K_S)\cdot\mathbf{S}\cdot\bm{\beta}|g(M_S))\\
            \overset{\eqref{eq:secure-function-matrixform}}{=}&H((M_S~K_S)\cdot\mathbf{S}\cdot\bm{\beta})-H((M_S~K_S)\cdot\mathbf{S}\cdot\bm{\beta}|(M_S~K_S)\cdot\mathbf{S})\\
            =&H((M_S~K_S)\cdot\mathbf{S}\cdot\bm{\beta})>0,
        \end{align}
    \end{subequations}
    where (\ref{eq:onlyif}j) follows from $\bm{\beta}\neq \Zero$. Then, we have proved $I(g(M_S);Y_W)>0$, which contradicts to the security condition \eqref{eq:securitycondition}.

    For the ``if'' part, we need to show \eqref{eq:securitycondition} or equivalently 
    \begin{equation}
        H(g(M_S)|Y_W)=H(g(M_S))
    \end{equation}
    based on the condition in \eqref{eq:def_sec_condition_space}. Specifically, it suffices to show that for every $W\in\cW_r$, 
    \begin{equation}
        \Pr(g(M_S)=\bu|Y_W=\by)=\Pr(g(M_S)=\bu),
    \end{equation}
    for any row vector $\bu\in\mathbb{F}_q^{\ell r_g}$ and $\by\in\mathbb{F}_q^{n|W|}$ with $\Pr(Y_W=\by)>0$.
    Since the source messages $\{M_{i,j}\}_{i\in[s],j\in[\ell]}$ are i.i.d random variables according to the uniform distribution on $\mathbb{F}_q$ and $\bG$ is a full rank matrix with $\Rank(\bG)=r_g$, we have $M_S\mathbf{G}$ are $\ell r_g$ i.i.d. random variables according to the uniform distribution on $\mathbb{F}_q$. Consequently, for every $\bu\in\mathbb{F}_q^{\ell r_g}$
    \begin{equation}
        \Pr(g(M_S)=\bu)=\frac{1}{q^{\ell r_g}}.
    \end{equation}
    Next, we consider
    \begin{subequations}
        \begin{align}
            &\Pr(g(M_S)=\bu|Y_W=\by)\\
            =&\frac{\Pr(g(M_S)=\bu,Y_W=\by)}{\Pr(Y_W=\by)}\\
            =&\frac{\Pr((\mathbf{m}_S~\bk_S)\mathbf{S}=\bu,(\mathbf{m}_S~\bk_S)\widehat{\mathbf{H}}_W=\by)}{\Pr((\mathbf{m}_S~\bk_S)\widehat{\mathbf{H}}_W=\by)}\\
            =&\frac{\Pr((\mathbf{m}_S~\bk_S)[\mathbf{S}~\widehat{\mathbf{H}}_W]=(\bu~\by))}{\Pr((\mathbf{m}_S~\bk_S)\widehat{\mathbf{H}}_W=\by)}\\
            =&\frac{\sum_{(\mathbf{m}_S~\bk_S):(\mathbf{m}_S~\bk_S)[\mathbf{S}~\widehat{\mathbf{H}}_W]=(\bu~\by)}\Pr(M_S=\mathbf{m}_S,K_S=\bk_S)}{\sum_{(\mathbf{m}'_S~\bk'_S):(\mathbf{m}'_S~\bk'_S)\widehat{\mathbf{H}}_W=\by}\Pr(M_S=\mathbf{m}'_S,K_S=\bk'_S)}\\
            =&\frac{\#\{(\mathbf{m}_S~\bk_S):(\mathbf{m}_S~\bk_S)[\mathbf{S}~\widehat{\mathbf{H}}_W]=(\bu~\by)\}}{\#\{(\mathbf{m}'_S~\bk'_S):(\mathbf{m}'_S~\bk'_S)\widehat{\mathbf{H}}_W=\by\}},
        \end{align}
    \end{subequations}
    where $\#\{\cdot\}$ denotes the cardinality of the set and the equality holds because $M_S$ and $K_S$ are independent and uniformly distributed on $\mathbb{F}_q^{s\ell}$ and $\mathbb{F}_q^{\Sigma_{i\in[s]}z_i}$, respectively. For the denominator, we have
    \begin{equation}\label{eq:probability1}
        \#\{(\mathbf{m}'_S~\bk'_S):(\mathbf{m}'_S~\bk'_S)\widehat{\mathbf{H}}_W=\by\}=q^{s\ell+\sum_{i\in[s]}z_i-\Rank(\widehat{\mathbf{H}}_W)}.
    \end{equation}
    And for the numerator, we obtain
    \begin{subequations}\label{seq:probability2}
        \begin{align}
            &\#\{(\mathbf{m}_S~\bk_S):(\mathbf{m}_S~\bk_S)[\mathbf{S}~\widehat{\mathbf{H}}_W]=(\bu~\by)\}\\
            =&q^{s\ell+\sum_{i\in[s]}z_i-\Rank([\mathbf{S}~\widehat{\mathbf{H}}_W])}\\
            =&q^{s\ell+\sum_{i\in[s]}z_i-\Rank(\mathbf{S})-\Rank(\widehat{\mathbf{H}}_W)}\\
            =&q^{s\ell+\sum_{i\in[s]}z_i-\ell r_g-\Rank(\widehat{\mathbf{H}}_W)},
        \end{align}
    \end{subequations}
    where (\ref{seq:probability2}c) follows from the condition $\big\langle \widehat{\mathbf{H}}_W \big\rangle \cap \big\langle \mathbf{S} \big\rangle = \big\{\Zero\big\}$, and $(\ref{seq:probability2}d)$ is due to the definition of $\mathbf{S}$.
    Combining \eqref{eq:probability1} and \eqref{seq:probability2}, we obtain
    \[\Pr(g(M_S)=\bu|Y_W=\by)=\frac{1}{q^{\ell r_g}}=\Pr(g(M_S)=\bu).\]
    Thus, the theorem is proved.
    
\end{pf}

\subsection{Secure network coding for sum function}
Let the security function be $g(M_S)= M_S\cdot \bG$, where $\bG\in\mathbb{F}_q^{r_g\times s}$.
In \cite{2024Guangsourcesecure,2025Guangfunctionsecure}, a linear secure network code was constructed by transforming a non-secure code into a secure one. We extend this idea to the case where $g$ is a vector linear function.

First, consider the model of computing the algebraic sum $f$ over the network $\cN$ without any security constraint. As illustrated in \cite{2024Guangsourcesecure}, for this model there exists an $(R,1)$ linear network code $\mathbf{C}$ that allows the sink to compute $f$ with zero error. Denote the global encoding vectors in $\mathbf{C}$ as $\{\mathbf{h}_e:e\in\cE\}$. Similar to \eqref{eq:global-encoding-matrix-form}, we write
 \begin{equation}
    \mathbf{h}_e=\begin{bmatrix}
        \mathbf{h}_e^{(\sigma_1)}\\\mathbf{h}_e^{(\sigma_2)}\\\vdots\\\mathbf{h}_e^{(\sigma_s)}
    \end{bmatrix},
\end{equation}
where $\mathbf{h}_e\in\mathbb{F}_q^{Rs}$ and $\mathbf{h}_e^{(\sigma_i)}\in\mathbb{F}_q^{R}$.
Let $\mathbf{B}$ be an $R\times R$ matrix over $\mathbb{F}_q$ and $\widehat{\mathbf{B}}$ be an $Rs\times Rs$ matrix with the form of
\[\widehat{\mathbf{B}}=\begin{bmatrix}
    \mathbf{B}&\Zero_{R\times R}&\cdots&\Zero_{R\times R}\\
    \Zero_{R\times R}&\mathbf{B}&\cdots&\Zero_{R\times R}\\
    \vdots&\vdots&\ddots&\vdots\\
    \Zero_{R\times R}&\Zero_{R\times R}&\cdots&\mathbf{B}
\end{bmatrix}.\]

We select the matrix $\mathbf{B}$ such that the following two conditions are satisfied:
\begin{enumerate}
    \item $\mathbf{B}$ is invertible;
    \item For every $W\in\cW_r$, 
    \begin{equation}\label{eq:B-condition}
    \langle\widehat{\mathbf{B}}^{-1}\mathbf{S}\rangle\cap\langle\mathbf{H}_W\rangle=\{\Zero\},
    \end{equation}
    where $\mathbf{H}_W=\begin{bmatrix}
        \mathbf{h}_e:e\in W
    \end{bmatrix}$.
\end{enumerate}
Then, we can construct an $(R-r,1)$ secure network code $\hbc$ for the model $(\cN,f,g,r)$ as follows. 
Each source node $\sigma_i$ generates $R-r$ i.i.d. random variables $M_i=(M_{i,1},M_{i,2},\cdots,M_{i,R-r})$ as source messages and $r$ i.i.d. random variables $K_i=(K_{i,1},K_{i,2},\cdots,K_{i,r})$ as random keys. Let $\mathbf{m}_i,\bk_i$ be the realizations of $M_i$ and $K_i$.
Let the global encoding vector for the edge $e$ in $\hbc$ be $\widehat{\mathbf{h}}_e=\widehat{\mathbf{B}}\cdot\mathbf{h}_e$. We denote the corresponding secure network code $\hbc\eqdef \widehat{\mathbf{B}}\cdot\mathbf{C}$. 

In the following, we will prove that if the matrix $\mathbf{B}$ satisfies the two conditions, then the constructed network code $\hbc$ satisfies the decodability and the security condition.

\textbf{Verification of decodability}: At the sink node $\gamma$, the messages $y_e,e\in\In(\gamma)$ are received. Since $y_e=(\mathbf{m}_S~\bk_S)\cdot \widehat{\mathbf{h}}_e$, we have
\begin{align*}
    \mathbf{y}_{\In(\gamma)}\eqdef&(y_e:e\in\In(\gamma))=(\mathbf{m}_S~\bk_S)\cdot(\widehat{\mathbf{h}}_e:e\in\In(\gamma))\\
    =&(\mathbf{m}_S~\bk_S)\cdot\widehat{\mathbf{B}}\cdot(\mathbf{h}_e:e\in\In(\gamma))\\
    =&((\mathbf{m}_1~\bk_1)\mathbf{B}~(\mathbf{m}_2~\bk_2)\mathbf{B}~\cdots~(\mathbf{m}_s~\bk_s)\mathbf{B})\cdot \mathbf{H}_{\In(\gamma)}.
\end{align*}
Note that $\mathbf{C}$ is an $(R,1)$ network code for the non-secure network computing problem, that is, from $(\bx_1,\cdots,\bx_s)\cdot\mathbf{H}_{\In(\gamma)}$, we can decode $\sum_{i\in[s]}\bx_i$. Then, from $\by_{\In(\gamma)}$, we can decode 
\[\sum_{i\in[s]}(\mathbf{m}_i~\bk_i)\mathbf{B}=\left(\sum_{i\in[s]}(\mathbf{m}_i~\bk_i)\right)\cdot\mathbf{B}.\]
Since $\mathbf{B}$ is invertible, the sink node can obtain $\sum_{i\in[s]}(\mathbf{m}_i~\bk_i)$ and hence obtain $\sum_{i\in[s]}\mathbf{m}_i$.

\textbf{Verification of security condition}: For every wiretap set $W\in\cW_r$, in the construted secure network code $\hbc$, the global encoding matrix for $\cW$ is $\widehat{\mathbf{B}}\mathbf{H}_W$. From \eqref{eq:B-condition}, we can obtain $\langle\mathbf{S}\rangle\cap\langle\widehat{\mathbf{B}}\mathbf{H}_W\rangle=\{\Zero\}$, because otherwise there exist two non-zero column vectors $\bm{\alpha}\in\mathbb{F}_q^{(R-r)r_g}$ and $\bm{\beta}\in\mathbb{F}_q^{|W|}$ such that
\begin{equation}
    \mathbf{S}\cdot\bm{\alpha}=\widehat{\mathbf{B}}\mathbf{H}_W\cdot\bm{\beta}.
\end{equation}
This implies that $\widehat{\mathbf{B}}^{-1}\mathbf{S}\cdot\bm{\alpha}=\mathbf{H}_W\cdot\bm{\beta}$, which is a contradiction to \eqref{eq:B-condition}. By Theorem~\ref{thm:secure-condition-space-form}, the security condition is satisfied.

To complete the construction, it suffices to construct $\mathbf{B}$ such that the above conditions are satisfied. Recall that the security function $g(\mathbf{m}_S)=\mathbf{m}_S\bG$, where $\bG$ is full rank matrix with rank $r_g$. Without loss of generality, we can assume that the first $r_g$ rows of $\bG$ are linearly independent. Consequently, we can focus on the case where $\bG=[\bI_{r_g};\bG']$.
This is because the wiretapper cannot get any information about $M_S\cdot\bG$ if and only if the wiretapper cannot get any information about $M_S\cdot[\bI_{r_g};\bG']$. For this case, let $\mathbf{B}^{-1}=[\bb_1~\bb_2~\cdots~\bb_R]$, $\bb_i\in\mathbb{F}_q^{R\times 1}$. We select $\bb_1,\bb_2,\cdots,\bb_R$ be $R$ linearly independent vectors in $\mathbb{F}_q^{R}$ such that 
\begin{equation}\label{eq:b-condition}
    \langle\bb_1,\cdots,\bb_{R-r}\rangle\cap\langle\mathbf{H}_W^{(\sigma_i)}\rangle=\{\Zero\},~\forall~1\leq i\leq r_g,~\forall~W\in\cW_r.
\end{equation}
Since $\bb_1,\bb_2,\cdots,\bb_R$ are linearly independent, the invertibility of $\mathbf{B}$ holds. Next, we will show that from \eqref{eq:b-condition}, we can obtain \eqref{eq:B-condition}.
We prove this by contradiction. Suppose that \eqref{eq:B-condition} does not hold, that is, there exists a wiretap set $W$ and two non-zero column vectors $\bm{\alpha}\in\mathbb{F}_q^{(R-r)r_g},\bm{\beta}\in\mathbb{F}_q^{|W|}$ such that
\[\widehat{\mathbf{B}}^{-1}\cdot\mathbf{S}\cdot\bm{\alpha}=\mathbf{H}_W\cdot\bm{\beta}.\]
Recall that $\mathbf{S}$ can be written as \eqref{eq:S-def} and the first $r_g$ rows of $\bG$ form an identity matrix. Then, $\widehat{\mathbf{B}}^{-1}\cdot\mathbf{S}$ can be computed as
\begin{equation}
    \widehat{\mathbf{B}}^{-1}\cdot\mathbf{S}=\begin{bmatrix}
        \bg_1\otimes[\bb_1~\bb_2~\cdots~\bb_{R-r}]\\\bg_2\otimes[\bb_1~\bb_2~\cdots~\bb_{R-r}]\\\vdots\\\bg_s\otimes[\bb_1~\bb_2~\cdots~\bb_{R-r}]
    \end{bmatrix}=\begin{bmatrix}\begin{matrix}
        \mathbf{B}^{-1}_{R-r}&&&\\&\mathbf{B}^{-1}_{R-r}&&\\&&\ddots&\\&&&\mathbf{B}^{-1}_{R-r}
    \end{matrix}\\
        \\\bg_{r_g+1}\otimes\mathbf{B}^{-1}_{R-r}\\\vdots\\\bg_{s}\otimes\mathbf{B}^{-1}_{R-r}
    \end{bmatrix}.\end{equation}
Consequently, $\widehat{\mathbf{B}}^{-1}\cdot\mathbf{S}\cdot\bm{\alpha}$ can be written as
\begin{equation}
    \begin{bmatrix}\mathbf{B}^{-1}_{R-r}\cdot\bm{\alpha}_1\\\mathbf{B}^{-1}_{R-r}\cdot\bm{\alpha}_2\\\vdots\\\mathbf{B}^{-1}_{R-r}\cdot\bm{\alpha}_{r_g}
        \\\bg_{r_g+1}\otimes\mathbf{B}^{-1}_{R-r}\cdot\bm{\alpha}\\\vdots\\\bg_{s}\otimes\mathbf{B}^{-1}_{R-r}\cdot\bm{\alpha}
    \end{bmatrix} \text{ with } \bm{\alpha}=\begin{bmatrix}
        \bm{\alpha}_1\\\bm{\alpha}_2\\\vdots\\\bm{\alpha}_{r_g}
    \end{bmatrix},
\end{equation}
where $\bm{\alpha}_i\in\mathbb{F}_q^{R-r}$ for every $i\in[r_g]$. Since $\bm{\alpha}$ is a non-zero vector, then at least one of $\bm{\alpha}_i,1\leq i\leq r_g$ is non-zero. From the assumption, we obtain that for $1\leq i\leq r_g$, 
\[\mathbf{B}^{-1}_{R-r}\cdot\bm{\alpha}_i=\mathbf{H}_W^{(\sigma_i)}\cdot\bm{\beta}.\]
This implies that for some $i\in[r_g]$,$ \langle\bb_1,\cdots,\bb_{R-r}\rangle\cap\langle\mathbf{H}_W^{(\sigma_i)}\rangle\neq\{\Zero\}$, which is a contradiction.

Finally, we will analyze the required field size such that we can select $R$ linearly independent vectors satisfying \eqref{eq:b-condition}. 

\begin{lem}\label{lem:field-size-lemma}
    If $q>r_g\cdot|\cW_r|$, then there exist $R$ linearly independent vectors in $\mathbb{F}_q^{R}$ satisfying \eqref{eq:b-condition}.
\end{lem}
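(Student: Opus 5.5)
The plan is to build the vectors one at a time, using a standard union-of-subspaces counting argument. Only the first $R-r$ vectors are constrained by \eqref{eq:b-condition}, so the real task is to find a subspace $U=\langle\bb_1,\cdots,\bb_{R-r}\rangle$ of dimension $R-r$ that meets every $V_{i,W}\eqdef\langle\mathbf{H}_W^{(\sigma_i)}\rangle$ trivially, for $i\in[r_g]$ and $W\in\cW_r$. Once such a $U$ is found, $\bb_{R-r+1},\cdots,\bb_R$ are obtained by extending $\bb_1,\cdots,\bb_{R-r}$ to a basis of $\mathbb{F}_q^R$. Linear independence of all $R$ vectors then holds automatically.

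The first observation is a dimension count. Since $\mathbf{h}_e^{(\sigma_i)}\in\mathbb{F}_q^R$ is a single column for each edge, $\mathbf{H}_W^{(\sigma_i)}$ has $|W|\le r$ columns. Hence $\dim V_{i,W}\le r$. There are at most $N\eqdef r_g\cdot|\cW_r|$ such subspaces.

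I would then choose $\bb_1,\bb_2,\cdots,\bb_{R-r}$ inductively, keeping the invariant that $U_j\eqdef\langle\bb_1,\cdots,\bb_j\rangle$ has dimension $j$ and $U_j\cap V_{i,W}=\{\Zero\}$ for all $(i,W)$. The base case $U_0=\{\Zero\}$ is trivial.

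At step $j\le R-r$, choose
\[\bb_j\notin\bigcup_{i\in[r_g],\,W\in\cW_r}\bigl(U_{j-1}+V_{i,W}\bigr).\]
Such a choice preserves the invariant. Indeed, suppose $\bu+c\,\bb_j\in V_{i,W}$ with $\bu\in U_{j-1}$. If $c\neq 0$, then $\bb_j\in U_{j-1}+V_{i,W}$, which is excluded. If $c=0$, then $\bu\in U_{j-1}\cap V_{i,W}=\{\Zero\}$. So the intersection stays trivial. Moreover, each excluded set contains $U_{j-1}$, so $\bb_j\notin U_{j-1}$ and the dimension increases by one.

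The main step is showing that such a $\bb_j$ exists. Each $U_{j-1}+V_{i,W}$ is a subspace of dimension at most $(j-1)+r\le R-1$. It is therefore a proper subspace with at most $q^{R-1}$ elements. All of these subspaces contain $\Zero$, so their union has at most
\[N(q^{R-1}-1)+1<N q^{R-1}<q^R\]
elements, where the last inequality uses $q>r_g|\cW_r|=N$. Hence a valid $\bb_j$ exists at every step up to $j=R-r$.

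The only point needing care is the uniform bound $\dim(U_{j-1}+V_{i,W})\le R-1$ for all $j\le R-r$. This is exactly why at most $R-r$ vectors can be constrained, and why the hypothesis needs $|W|\le r$. After the last step, $U=U_{R-r}$ satisfies \eqref{eq:b-condition}, and the basis extension described above completes the proof.
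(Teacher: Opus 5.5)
Your proposal is correct and is essentially the paper's proof: the same greedy choice of $\bb_j\notin\bigcup_{i,W}(U_{j-1}+V_{i,W})$ for $j\le R-r$, the same bound $\dim(U_{j-1}+V_{i,W})\le (R-r-1)+r=R-1$ giving at most $r_g|\cW_r|\,q^{R-1}<q^R$ excluded vectors, and the same completion to a basis for the last $r$ vectors. You also verify explicitly that the invariant $U_j\cap V_{i,W}=\{\Zero\}$ is preserved, which the paper leaves implicit. The only blemish is that $N(q^{R-1}-1)+1<Nq^{R-1}$ should read $\le$ (it is an equality when $N=1$), which does not affect the final strict inequality $<q^R$.
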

\begin{pf}
    We define 
    \begin{align*}
        &\mathscr{B}_j\eqdef\langle\bb_1,\bb_2,\cdots,\bb_j\rangle,\\
        &\mathscr{L}_W^{(\sigma_i)}\eqdef\langle\mathbf{h}_e^{(\sigma_i)}\rangle.
    \end{align*}
    We can select $\bb_j$ as follows. For $1\leq j\leq R-r$, we choose $\bb_j\in\mathbb{F}_q^R\setminus\cup_{W\in\cW_r}\cup_{i\in[r_g]}(\mathscr{L}_W^{(\sigma_i)}+\mathscr{B}_{j-1})$. For $R-r+1\leq j\leq R$, we choose $\bb_j\in\mathbb{F}_q^R\setminus\mathscr{B}_{j-1}$.
    Now, it suffices to show that if $q>r_g|\cW_r|$, then $\mathbb{F}_q^R\setminus\cup_{W\in\cW_r}\cup_{i\in[r_g]}(\mathscr{L}_W^{(\sigma_i)}+\mathscr{B}_{j-1})$ and $\mathbb{F}_q^R\setminus\mathscr{B}_{j-1}$ are not empty for $1\leq j\leq R-r$ and $R-r+1\leq j\leq R$, respectively. Consider 
    \begin{subequations}\label{seq:field-size-lemma-1}
        \begin{align}
            |\mathbb{F}_q^R\setminus\cup_{W\in\cW_r}\cup_{i\in[r_g]}(\mathscr{L}_W^{(\sigma_i)}+\mathscr{B}_{j-1})|&\geq q^R-\sum_{W\in\cW_r}\sum_{i\in[r_g]}|\mathscr{L}_W^{(\sigma_i)}+\mathscr{B}_{j-1}|\\
            &\geq q^R-\sum_{W\in\cW_r}\sum_{i\in[r_g]}q^{R-1}\\
            &=q^{R-1}(q-r_g|\cW_r|)>0,
        \end{align}
    \end{subequations}
    where (\ref{seq:field-size-lemma-1}b) is because $\dim(\mathscr{L}_W^{(\sigma_i)})\leq |W|\leq r$ and $\dim(\mathscr{B}_{j-1})\leq R-r-1$. For $|\mathbb{F}_q^R\setminus\mathscr{B}_{j-1}|$, since $j\leq R$, we have $|\mathbb{F}_q^R\setminus\mathscr{B}_{j-1}|=q^R-q^{j-1}>0$. Then, the proof of the lemma is completed. 
\end{pf}

Combining the code construction above and Lemma~\ref{lem:field-size-lemma}, we obtain the following theorem.
\begin{thm}
    Consider the secure network function computation problem $(\cN,f,g,r)$, where $f$ is the algebraic sum over a finite field $\mathbb{F}_q$, $g$ is a vector linear function and the security level $r$ satisfies $0\leq r\leq C_{\min}$. Then there exists a $(C_{\min}-r,1)$ secure linear network code if the field size $q>r_g|\cW_r|$, where $r_g$ is the rank of the coefficient matrix of $g$.
\end{thm}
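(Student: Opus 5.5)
The plan is to assemble the result from the three ingredients already developed in this section: a non-secure scalar code for the sum, the block-diagonal transformation $\widehat{\mathbf{B}}\cdot\mathbf{C}$, and Lemma~\ref{lem:field-size-lemma}.

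\textbf{Step 1: the non-secure base code.} Set $R=C_{\min}$. I will invoke the result cited from \cite{2024Guangsourcesecure}: for the algebraic sum over any $\mathbb{F}_q$ there is a $(C_{\min},1)$ linear network code $\mathbf{C}$ with global encoding vectors $\mathbf{h}_e\in\mathbb{F}_q^{C_{\min}s}$ that lets $\gamma$ compute $\sum_i\bx_i$ for $\bx_i\in\mathbb{F}_q^{C_{\min}}$. The sum is computable at rate equal to the min-cut, for instance by routing each coordinate of the partial sums along edge-disjoint paths and adding at merging nodes. Only the existence of such a code is needed, with no field-size requirement of its own. If $r=C_{\min}$ the claim is vacuous (rate $0$), so I assume $r<C_{\min}$.

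\textbf{Step 2: choose $\mathbf{B}$.} After the reduction $\bG=[\bI_{r_g};\bG']$, which I justify by a column change of basis together with a permutation of sources (both preserve the security condition), Lemma~\ref{lem:field-size-lemma} applies whenever $q>r_g|\cW_r|$. It yields linearly independent $\bb_1,\dots,\bb_{R}$ satisfying \eqref{eq:b-condition}. Setting $\mathbf{B}^{-1}=[\bb_1~\cdots~\bb_R]$ makes $\mathbf{B}$ invertible, and the contradiction argument given just before the lemma shows that \eqref{eq:B-condition} then holds for every $W\in\cW_r$.

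\textbf{Step 3: transform and verify.} Form $\hbc=\widehat{\mathbf{B}}\cdot\mathbf{C}$. Each source sends $R-r=C_{\min}-r$ message symbols and $r$ key symbols over a single network use, so $\hbc$ is a $(C_{\min}-r,1)$ code. Because $\widehat{\mathbf{B}}$ is block diagonal, each source applies $\mathbf{B}$ only to its own $(\mathbf{m}_i~\bk_i)$, and hence $\hbc$ is a legitimate linear code on $\cN$. Decodability follows from the verification above, namely decoding $(\sum_i(\mathbf{m}_i~\bk_i))\mathbf{B}$ and inverting $\mathbf{B}$. Security follows from \eqref{eq:B-condition}, which gives $\langle\mathbf{S}\rangle\cap\langle\widehat{\mathbf{B}}\mathbf{H}_W\rangle=\{\Zero\}$, and then from Theorem~\ref{thm:secure-condition-space-form}.

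\textbf{Main obstacle.} The step needing most care is the reduction to $\bG=[\bI_{r_g};\bG']$. A column operation $\bG\mapsto\bG\mathbf{T}$ with $\mathbf{T}$ invertible leaves $\langle\mathbf{S}\rangle$ unchanged, since $\mathbf{S}$ becomes $\mathbf{S}(\mathbf{T}\otimes\bI)$. A row permutation, however, relabels sources, and \eqref{eq:b-condition} must then be imposed on the $r_g$ sources indexed by an independent set of rows rather than on $\sigma_1,\dots,\sigma_{r_g}$. This is harmless, because the lemma's counting bound $r_g|\cW_r|$ depends only on how many sources are constrained, not on which ones.

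Separately, in Lemma~\ref{lem:field-size-lemma} the bound $|\mathscr{L}_W^{(\sigma_i)}+\mathscr{B}_{j-1}|\le q^{R-1}$ uses $\dim\le r+(R-r-1)=R-1$. There $\mathscr{L}_W^{(\sigma_i)}$ should be read as $\langle\mathbf{H}_W^{(\sigma_i)}\rangle$, of dimension at most $|W|\le r$. With that reading the construction is complete.
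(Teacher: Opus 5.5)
Your three steps match the paper's argument: a $(C_{\min},1)$ linear code $\mathbf{C}$ for the sum, normalization of $\bG$, Lemma~\ref{lem:field-size-lemma} to obtain $\mathbf{B}$, and the decodability/security check for $\widehat{\mathbf{B}}\cdot\mathbf{C}$. Your remarks on the column change of basis, on which $r_g$ sources must carry \eqref{eq:b-condition}, and on reading $\mathscr{L}_W^{(\sigma_i)}$ as $\langle\mathbf{H}_W^{(\sigma_i)}\rangle$ are also correct.

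\textbf{Step 1 does not hold as written.} A $(C_{\min},1)$ scalar linear code for the sum need not exist over every $\mathbb{F}_q$, and the routing argument fails. Routing would need $C_{\min}$ edge-disjoint in-trees into $\gamma$, each reaching every source, and the min-cut condition does not guarantee such a packing. Take the reversed $\binom{4}{2}$ combination network: sources $\sigma_{ij}$, $1\le i<j\le 4$, each with one edge to $u_i$ and one to $u_j$, and one edge $u_i\to\gamma$ for each $i$. Here $C_{\min}=2$. The received symbols are linear in the messages, so decoding may be taken linear. Let $\mathbf{d}_i\in\mathbb{F}_q^{1\times 2}$ be the decoding row applied to the symbol on $u_i\to\gamma$. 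The coefficient of the message $\bx_{ij}\in\mathbb{F}_q^{1\times 2}$ in the output is then $\bw\mathbf{d}_i+\bw'\mathbf{d}_j$ for some columns $\bw,\bw'$, and this must equal $\bI_2$. Hence $\mathbf{d}_1,\dots,\mathbf{d}_4$ must be pairwise linearly independent, which is impossible in $\mathbb{F}_2^{1\times 2}$. Routing fails here for every $q$: every pair $\{u_i,u_j\}$ would have to receive different colours in a $2$-colouring of $\{u_1,\dots,u_4\}$.

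\textbf{How to repair it.} Use the duality between computing the sum and multicast. Transposing all local coding coefficients transposes every source-to-sink transfer matrix. So a $(C_{\min},1)$ linear code for the sum on $\cN$ over $\mathbb{F}_q$ is equivalent to a rate-$C_{\min}$ linear multicast from $\gamma$ to $\sigma_1,\dots,\sigma_s$ on the reversed graph; each $\sigma_i$ has min-cut at least $C_{\min}$ there. That multicast exists once $q$ is at least the number of sinks $s$.

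For $r\ge 1$ your hypothesis already covers this. Every singleton $\{e\}$ lies in $\cW_r$, and distinct sources have disjoint nonempty sets of outgoing edges, so $q>r_g|\cW_r|\ge|\cE|\ge s$.

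For $r=0$ the hypothesis is only $q>r_g$. The example above with $q=2$ and $r_g=1$ shows that the base code, and hence the claimed $(C_{\min},1)$ code, need not exist. In that case the existence of $\mathbf{C}$ over $\mathbb{F}_q$ has to be assumed. The paper's proof makes exactly this assumption tacitly when it cites \cite{2024Guangsourcesecure,2025Guangfunctionsecure} for the non-secure code. With Step 1 repaired this way, your Steps 2 and 3 go through as written.
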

\begin{pf}
    Based on the results in \cite{2024Guangsourcesecure,2025Guangfunctionsecure}, there exists a $(C_{\min},1)$ linear network code for the network function computation problem $(\cN,f)$ without the security constraint. Then, from the discussion in this subsection and Lemma~\ref{lem:field-size-lemma}, we can obtain a $(C_{\min}-r,1)$ secure network code for $(\cN,f,g,r)$.
\end{pf}

Next, we will give an example to illustrate our code construction. 
\begin{figure}
    \centering
    \begin{minipage}{0.49\linewidth}
        \centering
        \includegraphics[width=1\linewidth]{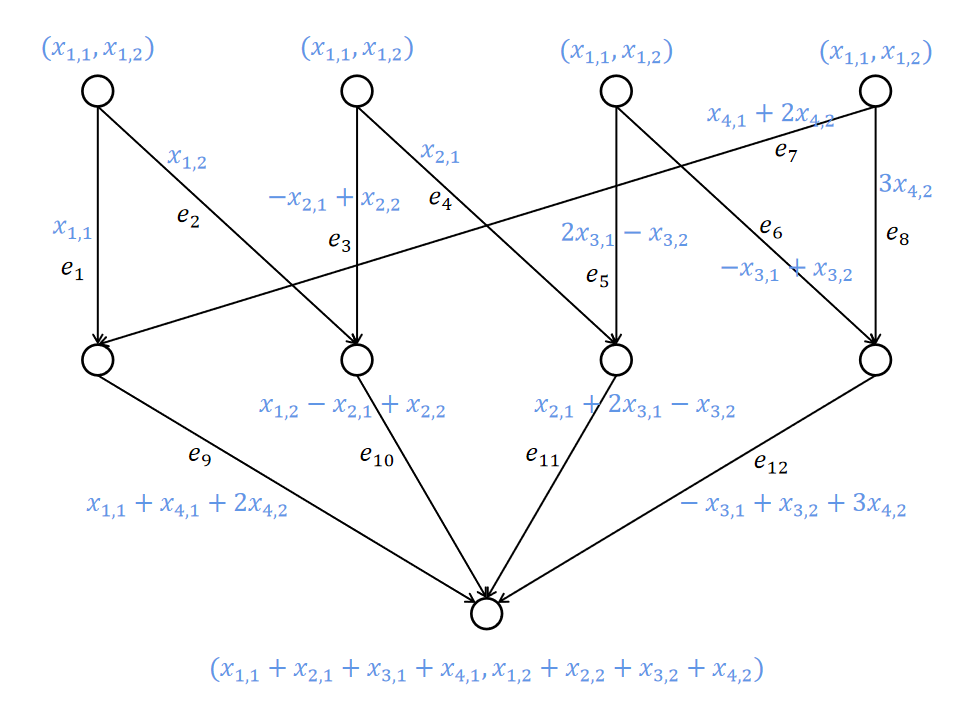}
        \caption{A $3$-layer network $\cN$ and a $(2,1)$ non-secure network code.}
        \label{fig:scheme1_1}
    \end{minipage}
    \begin{minipage}{0.49\linewidth}
        \centering
        \includegraphics[width=1\linewidth]{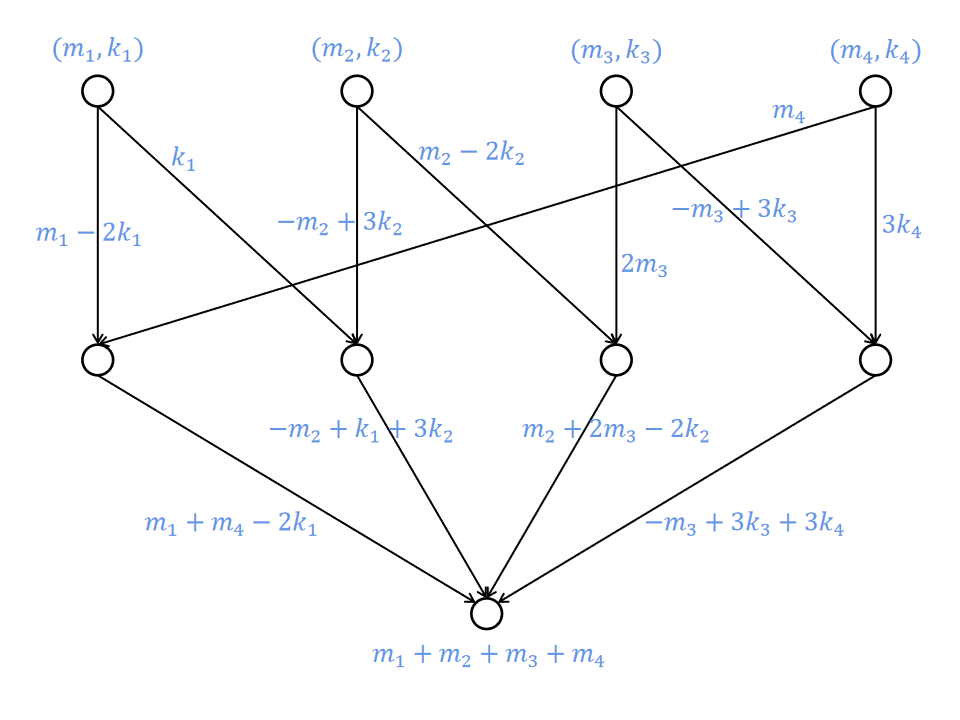}
        \caption{The secure network code in Example~\ref{eg:scheme1}.}
        \label{fig:scheme1_2}
    \end{minipage}
\end{figure}

\begin{example}\label{eg:scheme1}
    Consider the secure network function problem $(\cN,f,g,r)$. The network $\cN$ is a $3$-layer network as depicted in Fig.~\ref{fig:scheme1_1}. The target function $f$ is the algebraic sum over $\mathbb{F}_5$ and the security function $g=\bx\cdot \bG$, where 
    \[\bG^T=\begin{bmatrix}
        1&0&1&1\\0&1&1&0
    \end{bmatrix}.\] That is, the sink node desires $\bx_1+\bx_2+\bx_3+\bx_4$ and the wiretapper wants $\bx_1+\bx_3+\bx_4$ and $\bx_2+\bx_3$. Let the security level $r=1<C_{\min}=2$. 
    
    In the following, we will construct a secure network code with rate $C_{\min}-r=1$. First, we have a $(2,1)$ linear non-secure network code $\mathbf{C}$ as shown in Fig.~\ref{fig:scheme1_1}. In $\mathbf{C}$, the global encoding vectors are
    \begin{align*}
        &\mathbf{h}_1=\text{\footnotesize$\begin{bmatrix}
            1\\0\\0\\0\\0\\0\\0\\0
            \end{bmatrix}$},~\mathbf{h}_2=\text{\footnotesize$\begin{bmatrix}
            0\\1\\0\\0\\0\\0\\0\\0
            \end{bmatrix}$},~\mathbf{h}_3=\text{\footnotesize$\begin{bmatrix}
            0\\0\\-1\\1\\0\\0\\0\\0
            \end{bmatrix}$},~\mathbf{h}_4=\text{\footnotesize$\begin{bmatrix}
            0\\0\\1\\0\\0\\0\\0\\0
            \end{bmatrix}$},~\mathbf{h}_5=\text{\footnotesize$\begin{bmatrix}
            0\\0\\0\\0\\2\\-1\\0\\0
            \end{bmatrix}$},~\mathbf{h}_6=\text{\footnotesize$\begin{bmatrix}
            0\\0\\0\\0\\-1\\1\\0\\0
            \end{bmatrix}$},\\&\mathbf{h}_7=\text{\footnotesize$\begin{bmatrix}
            0\\0\\0\\0\\0\\0\\1\\2
            \end{bmatrix}$},~\mathbf{h}_8=\text{\footnotesize$\begin{bmatrix}
            0\\0\\0\\0\\0\\0\\0\\3
            \end{bmatrix}$},~\mathbf{h}_9=\text{\footnotesize$\begin{bmatrix}
            1\\0\\0\\0\\0\\0\\1\\2
            \end{bmatrix}$},~\mathbf{h}_{10}=\text{\footnotesize$\begin{bmatrix}
            0\\1\\-1\\1\\0\\0\\0\\0
            \end{bmatrix}$},~\mathbf{h}_{11}=\text{\footnotesize$\begin{bmatrix}
            0\\0\\1\\0\\2\\-1\\0\\0
            \end{bmatrix}$},~\mathbf{h}_{12}=\text{\footnotesize$\begin{bmatrix}
            0\\0\\0\\0\\-1\\1\\0\\3
            \end{bmatrix}$}.          
    \end{align*}
    
    Using the code $\mathbf{C}$, the sink node can decode $\bx_1+\bx_2+\bx_3+\bx_4$. 
    Now, we select a $2\times 2$ matrix $\mathbf{B}$ to transform $\mathbf{C}$ to a $(1,1)$ secure network code $\hbc$. Let $\mathbf{B}^{-1}=(\bb_1~\bb_2)$, such that for every edge $e$ and $i=1,2$, 
    \[\langle\bb_1\rangle\cap\langle\mathbf{h}_e^{(\sigma_i)}\rangle=\{\Zero\}.\]
    Since for $i=1,2$, $\mathbf{h}_e^{(\sigma_i)}$ only has three cases, $(1~0)^T,(0~1)^T$ and $(-1~1)^T$,
    to satisfy above condition, we can select 
    \[\bb_1=\text{\footnotesize$\begin{bmatrix}1\\2\end{bmatrix}$},~\bb_2=\text{\footnotesize$\begin{bmatrix}
            0\\1\end{bmatrix}$}.\]
    Then, we have $\mathbf{B}=\begin{bmatrix}
        1&0\\-2&1
    \end{bmatrix}$ and 
    \[\widehat{\mathbf{B}}=\begin{bmatrix}
        \mathbf{B}&\Zero&\Zero&\Zero\\\Zero&\mathbf{B}&\Zero&\Zero\\ \Zero&\Zero&\mathbf{B}&\Zero\\\Zero&\Zero&\Zero&\mathbf{B}
    \end{bmatrix}.\] Consequently, the global encoding vectors in $\hbc$ are computed by $\widehat{\mathbf{h}}_e=\widehat{\mathbf{B}}\mathbf{h}_e$,
    \begin{align*}
        &\widehat{\mathbf{h}}_1=\text{\footnotesize$\begin{bmatrix}
            1\\-2\\0\\0\\0\\0\\0\\0
            \end{bmatrix}$},~\widehat{\mathbf{h}}_2=\text{\footnotesize$\begin{bmatrix}
            0\\1\\0\\0\\0\\0\\0\\0
            \end{bmatrix}$},~\widehat{\mathbf{h}}_3=\text{\footnotesize$\begin{bmatrix}
            0\\0\\-1\\3\\0\\0\\0\\0
            \end{bmatrix}$},~\widehat{\mathbf{h}}_4=\text{\footnotesize$\begin{bmatrix}
            0\\0\\1\\-2\\0\\0\\0\\0
            \end{bmatrix}$},~\widehat{\mathbf{h}}_5=\text{\footnotesize$\begin{bmatrix}
            0\\0\\0\\0\\2\\0\\0\\0
            \end{bmatrix}$},~\widehat{\mathbf{h}}_6=\text{\footnotesize$\begin{bmatrix}
            0\\0\\0\\0\\-1\\3\\0\\0
            \end{bmatrix}$},\\&\widehat{\mathbf{h}}_7=\text{\footnotesize$\begin{bmatrix}
            0\\0\\0\\0\\0\\0\\1\\0
            \end{bmatrix}$},~\widehat{\mathbf{h}}_8=\text{\footnotesize$\begin{bmatrix}
            0\\0\\0\\0\\0\\0\\0\\3
            \end{bmatrix}$},~\widehat{\mathbf{h}}_9=\text{\footnotesize$\begin{bmatrix}
            1\\-2\\0\\0\\0\\0\\1\\0
            \end{bmatrix}$},~\widehat{\mathbf{h}}_{10}=\text{\footnotesize$\begin{bmatrix}
            0\\1\\-1\\3\\0\\0\\0\\0
            \end{bmatrix}$},~\widehat{\mathbf{h}}_{11}=\text{\footnotesize$\begin{bmatrix}
            0\\0\\1\\-2\\2\\0\\0\\0
            \end{bmatrix}$},~\widehat{\mathbf{h}}_{12}=\text{\footnotesize$\begin{bmatrix}
            0\\0\\0\\0\\-1\\3\\0\\3
            \end{bmatrix}$}.          
    \end{align*}
    The sink node can decode $m_1+m_2+m_3+m_4$ from $y_9+2y_{10}+3y_{11}$ (see Fig.~\ref{fig:scheme1_2}). And for every edge $e$, $\langle\widehat{\mathbf{h}}_e\rangle\cap\langle\mathbf{S}\rangle=\{\Zero\}$, where 
    \[\mathbf{S}^T=\begin{bmatrix}
        1&0&0&0&1&0&1&0\\0&0&1&0&1&0&0&0
    \end{bmatrix}.\]
    Therefore, $\hbc$ satisfies the decodability and security conditions. 
    Note that the code $\hbc$ is not source secure, since from $y_5$ or $y_7$, the wiretapper can directly obtain $m_3$ or $m_4$.
\end{example}
Finally, we analyze the gap between the upper bound and the lower bound for secure computing capacity.
\begin{cor}
    When $f$ is the algebraic sum and $g$ is vector linear, the upper bound in Theorem~\ref{thm:upperbound1} satisfies
    \begin{equation}
        C_{\min}-r\leq \min_{(C,W)\ \text{valid}}(|C|-|W|)\leq C_{\min}.
    \end{equation}
    Moreover, 
    \begin{itemize}
        \item if $r\geq\min\{|C|:C\in\Lambda(\cN),D_C=I_C,\text{ and }\exists~\bg\in\langle\bG\rangle,~{\rm{supp}}(\bg)=I_C\}$, then $\widehat{\cC}(\cN,f,g,r)=0$;
        \item if $r=0$, then $\widehat{\cC}(\cN,f,g,r)=C_{\min}$.
    \end{itemize}
\end{cor}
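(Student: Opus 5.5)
The corollary splits into four claims: a lower bound on the Theorem~\ref{thm:upperbound1} minimum, an upper bound on it, a zero-capacity regime, and the case $r=0$. I will treat them in that order.

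\emph{Lower bound.} The left inequality $C_{\min}-r\leq\min_{(C,W)\text{ valid}}(|C|-|W|)$ is immediate. Every valid pair has $C\in\Lambda(\cN)$, so $|C|\geq C_{\min}$, and $W\in\cW_r$, so $|W|\leq r$. Hence $|C|-|W|\geq C_{\min}-r$.

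Note that $f$ is the algebraic sum, so $\bF=\mathbf{1}$. Each $\langle\bF_{I_{C_j}}\rangle$ is then spanned by the indicator vector $\mathbf{1}_{I_{C_j}}$. Thus $t_{C,f,g}$ is the dimension of $\Span\{\mathbf{1}_{I_{C_1}},\dots,\mathbf{1}_{I_{C_p}}\}\cap\langle\bG\rangle$, and whenever the pair is valid we have $t_{C,f,g}\geq1$. This is how I read the claim "$\leq$" with $t=1$ inserted. Strictly, the middle quantity should be $\min(|C|-|W|)/t$, and since $t\geq1$ it is at most $\min(|C|-|W|)$.

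\emph{Upper bound.} For the right inequality I exhibit one valid pair with $|C|-|W|\leq C_{\min}$. Take $W=\emptyset$. Then $D_W=\emptyset$, so condition 2 holds trivially. For $C$ I would take a minimum cut that separates $\gamma$ from all of $S$, with the trivial partition. A global minimum cut has size at least $C_{\min}$, so it does not directly yield $C_{\min}$.

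I therefore use a minimum cut $C$ with $|C|=C_{\min}$ itself, separating $\gamma$ from some $\sigma_i$. With the trivial weak partition, condition 1 requires $\mathbf{1}_{I_C}\in\langle\bG\rangle$ up to intersection, which may fail. This is the main obstacle. If it fails, I fall back on the fact that any bound holds because $\widehat{\cC}\leq C_{\min}$ via Theorem~\ref{thm:csbnd}. The honest statement is then $\min(\text{upper bound of Thm.~\ref{thm:upperbound1}},\,C_{\min})\leq C_{\min}$, and I would phrase the corollary's bound as the combination of Theorem~\ref{thm:upperbound1} with Theorem~\ref{thm:csbnd}. Theorem~\ref{thm:csbnd} gives $|C|/\Rank(\mathbf 1_{I_C})=|C|$ for every cut, so the overall upper bound never exceeds $C_{\min}$.

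\emph{Zero-capacity case.} Let $C$ attain the stated minimum, with $D_C=I_C$ and some $\bg\in\langle\bG\rangle$ satisfying $\mathrm{supp}(\bg)=I_C$. I would try the trivial partition $\{C\}$ with $W=C$; this is allowed since $|C|\leq r$. Condition 2 holds because $D_W=D_C=I_C$.

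Condition 1 needs $\langle\mathbf{1}_{I_C}\rangle\cap\langle\bG\rangle\neq\{\Zero\}$. The support hypothesis gives a vector of $\langle\bG\rangle$ supported exactly on $I_C$, but not necessarily equal to $\mathbf{1}_{I_C}$. To bridge this I would refine: partition $C$ into pieces $C_j$ with $I_{C_j}$ small, ideally singletons. The weak-partition condition is automatic here, since $D_{C_j}\subseteq D_C=I_C$ and any leftover sources fall into $L$. The sum $\bigoplus_j\langle\mathbf 1_{I_{C_j}}\rangle$ then contains all vectors supported on $\cup_j I_{C_j}$, which is where I would try to place $\bg$. If singleton parts are unavailable, I will assume the intended reading is $\bg\propto\mathbf 1_{I_C}$ and use the trivial partition. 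Either way Theorem~\ref{thm:upperbound1} gives the bound $(|C|-|W|)/t=0$.

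\emph{Case $r=0$.} Here Theorem~\ref{thm:csbnd} gives $\widehat{\cC}\leq C_{\min}$ for the sum. For the reverse inequality, the previous theorem with $r=0$ gives a $(C_{\min},1)$ secure code; the field-size condition becomes $q>r_g|\cW_0|=r_g$ since $\cW_0=\{\emptyset\}$. Alternatively, the empty wiretap set leaks nothing, so any non-secure sum code already works. Hence $\widehat{\cC}(\cN,f,g,0)=C_{\min}$.
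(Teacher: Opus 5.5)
Your lower inequality is exactly the paper's argument. For the remaining claims, the obstacles you flagged are real. They sit exactly where the paper's proof is silent. The steps ``choose $W=\emptyset$'' (upper inequality), ``for any $W\subseteq C_0$ the pair $(C_0,W)$ is valid'' (zero-capacity bullet) and ``reduces to $\min_{C}|C|$'' ($r=0$) all presuppose condition 1 of validity. For the sum under the trivial partition, that condition means $\mathbf{1}_{I_C}\in\langle\bG\rangle$. Rather than hedging, you can settle both doubtful claims: as printed, they are false.

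\emph{Upper inequality.} Let $\sigma_1$ have one edge to $\gamma$ and $\sigma_2$ five parallel edges to $\gamma$, with $g(\bx)=x_2$ and $r=0$. Suppose no part of a weak partition has $\sigma_2\in I_{C_j}$. Then every part has $I_{C_j}=\{\sigma_1\}$, and the span sum is $\langle(1,0)^T\rangle$, which meets $\langle\bG\rangle=\langle(0,1)^T\rangle$ trivially. So every valid $C$ contains all five edges of $\sigma_2$, and the minimum is $5>1=C_{\min}$.

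\emph{Zero-capacity bullet.} Let $\sigma_1,\sigma_2$ reach a node $v$ by single edges $a_1,a_2$, and let $v$ reach $\gamma$ by a single edge $b$. Over $\mathbb{F}_3$ take $g(\bx)=x_1+2x_2$ and $r=1$. The cut $\{b\}$ has $D_C=I_C=S$, and $(1,2)^T\in\langle\bG\rangle$ has support $I_C$, so the bullet predicts $\widehat{\cC}=0$. Yet sending $m_1$, $m_2$, $m_1+m_2$ on $a_1$, $a_2$, $b$ is an admissible rate-$1$ code. Each of these is independent of $m_1+2m_2$, since $2\neq 0$ in $\mathbb{F}_3$ and $(1,1),(1,2)$ are linearly independent.

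So your fallbacks are the correct repairs, not concessions. The right-hand inequality holds for $\min\{\text{bound of Theorem~\ref{thm:upperbound1}},\,C_{\min}\}$. The zero-capacity bullet holds when $\mathbf{1}_{I_C}\in\langle\bG\rangle$ (over $\mathbb{F}_2$ this is the printed hypothesis). More generally it holds when $C$ admits a weak partition with $\bigcup_j I_{C_j}=D_C$ and $\sum_j\langle\mathbf{1}_{I_{C_j}}\rangle\cap\langle\bG\rangle\neq\{\Zero\}$, since then $(C,C)$ is valid.

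Drop the refinement paragraph, though, because its key claim is wrong. The weak-partition condition is not automatic: $L$ contains only sources of $I_C$ lying in no $I_{C_j}$. A source of $I_{C_{j'}}$ that lies in $D_{C_j}\setminus I_{C_j}$ for some $j\neq j'$ violates the condition. Moreover, a single-edge cut such as $\{b\}$ above admits no proper partition at all.

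Three smaller points.
\begin{enumerate}
\item As stated in the paper, Theorem~\ref{thm:csbnd} concerns linear codes. To bound $\widehat{\cC}$ over all codes, apply Lemma~\ref{lem:globalcut} to a minimum cut $C$. Since $\sum_{i\in I_C}M_i$ is uniform on $\mathbb{F}_q^{\ell}$ and independent of $(M_{S\setminus I_C},K_{S\setminus I_C})$, this gives $\ell\log q\le H(Y_C)\le n|C|\log q$.
\item If you want the left inequality for the actual bound with $t_{C,f,g}$, note that each part $C_j$ is itself a cut, so $|C|\ge p\,C_{\min}$, while $t_{C,f,g}\le p$. Hence $(|C|-|W|)/t_{C,f,g}\ge C_{\min}-r/p\ge C_{\min}-r$.
\item Your $r=0$ argument is correct: the upper bound comes from the cut-set bound, and achievability from any non-secure sum code, since $\cW_0=\{\emptyset\}$. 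It is sounder than the paper's, which reads $C_{\min}$ off $\min_{(C,W)\text{ valid}}$ and so fails on the first example above.
\end{enumerate}
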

\begin{pf}
    First, we consider
    \begin{subequations}
        \begin{align*}
            \min_{(C,W)\ \text{valid}}(|C|-|W|)&=\min_{C\in\Lambda(\cN)}\min_{\substack{W\in\cW_r:\\(C,W)~\text{valid}}}(|C|-|W|)\\
            &=\min_{C\in\Lambda(\cN)}\left(|C|-\max_{\substack{W\in\cW_r:\\(C,W)~\text{valid}}}|W|\right)\\
            &\geq \min_{C\in\Lambda(\cN)}\left(|C|-r\right)\\
            &=C_{\min}-r.
        \end{align*}
    \end{subequations}
    And the upper bound follows immediately by choosing $W=\emptyset$. 

    Now let $C_0$ be the minimum cut such that $D_{C_0}=I_{C_0}$, and $\exists~\bg\in\langle\bG\rangle,~{\rm{supp}}(\bg)=I_{C_0}$.  For any $W\subseteq C_0$, the pair $(C_0,W)$ is valid.  
    If $r\ge |C_0|$, we may take $W=C_0$, which yields $\widehat{\cC}(\cN,f,g,r)=0$.

    Finally, when $r=0$, $\min_{(C,W)\ \text{valid}}(|C|-|W|)$ reduces to $\min_{C\in\Lambda(\cN)}|C|=C_{\min}$.
\end{pf}

\section{Secure network coding for $(U,V,\alpha)$-trees}\label{sec:lowerbound-tree}
\begin{figure}
    \centering
    \begin{minipage}{0.49\linewidth}
        \centering
        \includegraphics[width=1\linewidth,trim={1cm 1cm 1cm 1cm}]{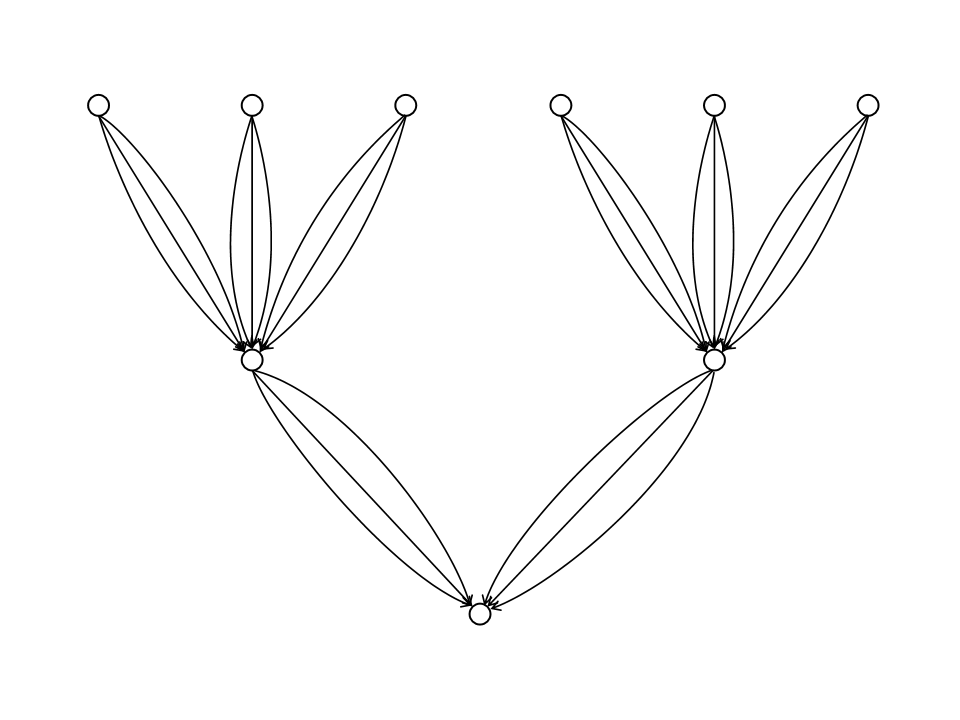}
        \caption{A $(2,3,3)$-tree}
        \label{fig:233tree}
    \end{minipage}
    \begin{minipage}{0.49\linewidth}
        \centering
        \includegraphics[width=1\linewidth,trim={1cm 1cm 1cm 1cm}]{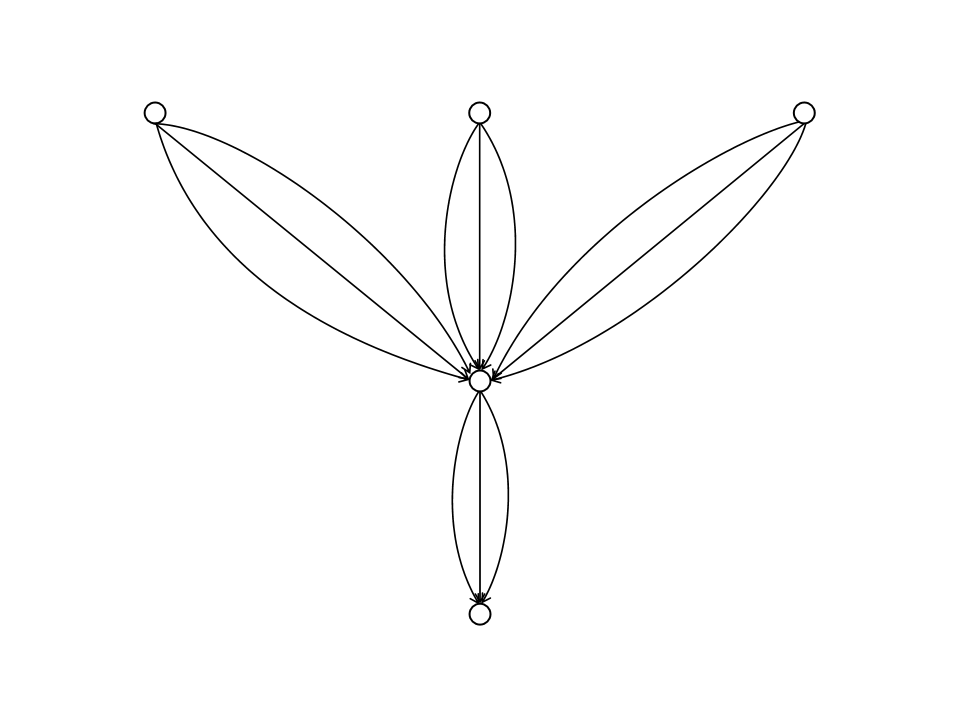}
        \caption{A branch of a $(2,3,3)$-tree.}
        \label{fig:233branch}
    \end{minipage}
\end{figure}
A network $\cN$ is defined as a \emph{$(U,V,\alpha)$-tree} (see Fig.~\ref{fig:233tree}), if it is a three-layer network with $U$ middle nodes and $UV$ source nodes, each middle node connects to exactly $V$ source nodes, and there are $\alpha$ edges between each pair of connected nodes. 
In this subsection, we will consider the secure network function computation model $(\cN,f,g,r)$, where $\cN$ is an $(U,V,\alpha)$-tree, and $f(\bx)=\bx\cdot \mathbf{F},g(\bx)=\bx\cdot \mathbf{G}$ with $\langle\mathbf{F}\rangle\cap\langle\mathbf{G}\rangle\neq\{\Zero\}$. 
This specific network is motivated by two considerations. First, from a theoretical perspective, when the target function is vector linear, the computing capacity can be determined only for multi-edge trees in the network function computation problem. Second, such tree topologies are of practical relevance, for instance, in widely studied hierarchical federated learning systems \cite{2024zhangwansunwangoptimal,xu2025hierarchical,2025LiZhangLvFan}.

The main idea of the code construction is to build secure network codes for subnetworks. Consider a $(U,V,\alpha)$-tree with source nodes $\{\sigma_{i,j}:i\in[U],j\in[V]\}$ and middle nodes $\{v_i:i\in[U]\}$. A subnetwork $\cN_i$ is called a \emph{branch} if its vertex set is $\cV_i=\{v_i\}\cup\{\gamma\}\cup\{\sigma_{i,j}:j\in[V]\}$ and it contains all edges between these vertices from the original network $\cN$ (see Fig.~\ref{fig:233branch}).
Therefore, a $(U,V,\alpha)$-tree can be decomposed into $U$ branches $\{\cN_i:i\in[U]\}.$ The topology ensures that messages on edges within one branch are independent of the source messages generated in any other branch. This independence allows a corresponding decomposition of the target and security functions. Recall that for a subset of source nodes $A\subseteq S$, $\bF_{A}$ is a $UV\times r_f$ matrix which is obtained by replacing the rows corresponding to $S\setminus A$ with $\Zero_{1\times r_f}$. For every $i\in[U]$, let $S_i=\{\sigma_{i,j}:j\in[V]\}$. For notation simplicity, we use $\bF_{i}$ and $\bG_i$ to denote $\bF_{S_i}$ and $\bG_{S_i}$, respectively. Similarly, we let $f_i(\bx)=\bx \bF_i$ and $g_i(\bx)=\bx \bG_i$.

The following key observation reduces the secure coding problem for the entire tree $(\cN,f,g,r)$ to those for its branches $(\cN_i,f_i,g_i,r_i)$.

\begin{prop}\label{prop:tree2branch}
    Consider the secure computation model $(\cN,f,g,r)$ on a $(U,V,\alpha)$-tree $\cN$, and let $(\cN_i,f_i,g_i,r_i)$ denote the corresponding problem on branch $\cN_i$. Suppose an $(\ell,n)$ secure network code exists for each $(\cN_i,f_i,g_i,r_i)$, $i\in[U]$. Then an $(\ell,n)$ secure network code exists for $(\cN,f,g,r)$, when the wiretapper accesses fewer than $r_i$ edges in each branch $\cN_i$.
\end{prop}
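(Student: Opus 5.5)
The plan is to build the code for the whole tree by running the $U$ branch codes side by side and letting the sink combine their outputs linearly. Let $\hbc_i$ be the given $(\ell,n)$ secure code for $(\cN_i,f_i,g_i,r_i)$. On every edge of $\cN_i$ (source-to-$v_i$ edges and $v_i$-to-$\gamma$ edges) we use exactly the local encoding functions of $\hbc_i$. The messages and keys of sources in $S_i$ enter only through these edges. Since the branches share no edge, this is a well-defined $(\ell,n)$ code on $\cN$, and every global encoding function $\heta_e$ for $e\in\cN_i$ depends only on $(M_{S_i},K_{S_i})$.

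\textbf{Decodability.} The sink applies the decoding function $\hvarphi_i$ of $\hbc_i$ to $Y_{\In(\gamma)\cap\cN_i}$, which returns $M_S(\bF_i\otimes\bI_\ell)$. The rows of $\bF_1,\dots,\bF_U$ have disjoint supports that together cover $S$, so $\bF=\sum_i\bF_i$. Hence
\[M_S(\bF\otimes\bI_\ell)=\sum_{i\in[U]}M_S(\bF_i\otimes\bI_\ell),\]
and the sink outputs this sum with zero error.

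\textbf{Security.} Take a wiretap set $W$ and split it as $W=\bigcup_i W_i$ with $W_i=W\cap\cN_i$. By hypothesis each $|W_i|$ lies within the security level $r_i$ of branch $i$, so the branch code gives $I(Y_{W_i};g_i(M_S))=0$. Define $Z_i=(M_{S_i},K_{S_i})$. The $Z_i$ are mutually independent, and both $Y_{W_i}$ and $g_i(M_S)=M_S(\bG_i\otimes\bI_\ell)$ are functions of $Z_i$. So the pairs $(Y_{W_i},g_i(M_S))$, $i\in[U]$, are mutually independent. Combined with the per-branch security, this gives
\[I\big(Y_W;(g_1(M_S),\dots,g_U(M_S))\big)=\sum_i I\big(Y_{W_i};g_i(M_S)\big)=0.\]
Since $\bG=\sum_i\bG_i$, the value $g(M_S)=\sum_i g_i(M_S)$ is a function of $(g_1(M_S),\dots,g_U(M_S))$. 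The data-processing inequality then yields $I(Y_W;g(M_S))=0$.

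\textbf{Main obstacle.} The step that needs care is the security requirement. Branch security only protects the component $g_i$, but this is stronger than what we need, because protecting the whole tuple $(g_i)_i$ implies protecting their sum. The independence-across-branches argument must be written carefully, factoring the joint distribution as a product over $i$, so that the mutual information genuinely splits into the sum above. A second point is that a branch may have $\bG_i=\Zero$. In that case its constraint is vacuous and any non-secure code for $f_i$ works on that branch, which the argument handles automatically.
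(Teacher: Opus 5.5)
Your proposal is correct and follows the paper's proof. It uses the same branch-by-branch construction and decodes $f$ as $\sum_i f_i$. It shows that the tuple $(g_1(M_S),\dots,g_U(M_S))$ is independent of $Y_W$, using per-branch security together with the independence of $(M_{S_i},K_{S_i})$ across branches. It then concludes by data processing since $g=\sum_i g_i$. The only difference is cosmetic: you get $I(Y_W;(g_i(M_S))_i)=\sum_i I(Y_{W_i};g_i(M_S))$ directly from the mutual independence of the pairs $(Y_{W_i},g_i(M_S))$, whereas the paper reaches the same conclusion through a chain-rule computation conditioned on $(M_{S\setminus S_i},K_{S\setminus S_i})$.
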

\begin{pf}
    For each $i \in [U]$, let $\hbc_i$ be an $(\ell, n)$ linear secure network code for $(\cN_i, f_i, g_i, r_i)$, and denote its global encoding matrix on edge $e \in \cE_i$ by $\widehat{\mathbf{H}}_e(\hbc_i)$. Then we can construct an $(\ell,n)$ network code $\hbc$ over $\cN$ by designing the global encoding matrix for each edge $e\in\cE$ as $\widehat{\mathbf{H}}_e(\hbc)=\widehat{\mathbf{H}}_e(\hbc_i)$ if $e\in\cE_i$. Since each $\hbc_i$ allows the sink $\gamma$ to decode $f_i(M_S)$, $\gamma$ can decode all $\{f_i(M_S): i \in [U]\}$, which leads to the decodability condition.

    We now prove the security condition. Suppose the wiretapper accesses at most $r_i$ edges in each branch $\cN_i$, i.e., $W = \bigcup_{i=1}^U W_i$ with $|W_i| \leq r_i$. First, observe that in $\hbc$, for every $i\in[U]$, $Y_{W_i}$ is determined by $M_{S_i},K_{S_i}$, that is,
    \begin{equation}\label{eq:prop1-6}
        H(Y_{W_i}|M_{S_i},K_{S_i})=0.
    \end{equation}
    Moreover, by the definition of $f_i, g_i$,
    \begin{equation}
        H(f_i(M_S)|M_{S_i},K_{S_i})=0,
    \end{equation}
    \begin{equation}\label{eq:prop1-2}
        H(g_i(M_S)|M_{S_i},K_{S_i})=0.
    \end{equation}
    Due to the security of $\hbc_i$, for each $i\in[U]$
    \begin{equation}\label{eq:propsecurity}
        I(g_i(M_S);Y_{W_i})=0.
    \end{equation}
    From the independence of $\{M_{S_i}:i\in[U]\},\{K_{S_i}:i\in[U]\}$, we obtain
    \begin{subequations}\label{seq:prop1-1}
        \begin{align}
            &I(g_{i}(M_S);M_{S\setminus S_i},K_{S\setminus S_i})\\\leq &I(g_{i}(M_S),M_{S_i},K_{S_i};M_{S\setminus S_i},K_{S\setminus S_i})\\
            \overset{\eqref{eq:prop1-2}}{=}&I(M_{S_i},K_{S_i};M_{S\setminus S_i},K_{S\setminus S_i})=0,
        \end{align}
    \end{subequations}
    where (\ref{seq:prop1-1}b) is due to the property of mutual information. Similarly, we have
    \begin{equation}\label{eq:prop1-3}
        I(Y_{W_i}; M_{S\setminus S_i},K_{S\setminus S_i})=0,
    \end{equation}
    \begin{equation}\label{eq:prop1-4}
        I(Y_{W_i},g_{i}(M_S); M_{S\setminus S_i},K_{S\setminus S_i})=0.
    \end{equation}
    Combining \eqref{seq:prop1-1}, \eqref{eq:prop1-3}, \eqref{eq:prop1-4} with the property of mutual information, we have
    \begin{subequations}
        \begin{align}
            &I(g_i(M_S);Y_{W_i})-I(g_i(M_S);Y_{W_i}|M_{S\setminus S_i},K_{S\setminus S_i})\\
            =&I(g_{i}(M_S);M_{S\setminus S_i},K_{S\setminus S_i})+I(Y_{W_i}; M_{S\setminus S_i},K_{S\setminus S_i})-I(Y_{W_i},g_{i}(M_S); M_{S\setminus S_i},K_{S\setminus S_i})\\
            =&0,
        \end{align}
    \end{subequations}
    which implies 
    \begin{equation}\label{eq:prop1-8}
        I(g_i(M_S);Y_{W_i}|M_{S\setminus S_i},K_{S\setminus S_i})=I(g_i(M_S);Y_{W_i})\overset{\eqref{eq:propsecurity}}{=}0.
    \end{equation}
    Consequently, for $i\in[U]$, 
    \begin{subequations}\label{seq:prop1-5}
        \begin{align}
            H(g_i(M_S))\geq &H(g_i(M_S)|Y_{W},g_1(M_S),\cdots,g_{i-1}(M_S))\\
            \geq&H(g_i(M_S)|Y_{W},g_1(M_S),\cdots,g_{i-1}(M_S),M_{S\setminus S_i},K_{S\setminus S_i})\\
            \overset{\eqref{eq:prop1-6}\eqref{eq:prop1-2}}{=}&H(g_i(M_S)|Y_{W_i},M_{S\setminus S_i},K_{S\setminus S_i})\\
            =&H(g_i(M_S)|M_{S\setminus S_i},K_{S\setminus S_i})-I(g_i(M_S);Y_{W_i}|M_{S\setminus S_i},K_{S\setminus S_i})\\
            \overset{\eqref{eq:prop1-8}}{=}&H(g_i(M_S)|M_{S\setminus S_i},K_{S\setminus S_i})=H(g_i(M_S)),
        \end{align}
    \end{subequations}
    where (\ref{seq:prop1-5}b) uses monotonicity of conditional entropy, and (\ref{seq:prop1-5}e) follows from the security of $\hbc_i$. The same holds for $i=1$ without the conditioning on $\{g_1(M_S),\dots,g_{i-1}(M_S)\}$. Therefore,
    \begin{subequations}\label{seq:prop1-7}
        \begin{align}
            &H(g_1(M_S),\cdots,g_U(M_S)|Y_{W})\\
            =&H(g_1(M_S)|Y_W)+H(g_2(M_S)|Y_W,g_1(M_S))+\cdots+H(g_U(M_S)|Y_{W},g_1(M_S),\cdots,g_{U-1}(M_S))\\
            \overset{\eqref{seq:prop1-5}}{=}&H(g_1(M_S))+H(g_2(M_S))+\cdots+H(g_U(M_S))\\
            =&H(g_1(M_S),\cdots,g_{U}(M_S)),
        \end{align}
    \end{subequations}
    where (\ref{seq:prop1-7}b) follows from the chain rule of the conditional entropy, and (\ref{seq:prop1-7}d) is due to the independence of $M_{S_i},K_{S_i}$ and \eqref{eq:prop1-2}.
    In other words,
    \begin{equation}
        I(g_1(M_S),\cdots,g_U(M_S);Y_W)=0.
    \end{equation}
    Since $g(M_S)=g_1(M_S)+\cdots+g_U(M_S)$, 
    \begin{subequations}
        \begin{align}
            I(g(M_S);Y_W)\leq&I(g(M_S),g_1(M_S),\cdots,g_U(M_S);Y_W)\\
            =&I(g_1(M_S),\cdots,g_U(M_S);Y_W)=0.
        \end{align}
    \end{subequations}
    Thus $\hbc$ satisfies both decodability and security under the stated wiretapping constraint.
\end{pf}
Proposition~\ref{prop:tree2branch} establishes a sufficient condition for securing the global network, namely, securing every branch individually. However, this condition is not necessary. In the following, we extend this result to a more general form.

\begin{prop}\label{prop:extension}
    Consider the secure computation model $(\cN,f,g,r)$ on a $(U,V,\alpha)$-tree $\cN$, and let $(\cN_i,f_i,g_i,r_i)$ denote the corresponding problem on branch $\cN_i$. Let $T\subseteq[U]$ such that 
    \begin{equation}\label{eq:prop-condition}
        (\bigoplus_{i\in[U]\setminus T}\langle\bG_i\rangle)\cap\langle\bG\rangle=\{\Zero\}.
    \end{equation}
    If for every $i\in T$, there exists an $(\ell,n)$ secure network code for $(\cN_i,f_i,g_i,r_i)$, and for $i\in [U]\setminus T$, there exists an $(\ell,n)$ network code enabling $\gamma$ to finish the computation, then we can construct an $(\ell,n)$ secure network code over $\cN$, when the wiretapper accesses fewer than $r_i$ edges in the branch $\cN_i,i\in T$.
\end{prop}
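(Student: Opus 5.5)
The plan is to glue branch codes exactly as in the proof of Proposition~\ref{prop:tree2branch}. For $i\in T$ we use the given secure code $\hbc_i$ on $\cN_i$. For $i\in[U]\setminus T$ we use the given non-secure code $\mathbf{C}_i$, with no keys, on $\cN_i$. The global encoding matrix on each edge $e\in\cE_i$ is taken from the corresponding branch code. Decodability is immediate: $\gamma$ recovers every $f_i(M_S)$, and since $\bF=\sum_i\bF_i$ it obtains $f(M_S)=\sum_i f_i(M_S)$.

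For security, fix $W=\bigcup_i W_i$ with $|W_i|\le r_i$ for $i\in T$; for $i\notin T$ the sets $W_i$ are arbitrary. I first repeat the argument of Proposition~\ref{prop:tree2branch}, restricted to the index set $T$. Write $g_T(M_S)\eqdef\big(g_i(M_S):i\in T\big)$. Branch $i$ messages depend only on $(M_{S_i},K_{S_i})$, and these are independent across branches. So the chain-rule computation \eqref{seq:prop1-5}--\eqref{seq:prop1-7}, conditioned additionally on $(M_{S\setminus S_T},K_{S\setminus S_T})$ where $S_T=\bigcup_{i\in T}S_i$, gives
\[I\big(g_T(M_S);Y_W,M_{S\setminus S_T},K_{S\setminus S_T}\big)=0.\]
Here $Y_{W_i}$ for $i\notin T$ is a function of the conditioning variables. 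The secure branches contribute no information, and everything the wiretapper sees from the other branches is absorbed into $M_{S\setminus S_T}$.

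It then remains to show that $g(M_S)$ is independent of $\big(Y_W,M_{S\setminus S_T},K_{S\setminus S_T}\big)$. This is where condition \eqref{eq:prop-condition} enters, and it is the main obstacle. We have $g(M_S)=g_T(M_S)\cdot\mathbf{A}+M_S(\bG_{[U]\setminus T}\otimes\bI_\ell)$ for a suitable linear map $\mathbf{A}$, but the second term is known to the conditioned observer. So the needed property is that $g(M_S)$ stays uniform on $\langle\bG\rangle$, i.e., conditionally uniform. My first attempt would be to write $g(M_S)=M_S(\bG\otimes\bI_\ell)$ and argue linearly. Suppose some nonzero linear functional of $g(M_S)$ were determined by the observation. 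Since $g_T$ is independent of the observation and uniform, that functional must be determined by $M_{S\setminus S_T}$ alone. Then $\bG\bm\beta=\bG_{[U]\setminus T}\bm\beta\neq\Zero$, which lies in $\big(\bigoplus_{i\notin T}\langle\bG_i\rangle\big)\cap\langle\bG\rangle$ and contradicts \eqref{eq:prop-condition}.

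Making this rigorous for possibly nonlinear branch codes is delicate, so I would restrict to linear codes, as the section does. In that setting I would show $\langle\widehat{\mathbf{H}}_W\rangle\cap\langle\mathbf{S}\rangle=\{\Zero\}$ directly, using the analogue of Theorem~\ref{thm:secure-condition-space-form}, whose proof does not use that $f$ is the sum. Suppose $\widehat{\mathbf{H}}_W\bm\alpha=\mathbf{S}\bm\beta$ with $\bm\beta\neq\Zero$. Project onto the rows of $S_T$ and use per-branch security on each $W_i$, $i\in T$, to get $\bG_i\bm\beta$-parts equal to zero for $i\in T$. Then $(\bG\otimes\bI_\ell)\bm\beta$ is supported on $[U]\setminus T$, and \eqref{eq:prop-condition} forces it to vanish, a contradiction.

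The step I expect to be hardest is the per-branch projection. The components of $\widehat{\mathbf{H}}_W\bm\alpha$ on branch $i$ involve only $\widehat{\mathbf{H}}_{W_i}$, so that part is fine. The difficulty is that branch security gives $\langle\widehat{\mathbf{H}}_{W_i}\rangle\cap\langle\mathbf{S}_i\rangle=\{\Zero\}$ for the branch's own $\mathbf{S}_i$, built from $\bG_i$. A vector in $\langle\mathbf{S}\rangle$ restricted to $S_i$ is exactly an element of $\langle\mathbf{S}_i\rangle$, which closes the argument.
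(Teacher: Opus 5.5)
Your argument is correct for linear branch codes, but the route you commit to is not the paper's.

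\textbf{The paper's route.} The paper stays in the entropy framework. It conditions on all of $\{M_{S_i},K_{S_i}\}_{i\notin T}$, so that $g(M_S)$ becomes $M_S\big((\sum_{i\in T}\bG_i)\otimes\bI_\ell\big)$ plus a known shift. It derives $\Rank(\sum_{i\in T}\bG_i)=r_g$ from \eqref{eq:prop-condition}, which is exactly your computation $\bG\bm\beta=(\sum_{i\notin T}\bG_i)\bm\beta\neq\Zero$. It then uses the independence of $\{g_i(M_S)\}_{i\in T}$ from $\{Y_{W_i}\}_{i\in T}$ to conclude that $g(M_S)$ keeps entropy $r_g\ell\log q$.

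So your abandoned first attempt \emph{is} the paper's proof, and the step you found delicate is a one-liner. $\sum_{i\in T}g_i(M_S)$ is a function of $g_T(M_S)$, hence independent of $O=(Y_W,M_{S\setminus S_T},K_{S\setminus S_T})$. It is uniform on $\mathbb{F}_q^{\ell r_g}$ because $\sum_{i\in T}\bG_i$ has full column rank. Adding the $O$-measurable remainder $M_S\big((\sum_{i\notin T}\bG_i)\otimes\bI_\ell\big)$ is then a one-time pad. No linear-functional reasoning is needed, and nothing uses linearity of the codes; the paper's proof calls the $\hbc_i$ linear but never uses it.

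\textbf{Your route.} You project a hypothetical $\widehat{\mathbf{H}}_W\bm\alpha=\mathbf{S}\bm\beta$ onto each secure branch and invoke the ``only if'' half of Theorem~\ref{thm:secure-condition-space-form} there. This is valid, and it makes the meaning of \eqref{eq:prop-condition} very transparent: no nonzero vector of $\langle\bG\rangle$ vanishes on $S_T$. The cost is that every branch code must be linear, including the non-secure ones for $i\notin T$, so that the ``if'' half applies to the glued code.

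Two small points you should write out:
\begin{itemize}
\item The ``only if'' half holds even though $\bG_i$ is rank-deficient. Positivity of the entropy comes from $\mathbf{S}_i\bm\beta\neq\Zero$, not from $\bm\beta\neq\Zero$.
\item \eqref{eq:prop-condition} must be lifted to $\bG\otimes\bI_\ell$. This follows by applying it to each of the $\ell$ slices $\bm\gamma_a\in\mathbb{F}_q^{r_g}$ of $\bm\beta$.
\end{itemize}
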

\begin{pf}
    The construction is similar to that in Proposition~\ref{prop:tree2branch}. For each $i \in [U]$, let $\hbc_i$ be an $(\ell, n)$ linear secure network code for $(\cN_i, f_i, g_i, r_i)$. For $i\in [U]\setminus T$, $\hbc_i$ only satisfies the decodability condition. For $i\in T$, $\hbc_i$ satisfies both the decodability condition and secure condition with security level $r_i$. Then we can construct an $(\ell,n)$ network code $\hbc$ over $\cN$ by designing the global encoding matrix for each edge $e\in\cE$ as $\widehat{\mathbf{H}}_e(\hbc)=\widehat{\mathbf{H}}_e(\hbc_i)$ if $e\in\cE_i$.
    The decodability condition naturally follows from the decodability of $\hbc_i$.

    To establish security, we must show that for $W=W_1\cup\cdots\cup W_U$, $|W_i|\leq r_i,i\in T$,
    \[I(g(M_S);Y_W)=0.\]
    As before, for each $i \in [U]$,
    \begin{equation}
        H(g_i(M_S)|M_{S_i},K_{S_i})=0,
    \end{equation}
    \begin{equation}\label{eq:prop2-2}
        H(Y_{W_i}|M_{S_i},K_{S_i})=0.
    \end{equation}
    For $i\in T$, due to the security of $\hbc_i$, we have
    \begin{equation}
        I(g_i(M_S);Y_{W_i})=0.
    \end{equation}
    Moreover, with the same argument as in \eqref{seq:prop1-7}, 
    \begin{equation}\label{eq:prop2-6}
        H(\{g_i(M_S)\}_{i\in T}|\{Y_{W_i}\}_{i\in[T]})=H(\{g_i(M_S)\}_{i\in T}).
    \end{equation}
    For branches outside $T$, we may assume the worst-case scenario where the wiretapper learns the source messages $M_{S_i}$ entirely, i.e.,
    \begin{equation}
        H(g_i(M_S)|Y_{W_i})=0.
    \end{equation}
    Consequently, we have
    \begin{subequations}\label{seq:prop2-1}
        \begin{align}
            H(g(M_S)|Y_W)=&H(g(M_S)|\{Y_{W_i}\}_{i\in[U]})\\
            \geq&H(g(M_S)|\{Y_{W_i}\}_{i\in[U]},\{M_{S_i},K_{S_i}\}_{i\in [U]\setminus T})\\
            \overset{\eqref{eq:prop2-2}}{=}&H(g(M_S)|\{Y_{W_i}\}_{i\in T},\{M_{S_i},K_{S_i}\}_{i\in [U]\setminus T})\\
            =&H(M_S\sum_{i\in[U]}(\bG_i\otimes \bI_\ell)|\{Y_{W_i}\}_{i\in T},\{M_{S_i},K_{S_i}\}_{i\in [U]\setminus T})\\
            =&H(M_S\sum_{i\in T}(\bG_i\otimes \bI_\ell)|\{Y_{W_i}\}_{i\in T},\{M_{S_i},K_{S_i}\}_{i\in [U]\setminus T}),
        \end{align}
    \end{subequations}
    where (\ref{seq:prop2-1}b) is due to the property of entropy, and (\ref{seq:prop2-1}d) is because $g(M_S)=M_S(\bG\otimes\bI_\ell)=M_S((\sum_{i\in[U]}\bG_i)\otimes\bI_\ell)=M_S\sum_{i\in[U]}(\bG_i\otimes \bI_\ell)$.
    
    Under the uniform distribution of $M_S$,
    \begin{equation}
        H(M_S(\bG\otimes\bI_\ell))=r_g\ell\log q,\text{ and }H(M_S((\sum_{i\in T}\bG_i)\otimes\bI_\ell))=\Rank(\sum_{i\in T}\bG_i)\ell\log q. 
    \end{equation}
    In fact, we have $\Rank(\sum_{i\in T}\bG_i)=\Rank(\bG)=r_g$, otherwise there exists a nonzero vector $\bm{\alpha}\in\mathbb{F}_q^{r_g}$ such that $(\sum_{i\in T}\bG_i)\bm{\alpha}=0$. This implies that $\bG\bm{\alpha}=(\sum_{i\in [U]}\bG_i)\bm{\alpha}=(\sum_{i\in [U]\setminus T}\bG_i)\bm{\alpha}$, which is a contradiction to \eqref{eq:prop-condition}. 
    Thus,
    \begin{equation}\label{eq:prop2-3}
        H(M_S((\sum_{i\in T}\bG_i)\otimes\bI_\ell))=r_g\ell\log q=H(M_S(\bG\otimes\bI_\ell)).
    \end{equation}
    Moreover, we have
    \begin{subequations}\label{seq:prop2-4}
        \begin{align}
            &I(M_S((\sum_{i\in T}\bG_i)\otimes\bI_\ell);\{M_{S_i},K_{S_i}\}_{i\in [U]\setminus T}|\{Y_{W_i}\}_{i\in T})\\
            \leq&I(M_S((\sum_{i\in T}\bG_i)\otimes\bI_\ell),\{M_{S_i},K_{S_i}\}_{i\in T};\{M_{S_i},K_{S_i}\}_{i\in [U]\setminus T}|\{Y_{W_i}\}_{i\in T})\\
            =&I(\{M_{S_i},K_{S_i}\}_{i\in T};\{M_{S_i},K_{S_i}\}_{i\in [U]\setminus T}|\{Y_{W_i}\}_{i\in T})=0,
        \end{align}
    \end{subequations}
    where (\ref{seq:prop2-4}b) is due to the property of mutual information, and (\ref{seq:prop2-4}c) is because that $M_S((\sum_{i\in T}\bG_i)\otimes\bI_\ell)$ is determined by $\{M_{S_i},K_{S_i}\}_{i\in T}$.
    Thus, continue (\ref{seq:prop2-1}), we have
    \begin{subequations}\label{seq:prop2-5}
        \begin{align}
            H(g(M_S)|Y_W)\geq &H(M_S((\sum_{i\in T}\bG_i)\otimes\bI_\ell)|\{Y_{W_i}\}_{i\in T},\{M_{S_i},K_{S_i}\}_{i\in [U]\setminus T})\\
            =&H(M_S((\sum_{i\in T}\bG_i)\otimes\bI_\ell)|\{Y_{W_i}\}_{i\in T})\nonumber\\&-I(M_S((\sum_{i\in T}\bG_i)\otimes\bI_\ell);\{M_{S_i},K_{S_i}\}_{i\in [U]\setminus T}|\{Y_{W_i}\}_{i\in T})\\
            \overset{\eqref{seq:prop2-4}}{=}&H(M_S((\sum_{i\in T}\bG_i)\otimes\bI_\ell)|\{Y_{W_i}\}_{i\in T})\\
            \overset{\eqref{eq:prop2-6}}{=}&H(M_S((\sum_{i\in T}\bG_i)\otimes\bI_\ell))\\
            \overset{\eqref{eq:prop2-3}}{=}&H(M_S(\bG\otimes\bI_\ell))=H(g(M_S)),
        \end{align}
    \end{subequations}
    where (\ref{seq:prop2-5}b) is due to the property of conditional entropy.
    Hence, the security of $\hbc$ is proved.
\end{pf}

While Proposition~\ref{prop:tree2branch} requires security on all branches, Proposition~\ref{prop:extension} demonstrates that security on a carefully chosen subset $T$ of branches is sufficient, subject to condition \eqref{eq:prop-condition}. Both results ultimately reduce the problem to constructing secure codes on individual branches. In the following, we therefore concentrate on designing secure network codes for the secure network computation  $(\cN_i,f_i,g_i,r_i)$ over each branch.
\begin{rmk}
    The core idea of Proposition~\ref{prop:extension} is to find a set $T$ of branches that the network codes should be secure. The required condition for $T$ is characterized by \eqref{eq:prop-condition}. Notably, a very similar idea appears in the recent work on vector linear secure aggregation by Hu and Ulukus \cite{hu2026capacity}. In their work, they determined a set of users who need to encrypt their data. While the idea of finding a critical subset is similar, the network model differs from ours. In their aggregation model, each user connects via a single link. In our network model, this single link is replaced by an entire branch, which may contain multiple links and internal network coding. Accordingly, the security function in our work is described by a matrix over each branch, while in their model each user is associated with a vector.
\end{rmk}

\subsection{Linear secure network codes for branches}\label{subsec:branch}
In this subsection, we construct a linear secure network code for $(\cN_i, f_i, g_i, r_i)$. To simplify notation, we omit the subscript $i$ throughout this subsection. Consequently, the network $\cN$ considered here has $V$ source nodes, one middle node, and one sink node. The target function and the security function are $f(\bx)=\bx\bF$ and $g(\bx)=\bx\bG$, where $\bx\in\mathbb{F}_q^{V},\bF\in\mathbb{F}_q^{V\times r_f}$ and $\bG\in\mathbb{F}_q^{V\times r_g}$ with $r_f=\Rank(\bF)$ and $r_g=\Rank(\bG)$. Note that in the general case, $\bF$ and $\bG$ are obtained by truncating the original functions to the branch and may not be full-rank. However, we can focus on their full-rank components as they span the relevant subspaces.

From the upper bounds in Theorem~\ref{thm:csbnd}, Theorem~\ref{thm:upperbound1} and Theorem~\ref{thm:upperbound2}, we obtain the following upper bound for the secure network computation model over the branch $(\cN,f,g,r)$.
\begin{cor}\label{cor:ub-branch}
    For the secure network function computation problem over the branch $(\cN,f,g,r)$, the secure computing capacity 
    \[\widehat{\cC}(\cN,f,g,r)\leq\min\left\{\frac{\alpha}{r_f},\min_{(C,W)\ \text{valid}}\frac{|C|-|W|}{t_{C,f,g}},\min_{(C,B,W)\ \text{valid}}\frac{|C\cup B|-|W|}{t_{C,f,g}}\right\},\]
    where $t_{C,f,g}=\dim\left(\langle \bF_{I_{C_1}}\rangle\bigoplus\cdots\langle \bF_{I_{C_p}}\rangle\bigcap\langle \bG\rangle\right).$
\end{cor}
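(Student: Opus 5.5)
The corollary is a direct combination of three earlier bounds, each specialized to the branch topology, so the plan is to take the minimum of them. Recall that a branch has $V$ source nodes $\sigma_1,\dots,\sigma_V$, each joined to the single middle node $v$ by $\alpha$ parallel edges. The node $v$ is joined to the sink $\gamma$ by $\alpha$ parallel edges. The second and third terms need no specialization: Theorem~\ref{thm:upperbound1} and Theorem~\ref{thm:upperbound2} hold for arbitrary networks and arbitrary vector linear $f,g$. Applying them to the model $(\cN,f,g,r)$ on the branch gives, for every valid pair $(C,W)$ and every valid triple $(C,B,W)$,
\[
\widehat{\cC}(\cN,f,g,r)\le \frac{|C|-|W|}{t_{C,f,g}}, \qquad \widehat{\cC}(\cN,f,g,r)\le \frac{|C\cup B|-|W|}{t_{C,f,g}}.
\]
Taking minima over all valid pairs and valid triples yields the second and third terms.

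For the first term $\alpha/r_f$, I will use the non-secure bound. Any admissible secure $(\ell,n)$ code is, in particular, a code that computes $f$ with zero error at $\gamma$. Take the cut $C=\In(\gamma)$, which consists of the $\alpha$ edges from $v$ to $\gamma$. Removing these edges disconnects every source from $\gamma$, so $I_C=S$ and $\bF_{I_C}=\bF$. Hence $\Rank(\bF_{I_C})=r_f$, and the cut bound gives $\ell/n\le \alpha/r_f$.

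The one point needing care is that Theorem~\ref{thm:csbnd} is stated for linear codes, whereas the capacity $\widehat{\cC}$ allows arbitrary codes. There are two ways to handle this. The first is to invoke the general cut-set bound of \cite{2018HTYG}, from which Theorem~\ref{thm:csbnd} was derived and which holds for any code. The second is to argue directly with Lemma~\ref{lem:globalcut} applied to $C=\In(\gamma)$. Since $I_C=S$, the lemma gives $H(M_S(\bF\otimes\bI_\ell)\mid Y_C)=0$. Therefore
\[
\ell r_f\log q = H(M_S(\bF\otimes\bI_\ell)) \le H(Y_C) \le n\alpha\log q.
\]
Here the first equality holds because $\bF$ has full column rank $r_f$ and the messages are uniform. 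This self-contained argument is what I would write.

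Finally, all three inequalities bound the same rate $\ell/n$ of every admissible code, so the rate is at most their minimum. Letting the rate approach the capacity gives the claimed bound on $\widehat{\cC}(\cN,f,g,r)$.
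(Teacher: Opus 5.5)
Your proposal is correct and follows the same route as the paper: the second and third terms come directly from Theorems~\ref{thm:upperbound1} and~\ref{thm:upperbound2}, and the first term comes from the cut-set bound on $C=\In(\gamma)$, where $I_C=S$. Your direct entropy argument for $\alpha/r_f$ is a useful tightening, since the paper bounds a capacity defined over all codes by invoking Theorem~\ref{thm:csbnd}, which is stated only for linear codes; note also that this argument needs only decodability at $\gamma$, not the full Lemma~\ref{lem:globalcut}.
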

\begin{pf}
    From Theorem~\ref{thm:csbnd} and the network topology, we obtain
    \[\widehat{\cC}(\cN,f,g,r)\leq \cC(\cN,f)\leq\frac{|\In(\gamma)|}{\Rank(\bF)}=\frac{\alpha}{r_f}.\]
    From Theorem~\ref{thm:upperbound1} and Theorem~\ref{thm:upperbound2}, we can obtain the rest of the result.
\end{pf}

Let $\widehat{\cC}^*\eqdef\min\left\{\frac{\alpha}{r_f},\min_{(C,W)\ \text{valid}}\frac{|C|-|W|}{t_{C,f,g}},\min_{(C,B,W)\ \text{valid}}\frac{|C\cup B|-|W|}{t_{C,f,g}}\right\}$. 
Based on the upper bound, the construction will be presented according to the following two cases: 1) when $\widehat{\cC}^*=\alpha/r_f$ and 2) when $\widehat{\cC}^*=\ell/n<\alpha/r_f$.

\textbf{Case 1:}
We now present an $(\alpha, r_f)$ linear secure network code for a branch under the condition $\widehat{\cC}^* = \alpha/r_f$. This condition implies the inequality
\begin{equation}
    \frac{\alpha}{r_f}\leq\frac{|\In(\gamma)|-r}{t_{\In(\gamma),f,g}}\leq \alpha-r,
\end{equation}
from which it follows that $\alpha + r r_f < \alpha r_f$.

We first present an example to illustrate the main idea of the construction.
\begin{example}
    Consider the network shown in Fig.~\ref{fig:233branch}, which is a branch for $(U,3,3)$-tree. Let the field size $q=7$ and the coefficient matrices of the target security function be 
    \[\bF^T=\begin{bmatrix}
        1&1&2\\0&1&1
    \end{bmatrix},~\bG^T=\begin{bmatrix}
        1&0&1
    \end{bmatrix}.\]
    We assume that the security level $r=1$, then we can obtain an upper bound $\widehat{\cC}(\cN,f,g,r)\leq \alpha/r_f=3/2$. Note that each source node can transmit at most $\alpha r_f=6$ symbols to the intermediate node, and the wiretapper could obtain at most $rr_f=2$ of them. Hence, to guarantee the security and decodability, each source node generates a source message of length $\alpha=3$ and a random key of length $2$. The source node $\sigma_i$ sends $\by_{\Out(\sigma_i)}=(\mathbf{m}_i~\bk_i)\widehat{\mathbf{H}}_{\Out(\sigma_i)}^{(\sigma_i)}$, where $\widehat{\mathbf{H}}_{\Out(\sigma_i)}^{(\sigma_i)}\in\mathbb{F}_7^{5\times 6}$ is selected as
    \[\widehat{\mathbf{H}}_{\Out(\sigma_i)}^{(\sigma_i)}=\begin{bmatrix}
        1&0&0&0&0&0\\0&1&0&0&0&0\\0&0&1&0&0&0\\1&1&1&1&1&1\\1&2&3&4&5&6
    \end{bmatrix}.\]
    Since $\widehat{\mathbf{H}}_{\Out(\sigma_i)}^{(\sigma_i)}$ has full row rank, the intermediate node can decode $\mathbf{m}_i,i\in[3]$. Next, the middle node can transmit at most $6$ symbols to $\gamma$, and the value of target function also contains $6$ symbols. Therefore, to guarantee the decodability, no random keys can be involved in the transmission on the edges $\In(\gamma)$. The intermediate node transmits $\by_{\In(\gamma)}=(\mathbf{m}_S~\bk_S)\widehat{\mathbf{H}}_{\In(\gamma)}=\mathbf{m}_S\widehat{\mathbf{H}}_{\In(\gamma)}^{(M)}$, where $\widehat{\mathbf{H}}_{\In(\gamma)}^{(M)}\in\mathbb{F}_q^{9\times 6}$ is selected as
    \[\widehat{\mathbf{H}}_{\Out(\sigma_i)}^{(\sigma_i)}=\begin{bmatrix}
        1&0&0&0&0&0\\0&1&0&0&0&0\\0&0&1&0&0&0\\1&1&1&1&1&1\\1&2&3&4&5&6
    \end{bmatrix},~~\widehat{\mathbf{H}}_{\In(\gamma)}^{(M)}=\begin{bmatrix}
        1&0&0&0&0&1\\0&0&1&0&0&0\\0&0&0&0&0&1\\1&1&0&0&0&1\\0&1&1&1&0&0\\0&0&1&0&1&1\\2&1&0&0&0&2\\0&1&2&1&0&0\\0&0&1&0&1&2
    \end{bmatrix}.\]
    In fact, $\widehat{\mathbf{H}}_{\In(\gamma)}^{(M)}$ is obtained by applying column transformation to $\bF\otimes\bI_{\alpha}$ such that any $2$ columns are linearly independent with $\bG\otimes\bI_\alpha$. 
    The sink node receives 
    \begin{align*}
        &y_1=m_{11}+m_{21}+m_{31},~~~y_2=m_{21}+m_{31}+m_{22}+m_{32},~~~y_3=m_{12}+m_{22}+2m_{32}+m_{33},\\
        &y_4=m_{22}+m_{32},~~~y_5=m_{23}+m_{33},~~~y_6=m_{11}+m_{21}+2m_{31}+m_{13}+m_{23}+2m_{33}.
    \end{align*}
    Then, the sink node can decode $f(\mathbf{m}_S)$ from $y_1,y_2-y_4,y_3-y_5,y_4,y_6-y_1$ and $y_5$. 
    The security can be checked as follows. If the wiretapper accesses to an edge in $\Out(\sigma_i)$, then the random keys in the symbols can not be eliminated due to the MDS property of the last $2$ rows of $\widehat{\mathbf{H}}_{\Out(\sigma_i)}$. If the wiretapper accesses to an edge in $\In(\gamma)$, due to the property of $\widehat{\mathbf{H}}_{\In(\gamma)}^{(M)}$, it cannot obtain any information about $\bG\otimes\bI_\alpha$.
\end{example} 
The formal construction is stated as follows.

\textbf{Notations}:
Each source node $\sigma_i$, $i \in [V]$ independently generates a source message $M_i$ of length $\alpha$ and a random key $K_i$ of length $r r_f$, both uniformly distributed over $\mathbb{F}_q$. For each edge $e\in\Out(\sigma_i)$, the global encoding matrix of $e$ is denoted as $\widehat{\mathbf{H}}_e\in\mathbb{F}_q^{V(\alpha+rr_f)\times r_f}$. For $\Out(\sigma_i)$, let $\widehat{\mathbf{H}}_{\Out(\sigma_i)}=\begin{bmatrix}
    \widehat{\mathbf{H}}_e:e\in\Out(\sigma_i)
\end{bmatrix}\in\mathbb{F}_q^{V(\alpha+rr_f)\times \alpha r_f}$. Then, $\widehat{\mathbf{H}}_{\Out(\sigma_i)}$ can be written in the form of \eqref{eq:Eglobal-matrix-form}, which is represented by $\{\widehat{\mathbf{H}}_{\Out(\sigma_i)}^{(\sigma_j)}:j\in[V]\}$. By construction, $\widehat{\mathbf{H}}_{\Out(\sigma_i)}^{(\sigma_j)} = \Zero$ for $j \neq i$. For $j=i$, we write $\widehat{\mathbf{H}}_{\Out(\sigma_i)}^{(\sigma_j)}\in\mathbb{F}_q^{(\alpha+rr_f)\times \alpha r_f}$ as
\begin{equation}\label{eq:scheme2-case1-out}
    \widehat{\mathbf{H}}_{\Out(\sigma_i)}^{(\sigma_i)}=\begin{bmatrix}
        \widehat{\mathbf{H}}_{\Out(\sigma_i)}^{(\sigma_j,M)}\\
        \widehat{\mathbf{H}}_{\Out(\sigma_i)}^{(\sigma_j,K)}
    \end{bmatrix},
\end{equation}
where $\widehat{\mathbf{H}}_{\Out(\sigma_i)}^{(\sigma_j,M)}\in\mathbb{F}_q^{\alpha\times \alpha r_f}$ and $\widehat{\mathbf{H}}_{\Out(\sigma_i)}^{(\sigma_j,K)}\in\mathbb{F}_q^{rr_f\times \alpha r_f}$ are the coefficient matrices for $M_i$ and $K_i$, respectively. 

Next, consider the transmission on the edges in $\In(\gamma)$. Since $\gamma$ desires $M_S(\bF\otimes \bI_{\alpha})$, which contains $\alpha r_f$ symbols, and it can receive at most $\alpha r_f$ symbols, the messages transmitted by $\In(\gamma)$ cannot contain any information about $\{K_i:i\in[V]\}$. In other words, the global encoding matrix for $\In(\gamma)$ has the following form,
\begin{equation}\label{eq:scheme2-case1In}
    \widehat{\mathbf{H}}_{\In(\gamma)}=\begin{bmatrix}
        \widehat{\mathbf{H}}_{\In(\gamma)}^{(\sigma_1)}\\\widehat{\mathbf{H}}_{\In(\gamma)}^{(\sigma_2)}\\\vdots\\\widehat{\mathbf{H}}_{\In(\gamma)}^{(\sigma_V)}
    \end{bmatrix}~\text{with } \widehat{\mathbf{H}}_{\In(\gamma)}^{(\sigma_i)}=\begin{bmatrix}
        \widehat{\mathbf{H}}_{\In(\gamma)}^{(\sigma_i,M)}\\
        \widehat{\mathbf{H}}_{\In(\gamma)}^{(\sigma_i,K)}
    \end{bmatrix}=\begin{bmatrix}
        \widehat{\mathbf{H}}_{\In(\gamma)}^{(\sigma_i,M)}\\
        \Zero
    \end{bmatrix},
\end{equation}
where $\widehat{\mathbf{H}}_{\In(\gamma)}\in\mathbb{F}_q^{V(\alpha+rr_f)\times \alpha r_f}$ and $\widehat{\mathbf{H}}_{\In(\gamma)}^{(\sigma_i,M)}\in\mathbb{F}_q^{\alpha\times \alpha r_f}$. Let $\widehat{\mathbf{H}}_{\In(\gamma)}^{(M)}\in\mathbb{F}_q^{\alpha V\times \alpha r_f}$ be the matrix 
\begin{equation}
    \widehat{\mathbf{H}}_{\In(\gamma)}^{(M)}=\begin{bmatrix}
        \widehat{\mathbf{H}}_{\In(\gamma)}^{(\sigma_1,M)}\\\widehat{\mathbf{H}}_{\In(\gamma)}^{(\sigma_2,M)}\\\vdots\\\widehat{\mathbf{H}}_{\In(\gamma)}^{(\sigma_V,M)}
    \end{bmatrix}.
\end{equation}

\textbf{Construction}:
The code is specified by the following conditions on the encoding matrices.
\begin{enumerate}
    \item For each $i \in [V]$, $\widehat{\mathbf{H}}_{\Out(\sigma_i)}^{(\sigma_i)}$ is of full row rank;
    \item For each $i \in [V]$, $\widehat{\mathbf{H}}_{\Out(\sigma_i)}^{(\sigma_i,K)}$ is a maximum distance separable (MDS) matrix; i.e., any set of $rr_f$ of its columns is linearly independent;
    \item There exists an invertible matrix $\mathbf{A} \in \mathbb{F}_q^{\alpha r_f \times \alpha r_f}$ such that $\widehat{\mathbf{H}}_{\In(\gamma)}^{(M)} = (\bF \otimes \bI_{\alpha}) \cdot \mathbf{A}$;
    \item For any submatrix $\mathbf{B}$ formed by $rr_f$ columns of $\widehat{\mathbf{H}}_{\In(\gamma)}$, we have $\langle \mathbf{B} \rangle \cap \langle \mathbf{S} \rangle = \{\Zero\}$, where $\mathbf{S}$ is defined in \eqref{eq:S-def}.
\end{enumerate}



\textbf{Existence}: 
We now prove that matrices satisfying the conditions exist for sufficiently large field size $q$.
Matrices satisfying 1) and 2) can be constructed explicitly. For each $i\in[V]$, we can select
\[\widehat{\mathbf{H}}_{\Out(\sigma_i)}^{(\sigma_i)}=\begin{bmatrix}
        \widehat{\mathbf{H}}_{\Out(\sigma_i)}^{(\sigma_j,M)}\\
        \widehat{\mathbf{H}}_{\Out(\sigma_i)}^{(\sigma_j,K)}
    \end{bmatrix}=\begin{bmatrix}
        \begin{matrix}
            \bI_{\alpha}&\Zero_{\alpha\times\alpha(r_f-1)}
        \end{matrix}\\\mathbf{V}_{rr_f\times\alpha r_f}(\ba)
    \end{bmatrix},\]
where $\ba=(a_1,\cdots,a_{\alpha r_f})$, and $\mathbf{V}_{rr_f\times\alpha r_f}(\ba)$ is a Vandermonde matrix with the form
\[\mathbf{V}_{rr_f\times\alpha r_f}(\ba)=\begin{bmatrix}
    1&1&\cdots&1\\
    a_1&a_2&\cdots&a_{\alpha r_f}\\
    \vdots&\vdots&\ddots&\vdots\\
    a_1^{rr_f-1}&a_2^{rr_f-1}&\cdots&a_{\alpha r_f}^{rr_f-1}
\end{bmatrix},\]
with distinct non-zero $a_j \in \mathbb{F}_q$. This ensures full row rank and the MDS property.

The existence of a suitable $\mathbf{A}$ for 3) and 4) is guaranteed by the following lemma, applied to $\bF' = \bF \otimes \bI_\alpha$ and $\bG' = \bG \otimes \bI_\alpha$.
\begin{lem}
     Given $\bF \in \mathbb{F}_{q}^{n \times n_1}$ and $\bG \in \mathbb{F}_{q}^{n \times n_2}$ with $\Rank(\bF) = n_1$ and $\Rank(\bG) = n_2$, let $t = \dim(\langle \bF \rangle \cap \langle \bG \rangle)$ satisfy $0 < t < n_1$. If $r \leq n_1 - t$ and $q^{n_1-t-r+1} > \binom{n_1-1}{r-1}$, then there exists an invertible $\mathbf{A} \in \mathbb{F}_{q}^{n_1 \times n_1}$ such that any subspace spanned by $r$ columns of $\bF \mathbf{A}$ intersects $\langle \bG \rangle$ only at $\Zero$.
\end{lem}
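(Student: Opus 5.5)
The plan is to pull the condition on $\bF\mathbf{A}$ back to a condition on the columns of $\mathbf{A}$. We then work in a quotient space, where the requirement becomes the standard one of a parity-check matrix with prescribed minimum distance, and finally lift back while keeping $\mathbf{A}$ invertible.

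\emph{Step 1 (reformulation).} Set $U\eqdef\{\bx\in\mathbb{F}_q^{n_1}:\bF\bx\in\langle\bG\rangle\}$. Since $\bF$ has full column rank, $\bx\mapsto\bF\bx$ is injective and maps $U$ onto $\langle\bF\rangle\cap\langle\bG\rangle$, so $\dim U=t$. Write $\mathbf{A}=[\ba_1~\cdots~\ba_{n_1}]$, so the columns of $\bF\mathbf{A}$ are $\bF\ba_j$. Fix an index set $T$ with $|T|=r$. A nonzero vector of $\langle\bF\ba_j:j\in T\rangle\cap\langle\bG\rangle$ has the form $\bF\sum_{j\in T}c_j\ba_j$ with $\sum_{j\in T} c_j\ba_j\neq\Zero$ (because $\bF$ is injective) and $\sum_{j\in T}c_j\ba_j\in U$. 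Let $\pi:\mathbb{F}_q^{n_1}\to\mathbb{F}_q^{n_1}/U\cong\mathbb{F}_q^{k}$ be the quotient map, where $k\eqdef n_1-t$. If $\sum_{j\in T}c_j\ba_j\neq\Zero$ then not all $c_j$ vanish, so such a vector is a nontrivial linear dependency among $\{\pi(\ba_j):j\in T\}$. Hence it suffices to find an invertible $\mathbf{A}$ such that any $r$ of the vectors $\bw_j\eqdef\pi(\ba_j)$ are linearly independent in $\mathbb{F}_q^{k}$.

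\emph{Step 2 (choosing the images greedily).} Take $\bw_1,\dots,\bw_k$ to be a basis of $\mathbb{F}_q^k$. Any $r\le k$ of them are independent, and $r\le k$ is exactly the hypothesis $r\le n_1-t$. For $j=k+1,\dots,n_1$ we choose $\bw_j$ outside the union of the spans $\langle\bw_i:i\in T'\rangle$ over all $(r-1)$-subsets $T'\subseteq[j-1]$. Spans of smaller subsets are contained in these spans, since $j-1\ge k\ge r-1$. The excluded set has size at most
\[\tbinom{j-1}{r-1}q^{r-1}\le\tbinom{n_1-1}{r-1}q^{r-1}<q^{k},\]
where the last inequality is the hypothesis $q^{n_1-t-r+1}>\binom{n_1-1}{r-1}$. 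So a valid $\bw_j$ exists. After all $n_1$ steps, any $r$ of the $\bw_j$ are independent. This is the Gilbert--Varshamov-type counting step, and it is exactly where the field-size condition is consumed.

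\emph{Step 3 (lifting to an invertible $\mathbf{A}$).} This is the step I expect to need the most care, because an arbitrary lift of the $\bw_j$ need not give a nonsingular matrix. Let $\ba_1,\dots,\ba_k$ be any lifts of $\bw_1,\dots,\bw_k$; they are independent and span a complement of $U$. Fix a basis $\bu_1,\dots,\bu_t$ of $U$, noting $n_1=k+t$. For $i\in[t]$, take any lift $\tilde\ba_{k+i}$ of $\bw_{k+i}$ and write $\tilde\ba_{k+i}=\sum_{s\le k}c_{s}\ba_s+\bu'$ with $\bu'\in U$. Then set $\ba_{k+i}\eqdef\tilde\ba_{k+i}-\bu'+\bu_i$, which is still a lift of $\bw_{k+i}$. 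In the basis $(\ba_1,\dots,\ba_k,\bu_1,\dots,\bu_t)$ the matrix $\mathbf{A}$ is block upper triangular with identity diagonal blocks, so $\mathbf{A}$ is invertible.

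Since the projections $\pi(\ba_j)=\bw_j$ are unchanged by the lift, Step 1 shows that every $r$ columns of $\bF\mathbf{A}$ span a subspace meeting $\langle\bG\rangle$ only in $\Zero$. The assumption $0<t<n_1$ guarantees $k\ge1$ and makes the quotient construction nontrivial.
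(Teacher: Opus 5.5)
Your proof is correct and follows essentially the paper's own argument. You quotient by $\langle\bF\rangle\cap\langle\bG\rangle$, greedily choose $n_1$ vectors in the $(n_1-t)$-dimensional quotient so that any $r$ of them are independent (using the same count $\binom{n_1-1}{r-1}q^{r-1}<q^{n_1-t}$), and then lift so that the last $t$ vectors each carry a distinct basis vector of the intersection, which guarantees invertibility. The only difference is cosmetic: you work with the columns of $\mathbf{A}$ in $\mathbb{F}_q^{n_1}$ via the pullback $U=\{\bx:\bF\bx\in\langle\bG\rangle\}$, while the paper works directly with the columns of $\bF\mathbf{A}$ inside $\langle\bF\rangle$, and the injective map $\bx\mapsto\bF\bx$ identifies the two settings.
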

\begin{pf}
    Let $\mathscr{U} \eqdef \langle \bF \rangle \cap \langle \bG \rangle$ and $\mathscr{W} \eqdef \langle \bF \rangle / \mathscr{U}$, which implies $\dim(\mathscr{W}) = n_1 - t$. Then, we can select $\bw_1, \dots, \bw_{n_1} \in \mathscr{W}$ such that any $r$ of them are linearly independent. We first start with a basis $\bw_1, \dots, \bw_{n_1-t}$ for $\mathscr{W}$. For $j = n_1-t+1, \dots, n_1$, iteratively choose \[\bw_j\in \mathscr{W}\setminus\cup_{R=\{i_1,\cdots,i_{r-1}\}:R\subseteq[j-1]}\Span(\bw_{i_1},\bw_{i_2},\cdots,\bw_{i_{r-1}}).\]
     This is possible because at each step the union covers at most $\binom{j-1}{r-1} q^{r-1}$ vectors, and our field size condition guarantees $q^{n_1-t} - \binom{n_1-1}{r-1} q^{r-1} > 0$.
     
    Let $\pi:\langle\bF\rangle\rightarrow\langle\bF\rangle/\mathscr{U}$ be the quotient map.
    Next, we lift $\bw_1,\cdots,\bw_{n_1}$ to linearly independent vectors $\bff_1,\cdots,\bff_{n_1}$ in $\langle \bF\rangle$.
    Since $\bw_1,\cdots,\bw_{n_1-t}$ form a basis of $\mathscr{W}$, we can randomly select $\bff_1,\cdots,\bff_{n_1-t}$ such that they are linearly independent and $\pi(\bff_{i})=\bw_i$, for $i\in[n_1-t]$. Otherwise, if $\bff_1,\cdots,\bff_{n_1-t}$ are linearly dependent, then $\bw_1,\cdots,\bw_{n_1-t}$ are linearly dependent in the quotient space, which is a contradiction. 
    Let $\mathscr{V}\eqdef\Span(\bff_1,\cdots,\bff_{n_1-t})$, then $\langle\bF\rangle=\mathscr{V}\bigoplus \mathscr{U}$. For every $n_1-t+1\leq j\leq n_1$, we write $\bw_j=\sum_{i=1}^{n_1-t}c_{i,j}\bw_i$. Define $\bv_j=\sum_{i=1}^{n_1-t}c_{i,j}\bff_i\in \mathscr{V}$, so $\pi(\bv_j)=\bw_j$. Select a basis of $\mathscr{U}$, and denote it as $\{\bu_{n_1-t+1},\cdots,\bu_{n_1}\}$. Let $\bff_j=\bv_j+\bu_j$. Then, we have $\pi(\bff_j)=\bw_j$ and $\bff_1,\cdots,\bff_n$ are linearly independent. Otherwise, for the direct sum decomposition of $\langle\bF\rangle=\mathscr{V}\bigoplus \mathscr{U}$, we can write $\bff_i=(\bff_i,\Zero),i\leq n_1-t$ and $\bff_i=(\bv_i,\bu_i)$ for $n_1-t+1\leq i\leq n_1$. Then, for the linear combination 
    \[\sum_{i=1}^{n-t}a_i\bff_i+\sum_{j=n_1-t+1}^{n_1}b_j\bff_j=\Zero,\]
    we can obtain that 
    \[\sum_{i=1}^{n-t}a_i\bff_i+\sum_{j=n_1-t+1}^{n_1}b_j\bv_j=\Zero,~\text{and }\sum_{j=n_1-t+1}^{n_1}b_j\bu_j=\Zero.\]
    Since $\{\bu_j:n_1-t+1\leq j\leq n_1\}$ is a basis of $\mathscr{U}$, then we can obtain that $b_j=0$ for $n_1-t+1\leq j\leq n_1$ and hence $a_i=0$ for $i\in[n_1-t]$. 

    Finally, let $\bF\mathbf{A}=[\bff_1~\bff_2~\cdots~\bff_{n_1}]$. Since the $\bff_i$ are a basis for $\langle\bF\rangle$, $\mathbf{A}$ is invertible. It suffices to prove that the subspace spanned by any $r$ columns of $\bF\mathbf{A}$ intersects $\langle\bG\rangle$ only at $\Zero$. Suppose for contradiction that there exist $i_{1},\cdots,i_r$ such that $\Span(\bff_{i_1},\cdots,\bff_{i_r})\cap\langle\bG\rangle\neq\{\Zero\}$. This is equivalent to $\Span(\bff_{i_1},\cdots,\bff_{i_r})\cap \mathscr{U}\neq\{\Zero\}$, implying $\pi(\bff_{i_1}),\cdots,\pi(\bff_{i_r})$ are linearly dependent in $W$, which is a contradiction. 
\end{pf}
Let $t = \dim(\langle \bF \rangle \cap \langle \bG \rangle)$. The condition $\widehat{\cC}^* = \alpha/r_f$ implies $\alpha/r_f \leq (|\In(\gamma)|-r)/t$, which simplifies to $rr_f \leq \alpha r_f - t\alpha$. Applying the lemma with $n_1 = \alpha r_f$, $n_2 = \alpha r_g$, and the same $t$ (noting $\dim(\langle \bF' \rangle \cap \langle \bG' \rangle) = t\alpha$), we conclude that a matrix $\mathbf{A}$ satisfying 3) and 4) exists if $q^{\alpha r_f - rr_f - t\alpha + 1} > \binom{\alpha r_f - 1}{rr_f - 1}$.

\textbf{Decodability}:
The sink node receives $(M_S~K_S)\cdot \widehat{\mathbf{H}}_{\In(\gamma)}=M_S\cdot \widehat{\mathbf{H}}_{\In(\gamma)}^{M}$. From 3), we have $M_S\cdot \widehat{\mathbf{H}}_{\In(\gamma)}^{M}=M_S\cdot(\bF\otimes\bI_\alpha)\cdot\mathbf{A}=f(M_S)\cdot\mathbf{A}$. Since $\mathbf{A}$ is invertible, the sink node can decode $f(M_S)$ from the received messages.

\textbf{Security}:
We must show that the code remains secure when the wiretapper accesses any $r$ edges.
Let $\widehat{\mathbf{H}}_{\cE}$ be the global encoding matrix for all edges in the branch, then it can be written as
\begin{subequations}
    \begin{align}
        \widehat{\mathbf{H}}_{\cE}=&\begin{bmatrix}
        \widehat{\mathbf{H}}_{\Out(\sigma_1)}&\widehat{\mathbf{H}}_{\Out(\sigma_2)}&\cdots\widehat{\mathbf{H}}_{\Out(\sigma_V)}&\widehat{\mathbf{H}}_{\In(\gamma)}
    \end{bmatrix}\\
    =&\begin{bmatrix}
        \begin{matrix}
            \begin{matrix}
                \widehat{\mathbf{H}}_{\Out(\sigma_1)}^{(\sigma_1,M)}\\
        \widehat{\mathbf{H}}_{\Out(\sigma_1)}^{(\sigma_1,K)}
            \end{matrix}&\Zero&\cdots&\Zero\\ \Zero& \begin{matrix}
                \widehat{\mathbf{H}}_{\Out(\sigma_2)}^{(\sigma_2,M)}\\
        \widehat{\mathbf{H}}_{\Out(\sigma_2)}^{(\sigma_2,K)}
            \end{matrix}&\cdots&\Zero\\ \vdots&\vdots&\ddots&\vdots\\ \Zero&\Zero&\cdots& \begin{matrix}
                \widehat{\mathbf{H}}_{\Out(\sigma_V)}^{(\sigma_V,M)}\\
        \widehat{\mathbf{H}}_{\Out(\sigma_V)}^{(\sigma_V,K)}
            \end{matrix}
        \end{matrix}&\begin{matrix}
            \begin{matrix}
                \widehat{\mathbf{H}}_{\In(\gamma)}^{(\sigma_i,M)}\\
        \Zero
            \end{matrix}\\ \begin{matrix}
                \widehat{\mathbf{H}}_{\In(\gamma)}^{(\sigma_i,M)}\\
        \Zero
            \end{matrix}\\ \vdots\\  \begin{matrix}
                \widehat{\mathbf{H}}_{\In(\gamma)}^{(\sigma_i,M)}\\
        \Zero
            \end{matrix}
        \end{matrix}
    \end{bmatrix}.
    \end{align}
\end{subequations}
Consider any submatrix $\widehat{\mathbf{H}}_{\cE, rr_f}$ formed by $rr_f$ columns of $\widehat{\mathbf{H}}_{\cE}$. Then, proving the security is equivalent to showing
\begin{equation}\label{eq:case1-security}
    \langle\widehat{\mathbf{H}}_{\cE,rr_f}\rangle\cap\langle\mathbf{S}\rangle=\{\Zero\}.
\end{equation} 
where $\mathbf{S}$ is defined in \eqref{eq:S-def}. Suppose $\widehat{\mathbf{H}}_{\cE, rr_f}$ contains $r_{\sigma_i}$ columns from $\widehat{\mathbf{H}}_{\Out(\sigma_i)}$ for each $i$, and $r_\gamma$ columns from $\widehat{\mathbf{H}}_{\In(\gamma)}$, where $\sum_i r_{\sigma_i} + r_\gamma = rr_f$.

If $r_{\gamma}=rr_f$, from 4), we can directly obtain \eqref{eq:case1-security}. Now assume $r_{\gamma}<rr_f$, and for contradiction, we assume that there exists a non-zero vector $\bv\in\langle\widehat{\mathbf{H}}_{\cE,rr_f}\rangle\cap\langle\mathbf{S}\rangle$.  Since $\bv \in \langle \mathbf{S} \rangle$, it has the form
\begin{equation}\label{eq:case1-v}
    \bv=\begin{bmatrix}
        \bv^{(\sigma_1,M)}\\\bv^{(\sigma_1,K)}\\\vdots\\\bv^{(\sigma_V,M)}\\\bv^{(\sigma_V,K)}
    \end{bmatrix}=\begin{bmatrix}
        \bv^{(\sigma_1,M)}\\\Zero\\\vdots\\\bv^{(\sigma_V,M)}\\\Zero
    \end{bmatrix}.
\end{equation}
Because $\bv \in \langle \widehat{\mathbf{H}}_{\cE, rr_f} \rangle$, there exists a coefficient vector $\bw$ such that $\bv = \widehat{\mathbf{H}}_{\cE, rr_f} \bw$.
Furthermore, we can write $\widehat{\mathbf{H}}_{\cE,rr_f}$ and $\bw$ as
\begin{equation}
    \widehat{\mathbf{H}}_{\cE,rr_f}=\begin{bmatrix}
        \begin{matrix}
            \begin{matrix}
                \widehat{\mathbf{H}}_{\Out(\sigma_1)}^{(\sigma_1,M,r_{\sigma_1})}\\
        \widehat{\mathbf{H}}_{\Out(\sigma_1)}^{(\sigma_1,K,r_{\sigma_1})}
            \end{matrix}&\Zero&\cdots&\Zero\\ \Zero& \begin{matrix}
                \widehat{\mathbf{H}}_{\Out(\sigma_2)}^{(\sigma_2,M,r_{\sigma_2})}\\
        \widehat{\mathbf{H}}_{\Out(\sigma_2)}^{(\sigma_2,K,r_{\sigma_2})}
            \end{matrix}&\cdots&\Zero\\ \vdots&\vdots&\ddots&\vdots\\ \Zero&\Zero&\cdots& \begin{matrix}
                \widehat{\mathbf{H}}_{\Out(\sigma_V)}^{(\sigma_V,M,r_{\sigma_V})}\\
        \widehat{\mathbf{H}}_{\Out(\sigma_V)}^{(\sigma_V,K,r_{\sigma_V})}
            \end{matrix}
        \end{matrix}&\begin{matrix}
            \begin{matrix}
                \widehat{\mathbf{H}}_{\In(\gamma)}^{(\sigma_1,M,r_{\gamma})}\\
        \Zero
            \end{matrix}\\ \begin{matrix}
                \widehat{\mathbf{H}}_{\In(\gamma)}^{(\sigma_2,M,r_{\gamma})}\\
        \Zero
            \end{matrix}\\ \vdots\\  \begin{matrix}
                \widehat{\mathbf{H}}_{\In(\gamma)}^{(\sigma_V,M,r_\gamma)}\\
        \Zero
            \end{matrix}
        \end{matrix}
    \end{bmatrix},~~\bw=\begin{bmatrix}
        \bw^{(\sigma_1)}\\\bw^{(\sigma_2)}\\\vdots\\\bw^{(\sigma_V)}\\\bw^{(\gamma)}
    \end{bmatrix},
\end{equation}
where $\widehat{\mathbf{H}}_{\Out(\sigma_i)}^{(\sigma_i,M,r_{\sigma_i})}\in\mathbb{F}_q^{\alpha\times r_{\sigma_i}},\widehat{\mathbf{H}}_{\Out(\sigma_i)}^{(\sigma_i,K,r_{\sigma_i})}\in\mathbb{F}_q^{rr_f\times r_{\sigma_i}},\widehat{\mathbf{H}}_{\In(\gamma)}^{(\sigma_i,M,r_{\gamma})}\in\mathbb{F}_q^{\alpha\times r_\gamma}$ and $\bw^{(\sigma_i)}\in\mathbb{F}_q^{r_{\sigma_i}},\bw^{(\gamma)}\in\mathbb{F}_q^{r_{\gamma}}$.
Then, for every $i\in[V]$, we have
\begin{equation}\label{eq:case1-security-final}
    \Zero=\bv^{(\sigma_i,K)}=\widehat{\mathbf{H}}_{\Out(\sigma_i)}^{(\sigma_i,K,r_{\sigma_i})}\cdot\bw^{(\sigma_i)},
\end{equation}
where $\widehat{\mathbf{H}}_{\Out(\sigma_i)}^{(\sigma_i,K,r_{\sigma_i})}$ is formed by $r_{\sigma_i}$ columns of $\widehat{\mathbf{H}}_{\Out(\sigma_i)}^{(\sigma_i,K)}$. Consequently, \eqref{eq:case1-security-final} contradicts with the fact that $\widehat{\mathbf{H}}_{\Out(\sigma_i)}^{(\sigma_i,K)}$ is an MDS matrix, which completes the proof of security.

\textbf{Case 2}:
In this case, we construct an $(\ell,n)$ linear secure network for a branch under the condition $\widehat{\cC}^*=\frac{\ell}{n}<\frac{\alpha}{r_f}$. 

First, we present an example to illustrate the construction.
\begin{example}\label{ex:case2}
  Consider the network shown in Fig.~\ref{fig:233branch}, which is a branch for $(U,3,3)$-tree. Let $q>1296$ be a prime and the coefficient matrices of the target and security function be 
  \[\bF=\begin{bmatrix}
    1&2&0\\
    2&1&2\\
    0&1&1        
    \end{bmatrix},~~~
    \bG=\begin{bmatrix}
    1&2\\
    2&1\\
    0&1       
    \end{bmatrix}.\]
  Note that the matrix $\bF$ can be partitioned into two parts $\bF_1$ and $\bF_2$, such that $\Rank(\bF_1)=\dim(\langle\bF_1\rangle)=\dim(\langle\bF\rangle\cap\langle\bG\rangle)$. Specifically,
  \[\bF_1=\begin{bmatrix}
    1&2\\
    2&1\\
    0&1        
    \end{bmatrix},~~~
    \bF_2=\begin{bmatrix}
    0\\
    2\\
    1        
    \end{bmatrix}.\]
  When $r=2$, from Theorem~\ref{thm:upperbound1}, we can obtain an upper bound $\widehat{\cC}^*=(\alpha-r)/t=1/2$.
  Next, we construct a $(1,2)$ network code as follows. Note that each source node can transmit at most $n\alpha=6$ symbols to the intermediate node. Therefore, $\sigma_i$ can generates at most $5$ random keys to protect $1$ source symbol. Similar to Case 1, $\sigma_i$ sends $\by_{\Out(\sigma_i)}=(\mathbf{m}_i~\bk_i)\widehat{\mathbf{H}}_{\Out(\sigma_i)}^{(\sigma_i)}$ to the intermediate node, where $\widehat{\mathbf{H}}_{\Out(\sigma_i)}^{(\sigma_i)}\in\mathbb{F}_q^{6\times 6}$ is selected as
  \[\widehat{\mathbf{H}}_{\Out(\sigma_i)}^{(\sigma_i)}=\begin{bmatrix}
        1&0&0&0&0&0\\
        1&1&1&1&1&1\\
        1&2&3&4&5&6\\
        1&4&9&16&25&36\\
        1&8&27&64&125&216\\
        1&16&81&384&625&1296
    \end{bmatrix}.\]
  Since $\widehat{\mathbf{H}}_{\Out(\sigma_i)}^{(\sigma_i)}$ is full rank, the intermediate node can decode $\mathbf{m}_i$ and $\mathbf{k}_i$ for $i\in[3]$. Moreover, since the intermediate node can send at most $6$ symbols to $\gamma$ and the value of target function contains $3$ symbols, the transmitted symbols contain at most $3$ random keys. Therefore, the intermediate node generates $3$ new random keys from $\bk_1,\bk_2$ and $\bk_3$ as $k'_j=\sum_{i\in[3]}k_{i,j}$, for $j\in[3]$. This procedure can be represented in the matrix form as $(k_1',k_2',k_3')=\bk_S\cdot \bU$, where $\bU\in\mathbb{F}_q^{15\times 3}$ is selected as
  \[\bU^T=\begin{bmatrix}
      \bI_3&\Zero_{2\times3}&\bI_3&\Zero_{2\times3}&\bI_3&\Zero_{2\times3}
  \end{bmatrix}.\]
  Since only the symbols in $\mathbf{m}_S\bF_1$ are desired by the wiretapper, we regard the symbols in $\mathbf{m}_S\bF_2$ as additional random keys. Then, the intermediate node uses $\mathbf{m}_S\bF_2$ and $k'_j,j\in[3]$ to protect the symbols in $\mathbf{m}_S\bF_1$. Specifically, on the edges $\In(\gamma)$, $\by_{\In(\gamma)}=(\mathbf{m}_S~\bk_S)\widehat{\mathbf{H}}_{\In(\gamma)}=\mathbf{m}_S\widehat{\mathbf{H}}_{\In(\gamma)}^{(M)}+\bK_S\widehat{\mathbf{H}}_{\In(\gamma)}^{(K)}$, where $\widehat{\mathbf{H}}_{\In(\gamma)}^{(M)}$ and $\widehat{\mathbf{H}}_{\In(\gamma)}^{(K)}$ have the form
  \[\begin{bmatrix}
        \widehat{\mathbf{H}}_{\In(\gamma)}^{(M)}\\\widehat{\mathbf{H}}_{\In(\gamma)}^{(K)}
    \end{bmatrix}=\begin{bmatrix}
        \bF&\Zero_{3\times3}\\\Zero_{15\times 3}&\bU
    \end{bmatrix}\cdot\mathbf{D},~~\text{with}~~\mathbf{D}=
    \begin{bmatrix}
       1&0&0&0&0&0\\
       0&1&0&0&0&0\\
       0&0&0&1&1&1\\
       1&0&0&8&9&10\\
       0&1&0&64&81&100\\
       0&0&1&512&729&1000
    \end{bmatrix}.\]
   Note that $\mathbf{D}$ is invertible, hence $\gamma$ can decode $\mathbf{m}_S\bF=f(\mathbf{m}_S)$.
   
   The security can be checked as follows. In all transmitted symbols, the coefficient matrix of random keys $\widehat{\mathbf{H}}_{\cE}^{(K)}$ in the construction is
   
   \[\widehat{\mathbf{H}}_{\cE}^{(K)}=\begin{bmatrix}
       \widehat{\mathbf{H}}_{\Out(\sigma_1)}^{(\sigma_1,K)}&\Zero_{5\times 6}&\Zero_{5\times 6}&\widehat{\mathbf{H}}_{\In(\gamma)}^{(\sigma_1,K)}\\
       \Zero_{5\times 6}&\widehat{\mathbf{H}}_{\Out(\sigma_2)}^{(\sigma_2,K)}&\Zero_{5\times 6}&\widehat{\mathbf{H}}_{\In(\gamma)}^{(\sigma_2,K)}\\
       \Zero_{5\times 6}&\Zero_{5\times 6}&\widehat{\mathbf{H}}_{\Out(\sigma_3)}^{(\sigma_3,K)}&\widehat{\mathbf{H}}_{\In(\gamma)}^{(\sigma_3,K)}
   \end{bmatrix},\]
   with
   \[\widehat{\mathbf{H}}_{\Out(\sigma_i)}^{(\sigma_i,K)}=\begin{bmatrix}
       1&1&1&1&1&1\\
        1&2&3&4&5&6\\
        1&4&9&16&25&36\\
        1&8&27&64&125&216\\
        1&16&81&384&625&1296
   \end{bmatrix},~~~\widehat{\mathbf{H}}_{\In(\gamma)}^{(\sigma_i,K)}=\begin{bmatrix}
       1&0&0&8&9&10\\
       0&1&0&64&81&100\\
       0&0&1&512&729&1000\\
       0&0&0&0&0&0\\
       0&0&0&0&0&0
   \end{bmatrix}.\]
   If the wiretapper accesses $2$ edges ($4$ columns) between the first two layers, it cannot eliminate the random keys since $\widehat{\mathbf{H}}_{\Out(\sigma_i)}^{(\sigma_i,K)}$ is an MDS matrix. If the wiretapper accesses $1$ edge between the first two layers and $1$ edge in $\In(\gamma)$, it also cannot eliminate the random keys since any $2$ columns in the first $18$ columns in $\widehat{\mathbf{H}}_{\cE}^{(K)}$ are linearly independent with any $2$ columns in the last $6$ columns in $\widehat{\mathbf{H}}_{\cE}^{(K)}$. If the wiretapper accesses $2$ edges in $\In(\gamma)$, since $\by_{\In(\gamma)}$ consists of $6$ linear combinations of $\mathbf{m}_S\bF,k_1',k_2'$ and $k_3'$, and the coefficient matrix corresponding to the symbols, which the wiretapper do not want, is an MDS matrix, the security is hence guaranteed. 
\end{example}

\textbf{Notations}: 
Each source node $\sigma_i$ generates a source message $M_i$ of length $\ell$ and a random key $K_i$ of length $n\alpha-\ell$, both uniformly distributed over $\mathbb{F}_q$. For the edges in $\Out(\sigma_i),i\in[V]$, we still use the notation $\widehat{\mathbf{H}}_e\in\mathbb{F}_q^{Vn\alpha\times n},\widehat{\mathbf{H}}_{\Out(\sigma_i)}\in\mathbb{F}_q^{Vn\alpha\times n\alpha}$ to denote the global encoding matrices. Similar to Case 1, $\widehat{\mathbf{H}}_{\Out(\sigma_i)}=[\widehat{\mathbf{H}}_{\Out(\sigma_i)}^{(\sigma_1)};$ $\cdots;\widehat{\mathbf{H}}_{\Out(\sigma_i)}^{(\sigma_V)}]$, and $\widehat{\mathbf{H}}_{\Out(\sigma_i)}^{(\sigma_j)}$ has the form \eqref{eq:scheme2-case1-out} if $i=j$, otherwise, $\widehat{\mathbf{H}}_{\Out(\sigma_i)}^{(\sigma_j)}=\Zero$.

Consider the transmission on the edges in $\In(\gamma)$. Let $t\eqdef\dim(\langle\bF\rangle\cap\langle\bG\rangle)$.
Without loss of generality, we assume that $\bF=[\bF_1~\bF_2]$ with $\bF_1\in\mathbb{F}^{V\times t}$ and $\bF_2\in\mathbb{F}^{V\times (r_f-t)}$ such that $\langle\bF_1\rangle\cap\langle\bG\rangle=\langle\bF\rangle\cap\langle\bG\rangle$ and $\langle\bF_2\rangle\cap\langle\bG\rangle=\Zero$. The middle node uses the random keys generated by sources to produce $n\alpha-r_f\ell$ new keys through a matrix $\bU\in\mathbb{F}_q^{V(n\alpha-\ell)\times(n\alpha-r_f\ell)}$. The global encoding matrix $\widehat{\mathbf{H}}_{\In(\gamma)}\in\mathbb{F}_q^{Vn\alpha\times n\alpha}$ for $\In(\gamma)$ has the form \eqref{eq:scheme2-case1In}, let $\widehat{\mathbf{H}}_{\In(\gamma)}^{(M)}\in\mathbb{F}_q^{V\ell\times n\alpha}$ and $\widehat{\mathbf{H}}_{\In(\gamma)}^{(K)}\in\mathbb{F}_q^{V(n\alpha-\ell)\times n\alpha}$ be the submatrices corresponding to source messages and random keys. Moreover, these two matrices satisfy
\begin{equation}
    \begin{bmatrix}
        \widehat{\mathbf{H}}_{\In(\gamma)}^{(M)}\\\widehat{\mathbf{H}}_{\In(\gamma)}^{(K)}
    \end{bmatrix}=\begin{bmatrix}
        \bF\otimes \bI_\ell&\Zero_{V\ell\times(n\alpha-r_f\ell)}\\\Zero_{V(n\alpha-\ell)\times r_f\ell}&\bU
    \end{bmatrix}\cdot\mathbf{D},
\end{equation}
where $\mathbf{D}\in\mathbb{F}_q^{n\alpha\times n\alpha}$. Let $\widehat{\mathbf{H}}_{\cE}$ and $\widehat{\mathbf{H}}_{\Out}$ denote the global encoding matrices for the edges $\cE$ and $\cup_{i\in[V]}\Out(\sigma_i)$. Moreover, let the $\widehat{\mathbf{H}}_{\cE}^{(K)}$ and $\widehat{\mathbf{H}}_{\Out}^{(K)}$ be the submatrices corresponding to $K_S$ in $\widehat{\mathbf{H}}_{\cE}$ and $\widehat{\mathbf{H}}_{\Out}$.

\textbf{Construction}:
The code is specified by the following conditions on the encoding matrices.
\begin{enumerate}
    \item For each $i \in [V]$, $\widehat{\mathbf{H}}_{\Out(\sigma_i)}^{(\sigma_i)}$ is of full row rank;
    \item For each $i \in [V]$, $\widehat{\mathbf{H}}_{\Out(\sigma_i)}^{(\sigma_i,K)}$ is an MDS matrix, i.e., any $n\alpha-\ell$ columns are linearly independent; 
    \item $\Rank(\bU)=n\alpha-r_f\ell$;
    \item $\mathbf{D}$ is invertible and the last $n\alpha-t\ell$ rows of $\mathbf{D}$ form an MDS matrix;
    \item For $a\leq n\alpha-r_f\ell$, any $a$ columns of $\widehat{\mathbf{H}}_{\In(\gamma)}^{(K)}$ and any $nr-a$ columns of $\widehat{\mathbf{H}}_{\Out}^{(K)}$ are linearly independent.
\end{enumerate}

\textbf{Existence}: We can find matrices satisfying above conditions as follows. First, we set 
\begin{equation}
    \widehat{\mathbf{H}}_{\Out(\sigma_i)}^{(\sigma_i)}=\begin{bmatrix}
        \begin{matrix}
            \bI_{\ell}&\Zero_{\ell\times (n\alpha-\ell)}
        \end{matrix}\\\widehat{\mathbf{H}}_{\Out(\sigma_i)}^{(\sigma_i,K)}
    \end{bmatrix}=\begin{bmatrix}
        \begin{matrix}
            \bI_{\ell}&\Zero_{\ell\times (n\alpha-\ell)}
        \end{matrix}\\\mathbf{V}_{(n\alpha-\ell)\times n\alpha}(\ba)
    \end{bmatrix},
\end{equation}
where $\mathbf{V}_{(n\alpha-\ell)\times n\alpha}(\ba)$ is a Vandermonde matrix with $\ba=(1,2,\cdots,n\alpha)$. Hence, the conditions 1) and 2) are satisfied. Next, we select $\mathbf{U}$ as
\begin{equation}
    \bU=\begin{bmatrix}
        \bI_{n\alpha-r_f\ell}\\\Zero_{(r_f\ell-\ell)\times (n\alpha-r_f\ell)}\\\vdots\\\bI_{n\alpha-r_f\ell}\\\Zero_{(r_f\ell-\ell)\times (n\alpha-r_f\ell)}
    \end{bmatrix}.
\end{equation}
This implies that the middle node generates $n\alpha$ new keys $K_j'=\sum_{i\in[V]}K_{i,j}$, where $K_{i,j}$ is the $j$-th symbol in the random key generated by the $i$-th source node $K_i$. Therefore, the condition 3) is satisfied. At last, we select $\mathbf{D}=[\mathbf{D}_1;\mathbf{D}_2]$ with $\mathbf{D}_1\in\mathbb{F}_q^{t\ell\times n\alpha}$ and $\mathbf{D}_2\in\mathbb{F}_q^{(n\alpha-t\ell)\times n\alpha}$,
\begin{equation}
    \mathbf{D}_1=\begin{bmatrix}
    \bI_{t\ell}&\Zero_{t\ell\times (n\alpha-t\ell)}     
    \end{bmatrix},~~\mathbf{D}_2=\begin{bmatrix}
        \begin{matrix}
            \Zero_{(r_f\ell-t\ell)\times(n\alpha-r_f\ell)}\\\bI_{n\alpha-r_f\ell}
        \end{matrix}&\mathbf{V}_{(n\alpha-t\ell)\times (r_f\ell)}(\ba')
    \end{bmatrix},
\end{equation}
where $\mathbf{V}_{(n\alpha-t\ell)\times r_f\ell}(\ba')$ is a Vandermonde matrix with $\ba'=(n\alpha+1,\cdots,n\alpha+r_f\ell)$. If the field size $q>\max\{(n\alpha)^{n\alpha-\ell-1},(n\alpha+r_f\ell)^{n\alpha-t\ell-1}\}$ is a prime, then the matrix $\mathbf{D}_2$ is an MDS matrix and hence condition 4) is satisfied. At last, it suffices to check condition 5) is also satisfied. By the construction above, $\widehat{\mathbf{H}}_{\In(\gamma)}^{(K)}$and $\widehat{\mathbf{H}}_{\Out}^{(K)}$ have the following form
\begin{equation}
    \widehat{\mathbf{H}}_{\Out}^{(K)}=\bI_V\otimes
        \mathbf{V}_{(n\alpha-\ell)\times n\alpha}
    ,~~\widehat{\mathbf{H}}_{\In(\gamma)}^{(K)}=\mathbf{1}_{V}\otimes\begin{bmatrix}
        \begin{matrix}
            \bI_{n\alpha-r_f\ell}&\mathbf{V}_{(n\alpha-r_f\ell)\times r_f\ell}
        \end{matrix}\\\Zero_{(r_f\ell-\ell)\times n\alpha}
    \end{bmatrix},
\end{equation}
where $\mathbf{1}_{V}$ is a column vector of length $V$ with all elements $1$. 
Note that every column of $\widehat{\mathbf{H}}_{\In(\gamma)}^{(K)}$ can be written as a linear combination of its first $n\alpha-r_f\ell$ columns.
Moreover, since the matrix 
\begin{equation}
    \begin{bmatrix}
        \mathbf{V}_{(n\alpha-\ell)\times n\alpha}&\begin{matrix}
            \bI_{n\alpha-r_f\ell}\\\Zero_{(r_f\ell-\ell)\times(n\alpha-r_f\ell)}
        \end{matrix}
    \end{bmatrix}
\end{equation}
can be an MDS matrix for sufficiently large $q$, any $n\alpha-\ell$ columns are linearly independent. Together with the fact that $n\alpha-\ell>nr$, then we can conclude that condition 5) is satisfied.

\textbf{Decodability}: The sink node receives 
\begin{equation*}
    \bY_{\In(\gamma)}=[M_S~K_S]\cdot\widehat{\mathbf{H}}_{\In(\gamma)}=[M_S(\bF\otimes \bI_\ell)~~K_S\bU]\cdot\mathbf{D}.
\end{equation*}
Then the decodability follows from the invertibility of $\mathbf{D}$ in condition 4).

\textbf{Security}: Let $r_{\sigma}$ and $r_{\gamma}$ be the number of wiretap edges in $\cup_i\Out(\sigma_i)$ and $\In(\gamma)$. Then, $r_\sigma+r_\gamma=r$. 

If $r_\sigma=r$, then the security follows from the MDS property of $\widehat{\mathbf{H}}_{\Out(\sigma_i)}^{(\sigma_i,K)}$ as discussed in Case 1.

If $r_\sigma<r$, let $nr_\gamma=a$, then the $nr$ symbols obtained by the wiretapper contain random keys with the coefficient matrix formed by $a$ columns of $\widehat{\mathbf{H}}_{\In(\gamma)}^{(K)}$ and $nr-a$ columns of $\widehat{\mathbf{H}}_{\Out}^{(K)}$. 
If $a<nr-r_f\ell$, by condition 5), these columns are linearly independent, which implies the random keys cannot be eliminated. Hence, the security is guaranteed.
If $a>nr-r_f\ell$, suppose there exists a nonzero vector $\bv\in\langle\widehat{\mathbf{H}}_{\cE,nr}\rangle\cap\langle\mathbf{S}\rangle$, where $\mathbf{S}$ is defined in \eqref{eq:S-def}. As in \eqref{eq:case1-v}, the coefficients corresponding to $K_S$ in $\bv$ should be zero, that is, $\bv^{(K)}=\Zero$. At the same time, $\bv$ can be also represented by a linear combination of the columns in $\widehat{\mathbf{H}}_{\cE,nr}$, i.e., there exists a nonzero vector $\bw$ such that $\bv=\widehat{\mathbf{H}}_{\cE,nr}\bw$. Accordingly, $\bv^{(K)}=\widehat{\mathbf{H}}_{\cE,nr}^{(K)}\bw$. 
Without loss of generality, denote the $nr$ columns of $\widehat{\mathbf{H}}_{\cE,nr}^{(K)}$ as $\bc_1,\cdots,\bc_{nr}$.
Note that the last $a$ columns $\bc_{nr-a+1},\cdots,\bc_{nr}$ are columns from $\widehat{\mathbf{H}}_{\In(\gamma)}^{(K)}$. Hence, these $a$ columns can be represented by the $n\alpha-r_f\ell$ columns of them due to condition 4) and the choice of $\bU$. Let $\bw=(w_1,\cdots,w_{nr})$,  then $\Zero=\bv^{(K)}=\sum_{i=1}^{nr}w_i\bc_i=\sum_{i=1}^{nr-a+n\alpha-r_f\ell}w'_i\bc_i$, where $w_i'=w_i$ if $i\leq nr-a$ and $w'_i$ is a linear combination of $w_{nr-a+1},\cdots,w_{nr}$ for $nr-a+1\leq i\leq nr$. By condition 5), $\bc_1,\cdots,\bc_{nr-a+n\alpha-r_f\ell}$ are linearly independent, which implies $w_i'=0$ for all $i=1,2,\cdots,nr-a+n\alpha-r_f\ell$. Therefore, $w_i=0$ for $i\in[nr-a]$, that is, $\bv$ is a linear combination of $\widehat{\mathbf{H}}_{\In(\gamma)}$ or equivalently $r_\gamma=r$. 
In this case, the security follows from the MDS property of the last $n\alpha-t\ell$ rows of $\mathbf{D}$, which guarantees that $\bF_1\otimes\bI_{\ell}$ is protected by $\bF_2\otimes\bI_\ell$ and $K_j',j\in[n\alpha-r_f\ell]$. Hence, the security of the network code is proved.

Combining the above two cases, we can conclude that the upper bound in Corollary~\ref{cor:ub-branch} is achievable. 
\begin{thm}
    For the secure network function computation problem over the branch $(\cN,f,g,r)$, the secure computing capacity 
    \[\widehat{\cC}(\cN,f,g,r)=\min\left\{\frac{\alpha}{r_f},\min_{(C,W)\ \text{valid}}\frac{|C|-|W|}{t_{C,f,g}},\min_{(C,B,W)\ \text{valid}}\frac{|C\cup B|-|W|}{t_{C,f,g}}\right\},\]
    where $t_{C,f,g}=\dim\left(\langle \bF_{I_{C_1}}\rangle\bigoplus\cdots\langle \bF_{I_{C_p}}\rangle\bigcap\langle \bG\rangle\right).$
\end{thm}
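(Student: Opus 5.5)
The plan is to show two matching inequalities. The converse part is exactly Corollary~\ref{cor:ub-branch}. It follows by specializing Theorem~\ref{thm:csbnd} to the cut $\In(\gamma)$, which gives $\alpha/r_f$, and Theorems~\ref{thm:upperbound1} and~\ref{thm:upperbound2} to the branch topology. So all the work is in achievability. For that I would denote the right-hand side by $\widehat{\cC}^*$ and show that every rational rate $\ell/n\le \widehat{\cC}^*$ can be attained, which suffices because the capacity is defined as a supremum of achievable rates. The split follows the two constructions above. Case~1 is $\widehat{\cC}^*=\alpha/r_f$, handled by the $(\alpha,r_f)$ code. Case~2 is $\widehat{\cC}^*<\alpha/r_f$, handled by the $(\ell,n)$ code with $\ell/n=\widehat{\cC}^*$.

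The first step is to make the value of $\widehat{\cC}^*$ explicit on a branch. Here $J_C$ is empty or trivial for the natural cuts. Every cut $C$ either contains $\In(\gamma)$, or consists of all $\alpha$ edges out of a set $A$ of sources, with $I_C=A$. For the valid pairs, I would argue that the binding terms are two. The first comes from $C=\In(\gamma)$, $W$ any $r$ edges there, giving $(\alpha-r)/t$ with $t=\dim(\langle\bF\rangle\cap\langle\bG\rangle)$. The second comes from cuts formed by source out-edges, where choosing $W\subseteq C$ gives bounds of the form $(|C|-r)/t_C$; these are dominated by requiring $n\alpha-\ell>nr$ per source. Triples $(C,B,W)$ add nothing new on a branch, since $J_C$ sources reach $\tail(C)$ only through their own out-edges. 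This step produces the inequalities that the constructions use. In Case~1 these are $rr_f\le\alpha r_f-t\alpha$, which feeds the quotient-space lemma, and $r<\alpha$. In Case~2 they are $\ell/n\le(\alpha-r)/t$, that is $nr\le n\alpha-t\ell$, together with $n\alpha-\ell>nr$, which are exactly the dimension counts behind conditions~4) and~5).

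Next I would invoke the constructions. In Case~1, conditions~1)--2) hold via Vandermonde matrices, and conditions~3)--4) hold by the lemma once $q$ is large. Decodability and security then follow as verified above through Theorem~\ref{thm:secure-condition-space-form}. In Case~2 the same pattern applies with the matrices $\bU$ and $\mathbf{D}$. To pass from the value $\widehat{\cC}^*$ to an actual code, I write $\widehat{\cC}^*=\ell/n$ in lowest terms; a rational value is guaranteed since every term is a ratio of integers. I then take $q$ a large enough prime so that all the MDS and Vandermonde conditions hold.

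The main obstacle is the first step: proving that on a branch the minimum over valid pairs and triples really reduces to the two dimension conditions the constructions need. This means ruling out nontrivial weak partitions. Such partitions mix source out-edges with parts of $\In(\gamma)$ and could in principle give a smaller $t_{C,f,g}$-normalized bound. I would first check that any cut using edges of $\In(\gamma)$ together with source edges has $I_{C_j}$ nonempty only for source-edge parts. Its bound is then at least that of the pure cut, by monotonicity of $t$ in $\bigoplus_j\langle\bF_{I_{C_j}}\rangle$. If that fails, I would instead directly verify security of the Case~2 code for an arbitrary $W$ by the rank argument of condition~5).
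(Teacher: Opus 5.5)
Your top-level route is the paper's: the converse is Corollary~\ref{cor:ub-branch}, and achievability is ``combine Case~1 and Case~2'', which is all the paper writes. The step you add and call the main obstacle, however, is not needed, and your argument for it is wrong. You never have to evaluate $\widehat{\cC}^*$ or rule out other valid pairs, triples or weak partitions. Since $\widehat{\cC}^*$ is a minimum, $\widehat{\cC}^*\le\alpha/r_f$ and $\widehat{\cC}^*\le(\alpha-r)/t$ (from $C=\In(\gamma)$, when $t\ge1$) hold for free, and every inequality the two constructions use follows from these two terms. Any further term can only lower the target rate, which zero-padding handles. Equivalently, once the codes achieve $\min\{\alpha/r_f,(\alpha-r)/t\}$, the converse itself shows that no other term is smaller. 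The auxiliary claims you make along the way are also false. A cut of a branch need only contain $\In(\gamma)$ or some full $\Out(\sigma_i)$, plus arbitrary extra edges. A part equal to $\In(\gamma)$ has $I=S\neq\emptyset$. Enlarging $\bigoplus_j\langle\bF_{I_{C_j}}\rangle$ increases $t_{C,f,g}$ and therefore decreases the bound, so ``monotonicity of $t$'' points the wrong way.

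The inequalities you extract have two real problems. First, $n\alpha-\ell>nr$ does not follow from the bound. With $t\ge1$ you only get $\ell/n\le(\alpha-r)/t\le\alpha-r$, i.e.\ $n\alpha-\ell\ge nr$, and equality occurs: for $\bF=\bG$ the sum on the branch, $\widehat{\cC}^*=\alpha-r$ and one takes $n=1$. The source-cut terms cannot supply strictness either, since $\Out(\sigma_i)$ is valid only if $\langle\bF_{\{\sigma_i\}}\rangle\cap\langle\bG\rangle\neq\{\Zero\}$; saying they are ``dominated by requiring $n\alpha-\ell>nr$'' reverses the logic. So the Case~2 verification must be run with the non-strict inequality.

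Second, everything presupposes $t=\dim(\langle\bF\rangle\cap\langle\bG\rangle)\ge1$, which can fail on a branch after truncation. If $t=0$, then $(\In(\gamma),W)$ is not valid, the quotient-space lemma needs $t>0$, and Case~1's condition 1) needs $\alpha+rr_f\le\alpha r_f$, which can fail. Take $V=2$, $\alpha=2$, $r=1$, $\bF=(1,1)^T$, $\bG=(1,-1)^T$ with $q$ odd. Every valid pair or triple needs two parts with different $I$'s, hence $|C|\ge4$ and a term $\ge3$. So $\widehat{\cC}^*=2=\alpha/r_f$, yet an $(\alpha+rr_f)\times\alpha r_f=3\times2$ matrix cannot have full row rank. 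The rate is still achievable here, because keyless routing leaks nothing about $m_1-m_2$ on any single edge. But neither your plan nor the Case~1 code covers this case; the paper's own derivation of $\alpha+rr_f\le\alpha r_f$ also tacitly assumes $t\ge1$. This case needs a separate argument.
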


\subsection{Linear secure network codes for $(U,V,\alpha)$-tree}
In last subsection, it has been shown that for each branch $\cN_i$, we can construct a linear secure network code with computing rate $\ell/n=\widehat{\cC}^*$, which matches the upper bound. In fact, for every $\ell',n'$ with $\ell'/n'<\ell/n$, we can obtain a secure network code with computing rate $\ell'/n'$.
To see this, let ${\rm{lcm}}(n',n)=n'p'=np$, where ${\rm{lcm}}(n',n)$ is the least common multiple of $n'$ and $n$. Then, we have $\frac{\ell't'}{n't'}=\frac{\ell'}{n'}$ and $\frac{\ell t}{nt}=\frac{\ell}{n}$, $\ell't'<\ell t$. By using the $(\ell,n)$ network code $t$ times, we can obtain an $(\ell t,nt)$ linear secure network code. By setting $\ell t-\ell't'$ symbols of the source messages to zero,  we obtain an  $(\ell't',nt)$ secure network code with computing rate $\frac{\ell't'}{nt}=\frac{\ell't'}{n't'}=\frac{ell'}{n'}$. 

Now consider two branches $\cN_1$ and $\cN_2$ with capacities $\widehat{\cC}(\cN_1,f_1,g_1,r_1) = \ell_1/n_1$ and $\widehat{\cC}(\cN_2,f_2,g_2,r_2) = \ell_2/n_2$. Using the argument above, a secure code with rate $\ell/n = \min\{\ell_1/n_1,\ell_2/n_2\}$ can be constructed for the combined system.

Consequently, we obtain the following theorem.
\begin{thm}
    For the secure network function computation problem over the tree $(\cN,f,g,r)$, the secure computing capacity 
    \[\widehat{\cC}(\cN,f,g,r)\geq\min_{\substack{(r_1,r_2,\cdots,r_U):\\\sum_{i\in[U]}r_i=r}}\left\{\min_{i\in[U]}\widehat{\cC}(\cN_i,f_i,g_i,r_i)\right\}=\min_{i\in[U]}\widehat{\cC}(\cN_i,f_i,g_i,r).\]
\end{thm}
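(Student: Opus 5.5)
The plan is to combine Proposition~\ref{prop:tree2branch} (or Proposition~\ref{prop:extension}) with the branch-level achievability theorem and the rate-reduction argument just given. There are two things to show: the inequality, and the equality on the right-hand side.

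\textbf{The inequality.} Fix any allocation $(r_1,\dots,r_U)$ with $\sum_i r_i=r$. A wiretap set $W$ with $|W|\le r$ splits as $W=\bigcup_i W_i$ with $W_i\subseteq\cE_i$. However, $|W_i|\le r_i$ need not hold for the prescribed allocation, so the guarantee of Proposition~\ref{prop:tree2branch} ("fewer than $r_i$ edges in each branch") does not apply to every $W$ directly. I would therefore not fix the allocation in advance. Instead, I would make every branch code secure against $r$ wiretapped edges, that is, take $r_i=r$ for all $i$.

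Then any $W$ with $|W|\le r$ satisfies $|W_i|\le r$ for every $i$. The security argument in the proof of Proposition~\ref{prop:tree2branch}, namely the chain \eqref{seq:prop1-5}–\eqref{seq:prop1-7}, only uses that each $\hbc_i$ protects $g_i$ against $W_i$. So the glued code $\hbc$ is secure at level $r$.

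For decodability, the sink obtains every $f_i(M_S)$ and sums them to get $f(M_S)=\sum_i f_i(M_S)$. This works because $\bF=\sum_i\bF_i$ and the branches are edge-disjoint.

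\textbf{Matching the rates.} Each branch has a code at rate $\widehat{\cC}(\cN_i,f_i,g_i,r)$, by the branch theorem above. To glue them, all branches must run with the same $(\ell,n)$. I would use the lcm and zero-padding argument of this subsection to bring every branch down to the common rate $\min_i\widehat{\cC}(\cN_i,f_i,g_i,r)$.

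One care point is the capacity-achieving definition: the branch capacity may only be approached, not exactly attained, by some $(\ell,n)$ code. Here the branch constructions give exact rational rates $\ell/n$, so this is fine. Even without that, the $\epsilon$-definition of achievability suffices. Setting source symbols to zero preserves security, since it only restricts the message distribution on the protected coordinates. More carefully, I would instead treat the padded coordinates as fixed and rely on Theorem~\ref{thm:secure-condition-space-form} for the linear code on the reduced message space: deleting rows of $\widehat{\mathbf{H}}_W$ and $\mathbf{S}$ preserves the trivial-intersection condition.

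\textbf{The equality.} Capacity is monotone nonincreasing in the security level, because a code secure against $r$ edges is secure against fewer. Hence for any allocation with $\sum_i r_i=r$, we have $r_i\le r$ and $\widehat{\cC}(\cN_i,f_i,g_i,r_i)\ge\widehat{\cC}(\cN_i,f_i,g_i,r)$. So the inner minimum is at least $\min_i\widehat{\cC}(\cN_i,f_i,g_i,r)$.

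For the reverse direction, take $i^*$ attaining $\min_i\widehat{\cC}(\cN_i,f_i,g_i,r)$, and choose the allocation $r_{i^*}=r$ with all other $r_i=0$. Then
\[
\min_{(r_j)}\min_i \widehat{\cC}(\cN_i,f_i,g_i,r_i)\;\le\;\widehat{\cC}(\cN_{i^*},f_{i^*},g_{i^*},r).
\]

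\textbf{Main obstacle.} The delicate step is the first one: reconciling the per-branch budget formulation of Proposition~\ref{prop:tree2branch} with an adversary whose budget $r$ can be placed in a single branch. Resolving it by demanding level-$r$ security in every branch is what makes the right-hand expression come out as the minimum over branches at full level $r$.
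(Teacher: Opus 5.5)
Your route matches the paper's proof. You secure every branch at the full level $r$, bring all branch codes to the common rate $\min_i\widehat{\cC}(\cN_i,f_i,g_i,r)$ with one pair $(\ell,n)$, and glue them using the argument of Proposition~\ref{prop:tree2branch}. Your two-sided proof of the equality is more complete than the paper's, which only states the monotonicity direction.

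The rate reduction step fails as you justify it. Let $J_0$ be the set of zeroed positions. Neither ``zeroing only restricts the message distribution'' nor ``deleting rows of $\widehat{\mathbf{H}}_W$ and $\mathbf{S}$ preserves the trivial-intersection condition'' holds in general. Conditioning on $M_S$ vanishing on $J_0$ conditions on an event not determined by $g(M_S)$. The removed message coordinates may be exactly what masks the wiretapper's view.

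Here is a counterexample. Take two sources, $\ell=2$, $g(\bx)=x_1+x_2$, and an observed symbol $M_{2,1}+M_{1,2}+M_{2,2}$. Its coefficient vector $(0,1,1,1)^T$ meets $\langle\mathbf{S}\rangle=\langle(1,0,1,0)^T,(0,1,0,1)^T\rangle$ only in $\Zero$. Yet after zeroing the first position, the observation becomes $M_{1,2}+M_{2,2}$, which is the second evaluation of $g$. The same mechanism appears in the branch codes: in Case~2, for example, $M_S(\bF_2\otimes\bI_\ell)$ acts as a key on $\In(\gamma)$.

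The repair is easy: do not zero the surplus message symbols. Fill them with fresh uniform symbols and count them as part of the key. Then:
\begin{itemize}
\item the joint distribution of messages, keys and edge symbols is exactly that of the original code;
\item the sink simply discards the surplus evaluations;
\item the retained evaluations of $g$ form a sub-vector of the original $g(M_S)$, so their mutual information with $Y_W$ is $0$.
\end{itemize}
Alternatively, suppose branch $i$ has an $(\ell_i,n_i)$ code with $\ell_i/n_i\ge\ell^*/n^*$. Choose $L$ divisible by $\ell^*$ and by every $\ell_i$, and set $N=Ln^*/\ell^*$. Branch $i$ then runs $L/\ell_i$ independent copies of its code, using $Ln_i/\ell_i\le N$ channel uses and idling the rest. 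Independent copies of a secure code are jointly secure, since $g$ acts symbol-wise.

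The paper's proof relies on the same unjustified zero-padding paragraph, so it needs this repair as well.
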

\begin{pf}
    The equality holds because $\widehat{\cC}(\cN_i,f_i,g_i,r) \leq \widehat{\cC}(\cN_i,f_i,g_i,r_i)$ for any $r_i \leq r$. Let $\ell/n = \min_{i} \widehat{\cC}(\cN_i,f_i,g_i,r)$. For each branch, we can construct an $(\ell,n)$ linear secure network code as discussed before. Hence, for every wiretap set in $\cW$, the constructed code for each branch is secure with rate $\ell/n$. Then, the security of the overall code follows from Proposition~\ref{prop:tree2branch}.
\end{pf}

Proposition~\ref{prop:extension} shows that securing every branch is not necessary. This leads to the following refined result.
\begin{thm}
    Let $\cN$ be a tree network composed of $U$ branches $\{\cN_i : i \in [U]\}$. Suppose that for each branch $\cN_i$, there exists an $(\ell,n)$ linear secure network code with security level $\bar{r}_i$. Then the security level $r$ of the resulting $(\ell,n)$ code for the whole tree $\cN$ satisfies that for every $(r_1,r_2,\cdots,r_U)$ with $\sum_{i\in[U]}r_i=r$, there exists a subset $T\subseteq [U]$ such that the condition \eqref{eq:prop-condition} holds, and $r_i\leq \bar{r}_i, \forall~ i\in T$.
\end{thm}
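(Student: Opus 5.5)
The plan is to glue the branch codes together exactly as in Proposition~\ref{prop:tree2branch} and then apply Proposition~\ref{prop:extension} wiretap set by wiretap set. For each $i\in[U]$ let $\hbc_i$ be the given $(\ell,n)$ linear secure code for $\cN_i$ with security level $\bar r_i$. We define the code $\hbc$ on $\cN$ by $\widehat{\mathbf{H}}_e(\hbc)=\widehat{\mathbf{H}}_e(\hbc_i)$ for $e\in\cE_i$. Decodability of $\hbc$ is inherited branchwise, since $\gamma$ recovers every $f_i(M_S)$ and $f(M_S)=\sum_i f_i(M_S)$. So only security needs an argument, and the statement should be read as: $\hbc$ is secure against every $W$ with $|W|\le r$ whenever $r$ has the stated property.

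Fix an arbitrary wiretap set $W$ with $|W|\le r$ and decompose it as $W=\bigcup_i W_i$ with $W_i=W\cap\cE_i$ and $r_i\eqdef|W_i|$. Then $\sum_i r_i\le r$. Pad one coordinate so that the sum equals $r$; enlarging $W$ only makes security harder, or equivalently, the hypothesis should hold for the padded vector. The hypothesis then provides a subset $T\subseteq[U]$ satisfying \eqref{eq:prop-condition} with $r_i\le\bar r_i$ for all $i\in T$.

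For every $i\in T$, the branch code $\hbc_i$ is secure for $(\cN_i,f_i,g_i,\bar r_i)$, so it is secure against $W_i$ because $|W_i|\le\bar r_i$. For $i\notin T$, $\hbc_i$ is at least decodable. These are exactly the hypotheses of Proposition~\ref{prop:extension} for this particular $W$. Its proof shows $I(g(M_S);Y_W)=0$ by conditioning on $\{M_{S_i},K_{S_i}\}_{i\notin T}$ and using $\Rank(\sum_{i\in T}\bG_i)=r_g$, which follows from \eqref{eq:prop-condition}. Since $W$ was arbitrary, the security condition \eqref{eq:securitycondition} holds for every $W\in\cW_r$, and this proves the theorem.

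The main subtle step is that $T$ is allowed to depend on $W$, whereas Proposition~\ref{prop:extension} is phrased with a fixed $T$. I will check that its proof uses $T$ only inside the security argument for one given $W$, and never in the construction of $\hbc$. The construction is the same concatenation for every $T$, because each $\hbc_i$ is used unchanged. Once this is confirmed, applying the proposition separately for each $W$ is legitimate, and the union over all $W$ gives the claim. I will also verify the minor point that a secure code with level $\bar r_i$ stays secure against fewer edges, which holds by monotonicity of mutual information: $I(g_i;Y_{W_i})\le I(g_i;Y_{W_i'})$ for $W_i\subseteq W_i'$.
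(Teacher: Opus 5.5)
Your proposal is correct and follows the paper's argument: you concatenate the branch codes, use the hypothesis to obtain, for each wiretap set $W$, a subset $T$ (which may depend on $W$) satisfying \eqref{eq:prop-condition}, and then invoke Proposition~\ref{prop:extension}. Two points you make explicit are left implicit in the paper: padding $(r_1,\dots,r_U)$ to handle $|W|<r$, and observing that $T$ may vary with $W$ because the construction itself does not depend on $T$.
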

\begin{pf}
    Decodability follows directly from the decodability of the branch codes. To verify security, note that for any wiretap set of size $r$, the conditions of the theorem guarantee the existence of a subset $T$ for which each branch code $\cN_i$ ($i \in T$) remains secure. By Proposition~\ref{prop:extension}, the overall code is secure for this wiretap set, and thus achieves security level $r$. 
\end{pf}

Finally, we present an example to illustrate  the construction.
\begin{figure}
    \centering
    \includegraphics[width=0.6\linewidth]{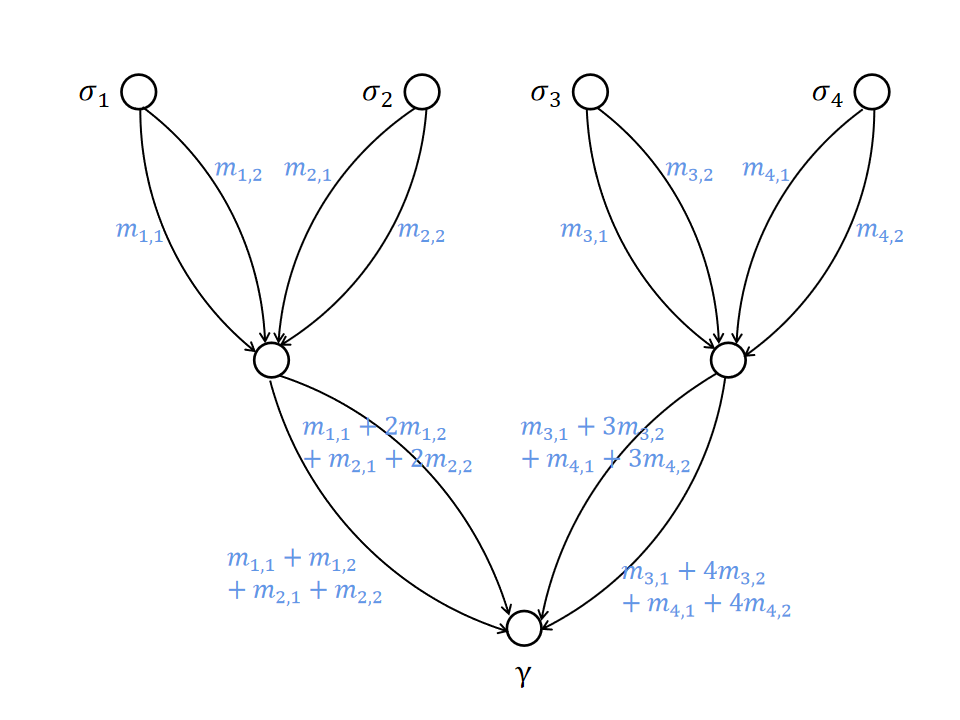}
    \caption{A $(2,2,2)$-tree and a secure network code with security level $r=2$.}
    \label{fig:tree1}
\end{figure}
\begin{figure}
    \centering
    \begin{minipage}{0.49\linewidth}
        \centering
        \includegraphics[width=1\linewidth]{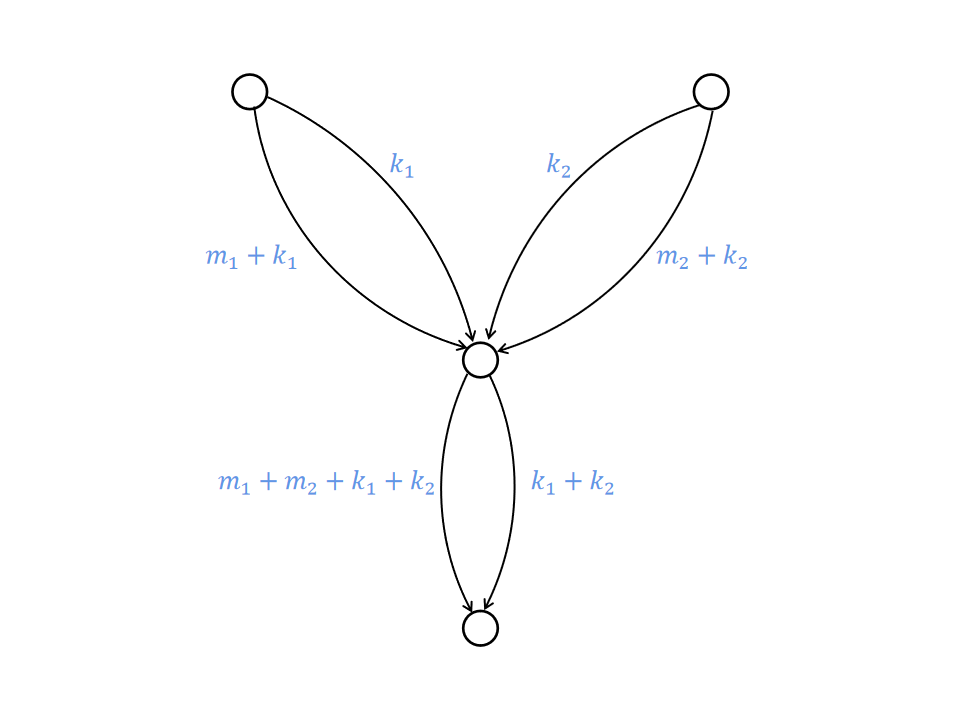}
        \caption{Network code with rate $1$ for each branch.}
        \label{fig:tree2}
    \end{minipage}
    \begin{minipage}{0.49\linewidth}
        \centering
        \includegraphics[width=1\linewidth]{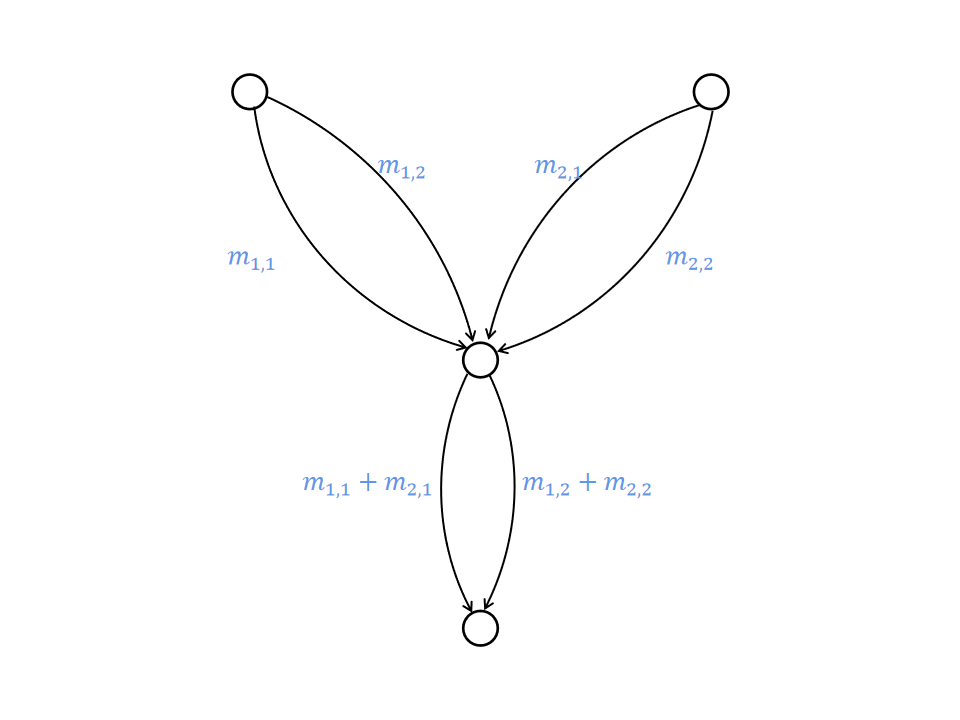}
        \caption{Network code with rate $2$ for each branch.}
        \label{fig:tree3}
    \end{minipage}
\end{figure}
\begin{example}
    Consider a $(2,2,2)$-tree as shown in Fig.~\ref{fig:tree1}, the target function and security function are over $\mathbb{F}_5$ and the coefficient matrices are
    \[\bF^T=\begin{bmatrix}
        1&1&1&1\\
        2&2&3&3
    \end{bmatrix},~~\bG^T=\begin{bmatrix}
        1&1&1&1
    \end{bmatrix}.\]
    When $r=1$, if we follow the idea of Proposition~\ref{prop:tree2branch} to construct secure network code over each branch, the secure computing capacity is given by $\widehat{\cC}(\cN_i,f_i,g_i,1)=1$. The corresponding network code for each branch is shown in Fig.~\ref{fig:tree2}. From Proposition~\ref{prop:extension}, we can select a $(2,1)$ network code for each branch as shown in Fig.~\ref{fig:tree3}, which is not secure for $(\cN_i,f_i,g_i,1)$. But for the tree, the network code is secure for $(\cN,f,g,1)$, since only one edge is wiretapped, and the wiretapper only obtains the information about $m_1,m_2$ or $m_3,m_4$, but cannot obtain them simultaneously. Thus, the wiretapper cannot obtain any information about $g(\mathbf{m}_S)$. Moreover, from the upper bound in Theorem~\ref{thm:csbnd}, we have
    \[\widehat{\cC}(\cN,f,g,1)\leq\cC(\cN,f)=2,\]
    which implies the constructed network code attains the upper bound in this case.

    However, when $r=2$, constructing network codes followed the idea in Proposition~\ref{prop:extension} is not enough. It requires to design the network codes for all branches together to get better performance. From Proposition~\ref{prop:extension}, since at least one branch should be secure for any choice of wiretap edges, we should select a $(1,1)$ secure network code as shown in Fig.~\ref{fig:tree2} for each branch. But in fact, if we design the network code for each branch together, we can construct a secure network code with rate $2$ as shown in Fig.~\ref{fig:tree1}. Note that if the wiretapper has access to exactly one edge in each branch, the network code for each branch in the construction is not secure, but the wiretapper still cannot obtain any information about $g(\mathbf{m}_S)$. This is because the coefficients of the transmitted messages are carefully designed. 
\end{example}

In summary, for the branch model we can construct vector linear secure network codes that achieve the upper bound. For $(U,V,\alpha)$-trees, the codes obtained via Proposition~\ref{prop:extension} attain the bound in some cases, but not in others (as illustrated by the example above). The underlying reason is that Proposition~\ref{prop:extension} only gives a sufficient condition for security. The example shows that a network code can be secure even if it is insecure on every individual branch.

\section{Conclusion}\label{sec:conclusion}
In this paper, we have investigated the secure network function computation problem for vector linear target and security functions. We have derived two new upper bounds on the secure computing capacity that apply to arbitrary network topologies and generalize existing results. For the source security problem considered in \cite{2024Guangsourcesecure}, our result offers a new upper bound, which has better performance in an explicit example. Following the standard procedure, we have presented a construction of linear secure network code when the target function is algebraic sum. Moreover, for vector linear target function and $(U,V,\alpha)$-trees, we have constructed a vector linear secure network code by constructing secure network codes for the branches. 

For the model considered in this paper, there are several interesting problems for future research. First, the upper bounds presented in this paper are not always tight. In fact, even for the network function computation problem without the security constraint, the exact value of the computing capacity for arbitrary vector linear target function and arbitrary networks is not known. Second, the minimum required field size for the secure network code constructed in Section~\ref{sec:lowerbound-sum} is not known yet. Third, since the secure network code constructed by the standard procedure always has a gap from the upper bound, new code constructions based on different idea should be investigated. Our tree-network code is one such attempt, though it requires a restrictive topology.

\bibliographystyle{IEEEtranS}
\bibliography{reference}

@ARTICLE{2019GYYL,
  author={Guang, Xuan and Yeung, Raymond W. and Yang, Shenghao and Li, Congduan},
  journal={IEEE Transactions on Information Theory}, 
  title={Improved Upper Bound on the Network Function Computing Capacity}, 
  year={2019},
  volume={65},
  number={6},
  pages={3790-3811},
  doi={10.1109/TIT.2019.2893107}}

@ARTICLE{2011AFKZ,
  author={Appuswamy, Rathinakumar and Franceschetti, Massimo and Karamchandani, Nikhil and Zeger, Kenneth},
  journal={IEEE Transactions on Information Theory},
  title={Network Coding for Computing: Cut-Set Bounds},
  year={2011},
  volume={57},
  number={2},
  pages={1015-1030},
  doi={10.1109/TIT.2010.2095070}}

@ARTICLE{2013AFKZ,
  author={Appuswamy, Rathinakumar and Franceschetti, Massimo and Karamchandani, Nikhil and Zeger, Kenneth},
  journal={IEEE Transactions on Information Theory},
  title={Linear Codes, Target Function Classes, and Network Computing Capacity},
  year={2013},
  volume={59},
  number={9},
  pages={5741--5753},
 }

@ARTICLE{2014Appuswamy,
  author={Appuswamy, Rathinakumar and Franceschetti, Massimo},
  journal={IEEE Transactions on Information Theory},
  title={Computing Linear Functions by Linear Coding Over Networks},
  year={2014},
  volume={60},
  number={1},
  pages={422-431},
  doi={10.1109/TIT.2013.2283075}}

@ARTICLE{2018HTYG,
  author={Huang, Cupjin and Tan, Zihan and Yang, Shenghao and Guang, Xuan},
  journal={IEEE Transactions on Information Theory}, 
  title={Comments on Cut-Set Bounds on Network Function Computation}, 
  year={2018},
  volume={64},
  number={9},
  pages={6454-6459},
  doi={10.1109/TIT.2018.2827405}}

@ARTICLE{2023WeiXuGe,
  author={Wei, Hengjia and Xu, Min and Ge, Gennian},
  journal={IEEE Transactions on Information Theory}, 
  title={Robust Network Function Computation}, 
  year={2023},
  volume={69},
  number={11},
  pages={7070-7081},
  doi={10.1109/TIT.2023.3296154}}

@ARTICLE{2024Guangsourcesecure,
  author={Guang, Xuan and Bai, Yang and Yeung, Raymond W.},
  journal={IEEE Transactions on Information Theory}, 
  title={Secure Network Function Computation for Linear Functions—Part I: Source Security}, 
  year={2024},
  volume={70},
  number={1},
  pages={676-697},
  doi={10.1109/TIT.2023.3328454}}

@article{2025Guangfunctionsecure,
  title={Secure Network Function Computation for Linear Functions, Part II: Target-Function Security},
  author={Bai, Yang and Guang, Xuan and Yeung, Raymond W},
  journal={arXiv preprint arXiv:2504.17514},
  year={2025}
}

@inproceedings{2004koetternetwork,
  title={Network codes as codes on graphs},
  author={Koetter, Ralf and Effros, Michelle and Ho, Tracey and M{\'e}dard, Muriel},
  booktitle={Proceeding of CISS},
  year={2004}
}

@article{2012rainetwork,
  title={On network coding for sum-networks},
  author={Rai, Brijesh Kumar and Dey, Bikash Kumar},
  journal={IEEE Transactions on Information Theory},
  volume={58},
  number={1},
  pages={50--63},
  year={2012},
  publisher={IEEE}
}

@article{2013ramamoorthycommunicating,
  title={Communicating the sum of sources over a network},
  author={Ramamoorthy, Aditya and Langberg, Michael},
  journal={IEEE Journal on Selected Areas in Communications},
  volume={31},
  number={4},
  pages={655--665},
  year={2013},
  publisher={IEEE}
}

@inproceedings{cai2002secure,
  title={Secure network coding},
  author={Cai, Ning and Yeung, Raymond W},
  booktitle={Proceedings IEEE International Symposium on Information Theory},
  pages={323},
  year={2002},
  organization={IEEE}
}

@article{cai2010secure,
  title={Secure network coding on a wiretap network},
  author={Cai, Ning and Yeung, Raymond W},
  journal={IEEE Transactions on Information Theory},
  volume={57},
  number={1},
  pages={424--435},
  year={2010},
  publisher={IEEE}
}

@article{el2012secure,
  title={Secure network coding for wiretap networks of type II},
  author={El Rouayheb, Salim and Soljanin, Emina and Sprintson, Alex},
  journal={IEEE Transactions on Information Theory},
  volume={58},
  number={3},
  pages={1361--1371},
  year={2012},
  publisher={IEEE}
}

@article{guang2018alphabet,
  title={Alphabet size reduction for secure network coding: a graph theoretic approach},
  author={Guang, Xuan and Yeung, Raymond W},
  journal={IEEE Transactions on Information Theory},
  volume={64},
  number={6},
  pages={4513--4529},
  year={2018},
  publisher={IEEE}
}

@article{silva2011universal,
  title={Universal secure network coding via rank-metric codes},
  author={Silva, Danilo and Kschischang, Frank R},
  journal={IEEE Transactions on Information Theory},
  volume={57},
  number={2},
  pages={1124--1135},
  year={2011},
  publisher={IEEE}
}

@article{guang2020local,
  title={Local-encoding-preserving secure network coding},
  author={Guang, Xuan and Yeung, Raymond W and Fu, Fang-Wei},
  journal={IEEE Transactions on Information Theory},
  volume={66},
  number={10},
  pages={5965--5994},
  year={2020},
  publisher={IEEE}
}

@article{bai2023multiple,
  title={Multiple Linear-Combination Security Network Coding},
  author={Bai, Yang and Guang, Xuan and Yeung, Raymond W},
  journal={Entropy},
  volume={25},
  number={8},
  pages={1135},
  year={2023},
  publisher={MDPI}
}

@article{bhattad2005weakly,
  title={Weakly secure network coding},
  author={Bhattad, Kapil and Narayanan, Krishna R and others},
  journal={NetCod, Apr},
  volume={104},
  pages={8--20},
  year={2005}
}

@ARTICLE{2022ZhaoSun,
  author={Zhao, Yizhou and Sun, Hua},
  journal={IEEE Transactions on Information Theory}, 
  title={Information Theoretic Secure Aggregation With User Dropouts}, 
  year={2022},
  volume={68},
  number={11},
  pages={7471-7484},
  doi={10.1109/TIT.2022.3192874}}

@ARTICLE{2024Zhaosun,
  author={Zhao, Yizhou and Sun, Hua},
  journal={IEEE Transactions on Information Theory}, 
  title={Secure Summation: Capacity Region, Groupwise Key, and Feasibility}, 
  year={2024},
  volume={70},
  number={2},
  pages={1376-1387},
  doi={10.1109/TIT.2023.3342571}}

@ARTICLE{2024WanYaoSunJiCaire,
  author={Wan, Kai and Yao, Xin and Sun, Hua and Ji, Mingyue and Caire, Giuseppe},
  journal={IEEE Transactions on Information Theory}, 
  title={On the Information Theoretic Secure Aggregation With Uncoded Groupwise Keys}, 
  year={2024},
  volume={70},
  number={9},
  pages={6596-6619},
  doi={10.1109/TIT.2024.3422087}}

@ARTICLE{2024WanSunJiCaire,
  author={Wan, Kai and Sun, Hua and Ji, Mingyue and Mi, Tiebin and Caire, Giuseppe},
  journal={IEEE Transactions on Information Theory}, 
  title={The Capacity Region of Information Theoretic Secure Aggregation With Uncoded Groupwise Keys}, 
  year={2024},
  volume={70},
  number={10},
  pages={6932-6949},
  doi={10.1109/TIT.2024.3393740}}

@article{2025yuanSun,
  title={Vector linear secure aggregation},
  author={Yuan, Xihang and Sun, Hua},
  journal={arXiv preprint arXiv:2502.09817},
  year={2025}
}

@article{hu2026capacity,
  title={On the Capacity Region of Individual Key Rates in Vector Linear Secure Aggregation},
  author={Hu, Lei and Ulukus, Sennur},
  journal={arXiv preprint arXiv:2601.03241},
  year={2026}
}

@article{2024zhangwansunwangoptimal,
  title={Optimal communication and key rate region for hierarchical secure aggregation with user collusion},
  author={Zhang, Xiang and Wan, Kai and Sun, Hua and Wang, Shiqiang and Ji, Mingyue and Caire, Giuseppe},
  journal={arXiv preprint arXiv:2410.14035},
  year={2024}
}

@article{2025LiZhangLvFan,
  title={Collusion-Resilient Hierarchical Secure Aggregation with Heterogeneous Security Constraints},
  author={Li, Zhou and Zhang, Xiang and Lv, Jiawen and Fan, Jihao and Chen, Haiqiang and Caire, Giuseppe},
  journal={arXiv preprint arXiv:2507.14768},
  year={2025}
}

@article{wei2024linear,
  title={Linear Network Coding for Robust Function Computation and Its Applications in Distributed Computing},
  author={Wei, Hengjia and Xu, Min and Ge, Gennian},
  journal={arXiv preprint arXiv:2409.10854},
  year={2024}
}

@ARTICLE{2024GuangZhang,
  author={Guang, Xuan and Zhang, Ruze},
  journal={IEEE Transactions on Information Theory}, 
  title={Zero-Error Distributed Compression of Binary Arithmetic Sum}, 
  year={2024},
  volume={70},
  number={5},
  pages={3100-3117},
  doi={10.1109/TIT.2023.3319976}}

@article{xu2022network,
  title={Network function computation with different secure conditions},
  author={Xu, Min and Ge, Gennian and Liu, Minqian},
  journal={arXiv preprint arXiv:2206.05468},
  year={2022}
}

@article{zhang2025information,
  title={Information-theoretic decentralized secure aggregation with collusion resilience},
  author={Zhang, Xiang and Li, Zhou and Li, Shuangyang and Wan, Kai and Ng, Derrick Wing Kwan and Caire, Giuseppe},
  journal={arXiv preprint arXiv:2508.00596},
  year={2025}
}

@article{li2025capacity,
  title={The capacity of collusion-resilient decentralized secure aggregation with groupwise keys},
  author={Li, Zhou and Zhang, Xiang and Zhao, Yizhou and Chen, Haiqiang and Fan, Jihao and Caire, Giuseppe},
  journal={arXiv preprint arXiv:2511.14444},
  year={2025}
}

@article{so2022lightsecagg,
  title={Lightsecagg: a lightweight and versatile design for secure aggregation in federated learning},
  author={So, Jinhyun and He, Chaoyang and Yang, Chien-Sheng and Li, Songze and Yu, Qian and E Ali, Ramy and Guler, Basak and Avestimehr, Salman},
  journal={Proceedings of Machine Learning and Systems},
  volume={4},
  pages={694--720},
  year={2022}
}

@article{jahani2023swiftagg+,
  title={SwiftAgg+: Achieving asymptotically optimal communication loads in secure aggregation for federated learning},
  author={Jahani-Nezhad, Tayyebeh and Maddah-Ali, Mohammad Ali and Li, Songze and Caire, Giuseppe},
  journal={IEEE Journal on Selected Areas in Communications},
  volume={41},
  number={4},
  pages={977--989},
  year={2023},
  publisher={IEEE}
}

@article{xu2025hierarchical,
  title={On hierarchical secure aggregation against relay and user collusion},
  author={Xu, Min and Han, Xuejiao and Wan, Kai and Ge, Gennian},
  journal={arXiv preprint arXiv:2511.20117},
  year={2025}
}

@INPROCEEDINGS{2024ZhouFu,
  author={Zhou, Qin and Fu, Fang-Wei},
  booktitle={2024 IEEE Information Theory Workshop (ITW)}, 
  title={Network Function Computation for Vector Linear Functions}, 
  year={2024},
  volume={},
  number={},
  pages={735-740},
  doi={10.1109/ITW61385.2024.10806965}}

@ARTICLE{2022LiXudiamond,
  author={Li, Dan and Xu, Yinfei},
  journal={IEEE Communications Letters}, 
  title={Computing Vector-Linear Functions on Diamond Network}, 
  year={2022},
  volume={26},
  number={7},
  pages={1519-1523},
  doi={10.1109/LCOMM.2022.3170974}}

@article{guang2025distributed,
  title={Distributed Source Coding for Compressing Vector-Linear Functions},
  author={Guang, Xuan and Sun, Xiufang and Zhang, Ruze},
  journal={arXiv preprint arXiv:2508.02996},
  year={2025}
}
\end{document}